\theoremstyle{}
\newtheorem{theorem}{Theorem}
\newtheorem{lemma}{Lemma}
\newtheorem{definition}{Definition}
\newtheorem{corollary}{Corollary}
\newtheorem{example}{Example}
\newtheorem{remark}{Remark}
\newtheorem{construction}{Construction}
\newcommand{\tabcaption}{\def\@captype{table}\caption}
\newcommand*\bigcdot{\mathpalette\bigcdot@{.5}}
\newcommand*\bigcdot@[2]{\mathbin{\vcenter{\hbox{\scalebox{#2}{$\m@th#1{\bullet}$}}}}}
\begin{document}
\title{Coded Caching for Two-Dimensional Multi-Access Networks
\author{Mingming Zhang,  Kai~Wan,~\IEEEmembership{Member,~IEEE,} Minquan Cheng
and~Giuseppe~Caire,~\IEEEmembership{Fellow,~IEEE}}
\thanks{M. Zhang and M. Cheng are with Guangxi Key Lab of Multi-source Information Mining $\&$ Security, Guangxi Normal University,
Guilin 541004, China  (e-mail: ztw\_07@foxmail.com, chengqinshi@hotmail.com). }
\thanks{K. Wan and G. Caire are with the Electrical Engineering and Computer Science Department, Technische Universit\"{a}t Berlin,
10587 Berlin, Germany (e-mail: kai.wan@tu-berlin.de, caire@tu-berlin.de).  The work of K.~Wan and G.~Caire was partially funded by the European Research Council under the ERC Advanced Grant N. 789190, CARENET.}
}
\maketitle

\begin{abstract}
This paper studies a novel multi-access coded caching (MACC) model in the two-dimensional (2D) topology, which is a generalization of the one-dimensional (1D) MACC model proposed by Hachem
{\it et al.} The 2D MACC model is formed by a server containing $N$ files, $K_1\times K_2$ cache-nodes with $M$ files located at a grid with $K_1$ rows and $K_2$ columns, and $K_1\times K_2$ cache-less users where each user is connected to $L^2$ nearby cache-nodes. The server is connected to the users through an error-free shared link, while the users can retrieve the cached content of the connected cache-nodes without cost.  Our objective is to minimize the  worst-case transmission load over all possible users' demands. In this paper, we first propose a \textit{grouping scheme} for the case where   $K_1$ and $K_2$ are divisible by $L$. By partitioning
  the cache-nodes and users into $L^2$ groups   such that no two users in the same group share any cache-node, we use the shared-link coded caching scheme proposed by Maddah-Ali and Niesen for each group. Then for any model parameters satisfying $\min\{K_1,K_2\}>L$, we propose a transformation approach which constructs a 2D MACC scheme from two classes of 1D MACC schemes in vertical and  horizontal projections, respectively. 
  As a result, we can construct 2D MACC schemes that achieve maximum local caching gain and improved coded caching gain, compared to the baseline scheme by a direct extension from 1D MACC schemes.
\end{abstract}

\begin{IEEEkeywords}
	Coded caching, multi-access coded caching, two-dimensional (2D) network, placement delivery array (PDA).
\end{IEEEkeywords}

\section{Introduction}
Caching techniques have a central role in future communication systems and wireless cellular  networks\cite{GGTTG}. In the caching paradigm,  some content is locally stored into the users' local caches during off-peak times. Then the pre-stored content is leveraged to reduce the network congestion during peak times, such that some local caching gain arises. In the seminal paper \cite{MN},  Maddah-Ali and Niesen (MN) proposed a coded caching scheme which achieves an additional multicast gain on top of the conventional local caching gain.  In the MN coded caching model,  a single server with  $N$ file is connected to $K$ users over an error-free shared link, while  each user has a local cache of size $M$.
 A coded caching scheme consists of two phases: i) \textit{placement phase}: some packets of each file are placed into the cache of each user without knowledge of the user's future demand; ii) \textit{delivery phase}: each user requests one file. According to the users' demands and    cache content, the server sends coded packets such that each user's demand is satisfied. The goal is to minimize the worst-case number of transmitted packets normalized by the file size (referred to as \textit{load} in this paper).

The MN coded caching scheme utilizes an uncoded combinatorial cache construction in the placement phase and linear coding in the delivery phase. When $M=t\frac{N}{K}$ with $t\in \{0,1,\ldots,K\}$, the achieved load is $\frac{K(1-M/N)}{1+KM/N}$.
The term $1-M/N$ in the numerator is the \textit{local caching gain}, which
is defined as the average fraction of each file not available in the cache of each user.
The term $1+KM/N$ in the denominator is the \textit{coded caching gain}, which is defined as the average number of users served by one multicast message.
For other memory sizes, the lower convex envelope of the above memory-load tradeoff can be achieved by memory-sharing.
The load of the MN coded caching scheme was proved to be optimal within a factor of $2$~\cite{yufactor2TIT2018} and exactly optimal under the constraint of uncoded cache placement~\cite{WTP,YMA}  (i.e., each user directly copies some packets of files in its cache).


However, the MN scheme requires a subpacketization exponential to the number of users $K$. In order to reduce the subpacketization, the authors in \cite{YCTC} proposed a combinatorial structure to characterize the placement and delivery strategies in a single array, referred to as \emph{Placement Delivery Array (PDA)}. It was shown in \cite{STD}  that the schemes in \cite{MN,YCTC,STD,SDLT,SZG,TR,YTCC,CKSB} can be represented by appropriate PDAs.
Particularly, the PDA characterizing the MN scheme in \cite{MN} is referred to as MN PDA.
 Given any $(K,F,Z,S)$ PDA, we can obtain a shared-link coded caching scheme for $K$ users, with subpacketization $F$, memory ratio $\frac{M}{N}=\frac{Z}{F}$ and load $R=\frac{S}{F}$. By using PDA, various coded caching schemes were constructed to reduce the subpacketization of the MN scheme, e.g.,~\cite{YCTC,CJWY,CJTY,CJYT,MW,ZCJ,YTCC,SZG,CWLZC,SB,ER,ENR}.

\subsection{One-Dimension Multi-access Caching}
Most   works on coded caching consider that each user has its own dedicated cache. Edge caching, which stores the  Internet-based content at  the wireless edges,  boosts the spatial and spectral efficiency. The main advantages of edge caching compared to the end-user caches include  that the edge nodes normally have larger storage sizes and could be accessed by multiple local users with high data rates.
Such a scenario motivated the work in \cite{JHNS} which introduced a multi-access coded caching (MACC) problem, referred to as $(K,L,M,N)$ one-dimensional (1D) MACC problem. Different from the MN coded caching problem,    there are $K$ cache-nodes with the cache size of $M$ files, while  each of the $K$ users is cache-less and  can access $L$ neighboring cache-nodes in a cyclic wrap-around fashion.  Thus,  each cache-node serves exactly $L$ users. As assumed in \cite{JHNS}, the cache-nodes  are accessed by the connected users with negligible load cost.

Under the 1D MACC model, various schemes were proposed in \cite{JHNS,SPE,RK,SR,RKstructure,LWCG,NRprivacy,NRsecure,OG,KMR,MKR,MKRmn,FP}.
The most related work to this paper is our previous work in \cite{CWLZC}, which  proposed a transformation approach to extend any  PDA  for the shared-link coded caching system (satisfying some constraints which most existing PDAs satisfy) to generate a 1D MACC scheme as illustrated in Fig.~\ref{fig-intr-1D}.
\begin{figure}
	\centering
	\includegraphics[width=4.3in]{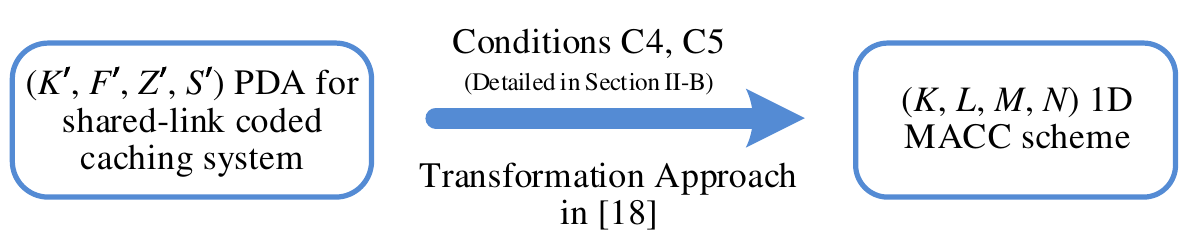}
	\caption{Transformation from a $(K',F',Z',S')$ shared-link PDA to a $(K,L,M,N)$ 1D MACC scheme, where $K=K'+\frac{K'Z'}{F'}(L-1)$ and $M=\frac{K'Z'N}{F'K}$.}\label{fig-intr-1D}
\end{figure}
For any $(K,L,M,N)$ 1D MACC system with $M/N\in \{0,1,\ldots, \left\lfloor K/L\right\rfloor \}$, by using such transformation approach on the MN PDA for shared-link coded caching scheme, we can obtain a  1D MACC caching scheme with the load $\frac{K(1-LM/N)}{KM/N+1}$.

\subsection{Two-Dimensional Multi-access Caching}
The aforementioned works on the MACC problem only considered the 1D topology. However, in a practical cellular network, the cache-nodes are most typically placed in a two-dimensional (2D) topology to cover a plane area, such as triangle, square, and hexagon cellular geometries\cite{MV1979}. Motivated by this, we consider an ideal MACC problem with 2D square topology, referred to as 2D MACC. In this paper, we focus on the $(K_1,K_2,L,M,N)$ 2D MACC system as illustrated in Fig.~\ref{fig-grid-intru}. In this setting, $K_1\times K_2$ cache-nodes with cache size of $M$ files are placed in a rectangular grid with $K_1$ rows and $K_2$ columns, and $K_1\times K_2$ cache-less users are placed regularly on the same grid such that each user is in the proximity of a square of $L\times L$ neighboring cache-nodes (where distance is defined in a cyclic wrap-around fashion).    
\begin{figure}
	\centering
	\includegraphics[width=3.3in]{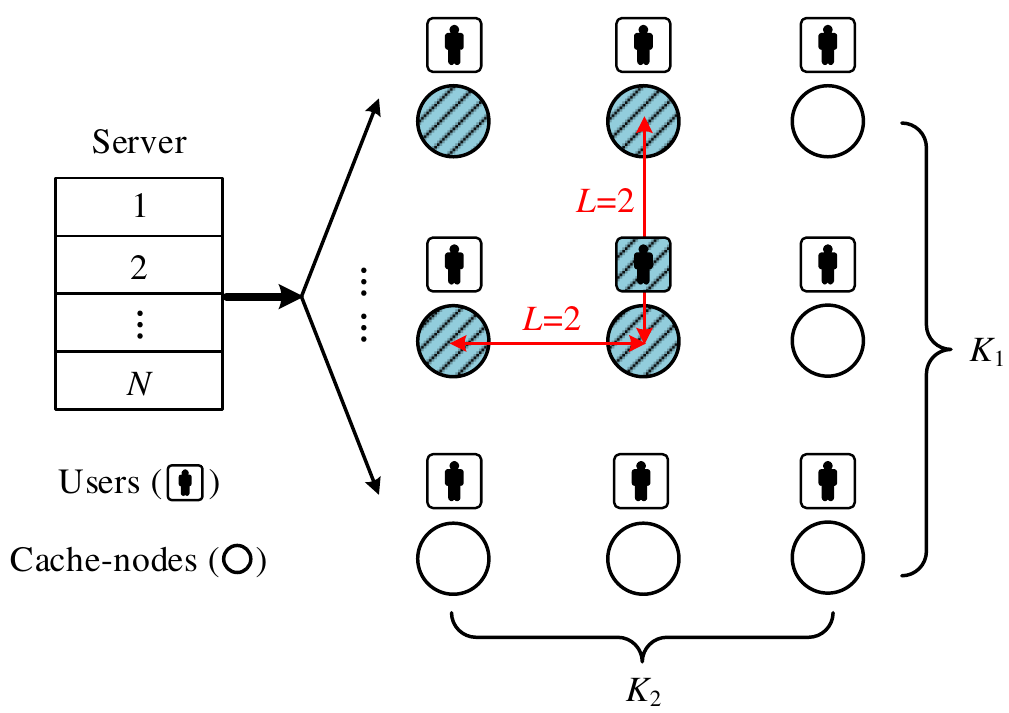}
	\caption{The 2D MACC model with $K_1=K_2=3$, $L=2$.}\label{fig-grid-intru}
\end{figure}
For instance, when $K_1=K_2=3$ and $L=2$, the user at row $2$ and column $2$ can access the cache-nodes which are located at (row, column)$=(1,1)$, $(1,2)$, $(2,1)$ and $(2,2)$ (as illustrated in Fig.~\ref{fig-grid-intru}). Without loss of generality, we assume that $K_1\geq K_2$. When  $K_2=1$, the 2D MACC system   reduces to 1D MACC system. Similar to the 1D MACC model, users can access their $L^2$ cache-nodes at no cost (this assumes very fast off-load side links between users and cache-nodes). The objective of the problem is to minimize the worst-case load of the broadcast transmissions from the server to users over all possible demands. 

\subsection{Contribution and Paper Organization}
Our contributions for  the new $(K_1,K_2,L,M,N)$ 2D MACC system are as follows. 
\begin{itemize}
\item 
 We first propose a \textit{baseline scheme}, by  using an MDS precoding on each file such that the 2D MACC problem is divided into $K_2$ separate 1D MACC problems, each of which has $K_1$ cache-nodes and users.  The baseline scheme achieves the maximum local caching gain 
and a coded caching gain equals to $\frac{K_1\min\{K_2,L\} M}{N}+1$.
\item   When $K_1$ and $K_2$ are divisible by $L$,
we propose a grouping scheme which partitions all the cache-nodes and  users into $L^2$ groups such that any two users in the same group cannot access the same cache-node, and uses the MN caching scheme for each group. The grouping scheme achieves the maximum local caching gain and a coded caching gain which equals to $\frac{K_1K_2 M}{N}+1$.
\item   Our major contribution on this new model is to propose a new transformation approach for the case   $K_2>L$, which constructs a hybrid  2D MACC scheme (i.e., consisting of an outer structure and an inner structure) from two classes of 1D MACC schemes as illustrated in Fig.~\ref{fig-contr-hybrid}. In the vertical projection of the 2D system which reduces to the $(K_1,L,M_1,N)$ 1D MACC system,
we select a  1D MACC scheme as outer structure from any 1D MACC scheme,  which is generated by the transformation approach\cite{CWLZC}.
In the horizontal projection of the 2D system which reduces to the $(K_2,L,M_2,N)$ 1D MACC system, we use  $\frac{K_1K_2M}{N}$ 1D MACC schemes as inner code, which are generated by  using the transformation approach\cite{CWLZC}  on the Partition PDA in \cite{CJYT} for the shared-link caching model.
Finally, by incorporating the outer and inner structures, we obtain a hybrid $(K_1,K_2,L,M,N)$ 2D MACC scheme, where  $\frac{M_1}{N}\cdot\frac{M_2}{N}=\frac{M}{N}$. The grouping scheme achieves the maximum local caching gain and a coded caching gain no less than $\frac{K_1K_2 M}{N}$ while the outer structure is generated based on the MN PDA.
\end{itemize}
\begin{figure}
	\centering
	\includegraphics[width=3.9in]{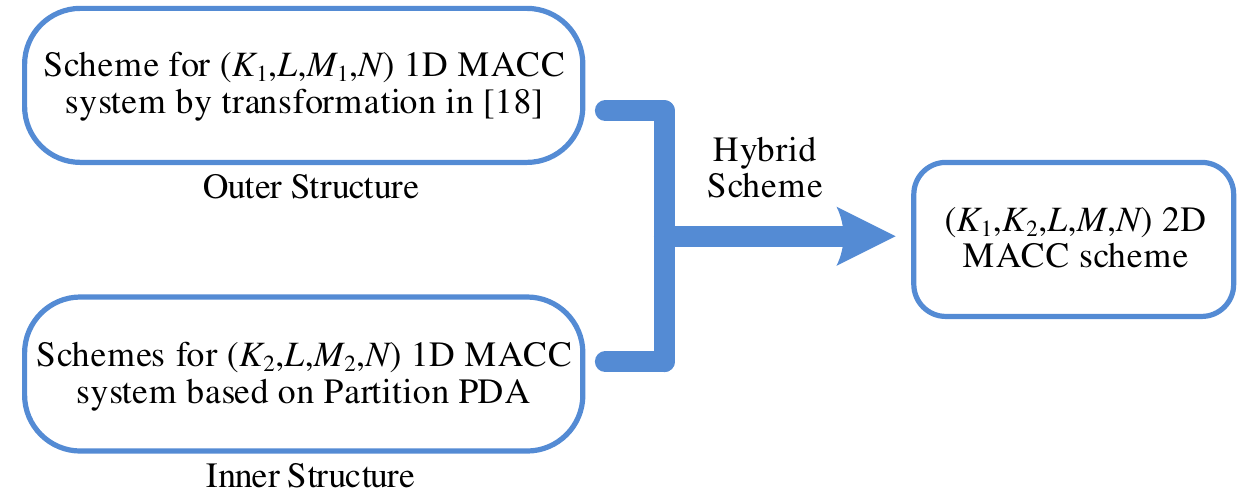}
	\caption{Contribution of hybrid scheme in 2D MACC model}\label{fig-contr-hybrid}
\end{figure}

The rest of this paper is organized as follows. Section~\ref{sec:preliminary} reviews some related results on the original shared-link coded caching model and 1D MACC model. Section \ref{sect-system} formulates the novel 2D MACC model. Section \ref{sec-main result} lists the main results of the paper.    Sections~\ref{sec-proof of Theorem Group} and~\ref{sec-proof of Theorem 1} provide the detailed constructions of the proposed caching schemes. Finally, we conclude the paper in Section \ref{sec-conclusion} and some proofs are provided in the Appendices.

\subsection*{Notations}
In this paper,  we use the following notations unless otherwise stated.
\begin{itemize}
\item Bold capital letter, bold lower case letter and curlicue font will be used to denote array, vector and set respectively.
The $r^{\text{th}}$ element of a set represents the  $r^{\text{th}}$ smallest element in this set.  $|\cdot|$ is used to represent the cardinality of a set or
the length of a vector;
\item For any positive integers $a$, $b$, $t$ with $a<b$ and $t\leq b $,   non-negative set $\mathcal{V}$, and vector ${\bf e}$,
\begin{itemize}
\item   $[a]:=\{1,2,\ldots,a\}$, $[a:b] :=\{a,a+1,\ldots,b\}$, $[a:b):=\{a,a+1,\ldots,b-1\}$ and ${[b]\choose t}:=\{\mathcal{V}\ |\   \mathcal{V}\subseteq [b], |\mathcal{V}|=t\}$, i.e., ${[b]\choose t}$ is the collection of all $t$-sized subsets of $[b]$;
\item $mod(a,q)$ denotes the least non-negative residue of $a$ modulo $q$.
\item
$
<a>_q:=\begin{cases}
       mod(a,q)\ \ \ \text{if}\ \ mod(a,q)\neq0\\
       \ \ \ q \ \ \ \ \ \ \ \ \ \ \text{if}\ \ mod(a,q)=0\\
       \end{cases}
$
\item $[a:b]_q:=\{<a>_q,<a+1>_q,\ldots,<b>_q\}$.
\item $\mathcal{V}[h]$ represents the $h^{\text{th}}$ smallest element of $\mathcal{V}$, where $h\in[|\mathcal{V}|]$. Assuming that $\mathcal{V}[h]=k$, we use $\mu(k)$ to represent the order of $k$ in $\mathcal{V}$, i.e., $\mu(k)=h$ if and only if $\mathcal{V}[h]=k$ for any $k\in \mathcal{V}$. ${\bf e}|_{h}$ is the $h^{\text{th}}$ entry of ${\bf e}$ for each $h\in [|{\bf e}|]$;
\item $\mathcal{V}+a:=\{v+a\ |\ \forall \ v\in  \mathcal{V}\}$.
\item For any array  $\mathbf{P}$ with dimension $m\times n$, $\mathbf{P}(i,j)$ represents the element located at the $i^{\text{th}}$ row and the $j^{\text{th}}$ column of $\mathbf{P}$.
\item The matrix $[a;b]$ is written in a Matlab form, representing
$\begin{bmatrix}
	a \\
	b
\end{bmatrix}$.
\end{itemize}
\end{itemize}

\section{Preliminary Results on Original Coded Caching Model and 1D Multi-Access Coded Caching Model}
\label{sec:preliminary}
In this section, we review the original shared-link coded caching model in \cite{MN}  and the   PDA structure in \cite{YCTC,CJYT}. Then we review the 1D MACC model in \cite{JHNS} and the transformation approach in~\cite{CWLZC} which constructs 1D MACC schemes from PDAs.

\subsection{Original Shared-link Coded Caching Model}
\label{sub:ori model}
In the original coded caching model~\cite{MN}, referred to as shared-link coded caching model, a server containing $N$ equal-length files, $\mathcal{W}=\{W_{1}, W_{2}, \ldots, $ $W_{N}\}$, connects through an error-free shared link to $K$ users $U_1$, $U_2$, $\ldots$, $U_{K}$ with $K\leq N$. Each user has a cache with size of $M$ files where $0\leq M \leq N$.
An $F$-division $(K,M,N)$ coded caching scheme contains two phases.
\begin{itemize}
 \item {\bf Placement phase:} The server divides each file into $F$ packets with equal size, i.e., $W_{n}=\{W_{n,j}\ |\ j\in [F]\}$, then directly places up to $MF$ packets to each user's cache. Note that in this phase the server has no information of the  users' later demands. Define $\mathcal{Z}_k$ as the cache content of user $k$.
 \item {\bf Delivery phase:} Each user randomly requests one file from the server. Assume that the demand vector is $\mathbf{d}=(d_1,d_2,\cdots,d_{K})$, i.e.,    user $U_k$ requests $W_{d_k}$, where $d_k\in[N]$ and $k\in [K]$. According to the users' cache content and demand vector, the server broadcasts $S_{{\bf d}}$ coded packets to the users such that each user can decode its desired file.
\end{itemize}
The objective is to minimize the worst-case load   among all possible requests, defined as
\begin{align}
R=\max\left\{\frac{ S_{\mathbf{d}}}{F}\ \Big|\ \mathbf{d}\in[N]^K\right\}.  \label{eq:def of load}
\end{align} 

The authors in \cite{YCTC} proposed a combinatorial coded caching structure, referred to as  placement delivery array (PDA).
\begin{definition}\rm(\cite{YCTC})
\label{def-PDA}
For any positive integers $K$, $F$, $Z$ and $S$, an $F\times K$ array $\mathbf{P}$ composed of a specific symbol $``*"$ and $S$ integers in $[S]$, is called a $(K,F,Z,S)$ PDA if it satisfies the following conditions,
 \item [C$1$.] The symbol $``*"$ appears $Z$ times in each column;
 \item [C$2$.] Each integer in $[S]$ occurs at least once in the array;
 \item [C$3$.] For any two distinct entries $\mathbf{P}(j_1,k_1)$ and $\mathbf{P}(j_2,k_2)$, if $\mathbf{P}(j_1,k_1)=\mathbf{P}(j_2,k_2)=s\in[S]$, then $\mathbf{P}(j_1,k_2)=\mathbf{P}(j_2,k_1)=*$, i.e., the corresponding $2\times 2$  sub-array formed by rows $j_1,j_2$ and columns $k_1,k_2$ must be one of the following form
 \begin{align*}
 \left(\begin{array}{cc}
 s & *\\
 * & s
 \end{array}\right)~\textrm{or}~
 \left(\begin{array}{cc}
 * & s\\
 s & *
 \end{array}\right).
 \end{align*}
\hfill $\square$
\end{definition}
Notice that, for the sake of ease notation, sometimes we also express the non-star entries in a PDA by sets or vectors rather than integers.

Based on a $(K,F,Z,S)$ PDA, an $F$-division coded caching scheme for the $(K,M,N)$ coded caching system can be obtained in the following way.
\begin{itemize}
\item The columns represent the user indices while the rows represent the packet indices.
\item If $\mathbf{P}(j,k)=*$,   user $k$ caches the $j^{\text{th}}$ packet of all files. So, Condition C1 of Definition \ref{def-PDA} implies that all  users have the same   memory ratio  $\frac{M}{N}=\frac{Z}{F}$.
\item If $\mathbf{P}(j,k)$ is an integer $s$, the $j^{\text{th}}$ packet of each file is not stored by user $k$. Then the server transmits a multicast message (i.e., the XOR of all the requested packets indicated by $s$) to the users at time slot $s$. Condition C3 of Definition \ref{def-PDA} guarantees that each user can recover its requested packets since it has cached all the other packets in the multicast message except its requested one. The occurrence number of integer $s$ in $\mathbf{P}$, denoted by $g_s$, is the coded caching gain at time slot $s$, meaning that  the coded packet is broadcasted at the time slot $s$ and simultaneously useful for $g_s$ users.  $\mathbf{P}$ is said to be a $g$-$(K,F,Z,S)$ PDA if $g_s=g$ for all $s\in [S]$.
\item Condition C2 of Definition \ref{def-PDA} implies that the number of multicast messages transmitted by the server is  $S$; thus the load is $R=\frac{S}{F}$.
\end{itemize}

\begin{example}\label{MN-pda}
\rm
We use the following $g$-$(K,F,Z,S)=3$-$(3,3,2,1)$ PDA $\mathbf{P}$ to construct a $(K,M,N)=(3,2,3)$ coded caching scheme for the shared-link coded caching model.
\begin{eqnarray*}
\mathbf{P}=\left(
             \begin{array}{ccc}
               * & * & 1 \\
               * & 1 & * \\
               1 & * & * \\
             \end{array}
           \right)
\end{eqnarray*}
\begin{itemize}
 \item \textbf{Placement Phase}: The server divides each file into $3$ equal-size packets, i.e., $W_n=\{W_{n,1},W_{n,2},W_{n,3}\}$, $n\in [3]$. The users cache the following packets,
 \begin{align*}
 \mathcal{Z}_1=\left\{W_{n,1},W_{n,2}\ |\ n\in[3]\right\},\ \ \ \mathcal{Z}_2=\left\{W_{n,1},W_{n,3}\ |\ n\in[3]\right\},\ \ \
 \mathcal{Z}_3=\left\{W_{n,2},W_{n,3}\ |\ n\in[3]\right\}.
 \end{align*}
 \item \textbf{Delivery Phase}: Assume that the request vector is $\mathbf{d}=(1,2,3)$. The server sends $W_{1,3}\bigoplus W_{2,2}$ $\bigoplus W_{3,1}$ to the users. Then each user can recover its requested file. For instance, user $1$ requests the file $W_1=\{W_{1,1},W_{1,2},W_{1,3}\}$ and has cached $W_{3,1}$, $W_{2,2}$, so it can recover $W_{1,3}$. The load is $R=\frac{1}{3}$.
 \end{itemize}
\hfill $\square$
\end{example}
For what said above, it follows that any PDA corresponds to a coded caching scheme achieving the performance state in the following lemma. 
\begin{lemma}\rm(\cite{YCTC})
\label{le-Fundamental}Given a $(K,F,Z,S)$ PDA, there  exists an $F$-division $(K,M,N)$ coded caching scheme with the memory ratio $\frac{M}{N}=\frac{Z}{F}$ and   load $R=\frac{S}{F}$.
\hfill $\square$
\end{lemma}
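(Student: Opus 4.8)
The plan is to make the construction sketched above precise and to verify that the three PDA axioms do exactly the work claimed: C1 fixes the cache size, C2 counts the transmissions, and C3 forces correct decoding.

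For the placement phase I would split each file as $W_n=\{W_{n,j}\mid j\in[F]\}$ and let user $k$ store $\mathcal{Z}_k=\{W_{n,j}\mid n\in[N],\ \mathbf{P}(j,k)=*\}$; since C1 says $*$ occurs $Z$ times in column $k$, user $k$ caches $ZN$ packets, i.e.\ $ZN/F$ files, so the memory ratio is $Z/F$ for every user. For the delivery phase, given any demand vector $\mathbf{d}=(d_1,\dots,d_K)$, the server would send, for each $s\in[S]$, the coded packet
\[
X_s=\bigoplus_{(j,k):\ \mathbf{P}(j,k)=s} W_{d_k,j}.
\]
By C2 every integer in $[S]$ appears at least once, so exactly $S$ packets are transmitted, each of size $1/F$ of a file; hence the delivered load equals $S/F$ for every $\mathbf{d}$, and in particular the worst-case load is $S/F$.

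The crux is to show every user decodes its requested file, and this is where C3 enters. First note that C3 implies no integer repeats within a column: if $\mathbf{P}(j_1,k)=\mathbf{P}(j_2,k)=s$ with $j_1\neq j_2$, then C3 applied to these two entries would force $\mathbf{P}(j_1,k)=*$, a contradiction. Hence, for a user $k$ and any row $j$ with $\mathbf{P}(j,k)=s\in[S]$, the desired packet $W_{d_k,j}$ occurs exactly once in $X_s$; every other summand of $X_s$ has the form $W_{d_{k'},j'}$ with $\mathbf{P}(j',k')=s$ and $(j',k')\neq(j,k)$, and C3 then gives $\mathbf{P}(j',k)=*$, so $W_{d_{k'},j'}\in\mathcal{Z}_k$. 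Thus user $k$ can strip off all interfering terms of $X_s$ and obtain $W_{d_k,j}$. Letting $j$ range over all rows for which $\mathbf{P}(j,k)$ is an integer recovers exactly the packets of $W_{d_k}$ that are missing from $\mathcal{Z}_k$, so combined with its cache user $k$ reconstructs $W_{d_k}$. I do not expect a genuine obstacle here — the scheme and its analysis are routine once the construction is written down; the only point worth stating carefully is that the single condition C3 simultaneously guarantees that each requested packet is carried by a unique multicast message and that all the other packets in that message are already cached by the requesting user, which together make the XOR decodable.
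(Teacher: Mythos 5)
Your proof is correct and follows essentially the same construction the paper describes after Definition~\ref{def-PDA} (and attributes to \cite{YCTC}): star entries dictate placement so C1 gives memory ratio $Z/F$, one XOR per integer so C2 gives load $S/F$, and C3 guarantees each multicast message is decodable by every participating user. Your additional observation that C3 forbids an integer from repeating within a column (hence the desired packet appears exactly once in its XOR) is a correct and worthwhile detail that the paper's sketch leaves implicit.
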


The authors in \cite{YCTC} showed that the seminal coded caching scheme proposed in~\cite{MN} can be represented by a special PDA, referred to as MN PDA.
\begin{construction}\rm (\emph{MN PDA}~\cite{YCTC})
	\label{con-MN} For any integer $t\in[K]$,  we have a $(t+1)$-$\left(K,{K\choose t},{K-1\choose t-1},{K\choose t+1}\right)$ PDA,
$\mathbf{P}=\left(\mathbf{P}(\mathcal{T},k)\right)$	 with dimension ${K\choose t}\times K$,   where  $\mathcal{T}\in {[K]\choose t}$ and $k\in [K]$, by\footnote{\label{foot:row and col indices of MN}Notice that the rows are indexed by all the subsets $\mathcal{T}\in {[K]\choose t}$, and the columns are indexed by all the  integers $k\in [K]$.}
	\begin{align}\label{Eqn_Def_AN}
		\mathbf{P}(\mathcal{T},k)=\left\{\begin{array}{ll}
			* & \ \ \  \mbox{if}~k\in\mathcal{T} \\
			\mathcal{T}\cup\{k\} & \ \ \  \mbox{otherwise}
		\end{array} \ .
		\right. 
	\end{align}
	\hfill $\square$
\end{construction}
When $K=3$ and $t=2$, the $3$-$(3,3,2,1)$ MN PDA is exactly the PDA in Example~\ref{MN-pda}.

In order to further reduce the subpacketization of the MN PDA,  the authors in  \cite{CJYT}  proposed a particular PDA construction, referred to as Partition PDA.
For the sake of clarity, we express the non-star entries by vectors first.
\begin{construction}\rm(\emph{Partition PDA}~\cite{CJYT})
	\label{con-general-1}
	For any positive integers $q$, $z$ and $m$ where $0<z<q$, we define the row index set as $\mathcal{F}=[q]^m$, and the column index set as $\mathcal{K}=[q]$. Then we have $m$ arrays $\mathbf{H}_1'$, $\ldots$, $\mathbf{H}_m'$, each of which is a $q^m\times q$ array.  For each $i\in[m]$, each entry in $\mathbf{H}_i'$   is defined as
	\begin{eqnarray}\label{Eqn_Gen._Par_PDA}
		\mathbf{H}_i'({\bf f},{k})=\left\{
		\begin{array}{ll}
			* & \ \ \  \textrm{if}\ k\in \mathcal{B}_{f_i}\\	
			(f_1,\ldots,f_{i-1}, k, f_{i+1}, \ldots,f_{m},<f_i-k>_q)
			&  \ \ \ \textrm{otherwise} 	
		\end{array} \ ,
		\right.
	\end{eqnarray}
	where ${\bf f}=(f_1,f_2,\ldots,f_{m})\in\mathcal{F}$  and $k\in\mathcal{K}$ represent the row and column indices, respectively, and
	$\mathcal{B}_{f_i}=\{f_i,<f_i+1>_q,\ldots,<f_i+(z-1)>_q\}$  represents the set of columns filled by ``*" in row $\mathbf{f}$.
	Intuitively, if $k\notin \mathcal{B}_{f_i}$, the entry $\mathbf{H}_i'({\bf f},{k})$ is a non-star entry  represented by a vector with length $m+1$, which is generated by replacing the $i^{\text{th}}$ coordinate of   ${\bf f}=(f_1,f_2,\ldots,f_{m})$ by   $k$  and  appending $<f_i-k>_q$ at the end of the vector.

	The block array formed by stacking  $\mathbf{H}_i'$  for all $i\in[m]$ next to each other as $\mathbf{H}'=(\mathbf{H}_1',\ldots,\mathbf{H}_m')$ is an  
	$m$-$(mq,q^m,zq^{m-1},q^{m}(q-z))$ PDA.
	\hfill $\square$
\end{construction}

For the sake of future convenience,
we replace the vectors in $\mathbf{H}'=(\mathbf{H}_1',\ldots,\mathbf{H}_m')$  by integers in $[q^{m}(q-z)]$ according to
an arbitrary   one-to-one mapping $\phi$;
the resulting array containing stars and integers is defined as $\mathbf{H}=(\mathbf{H}_1,\ldots,\mathbf{H}_m)$, which is also an
$m$-$(mq,q^m,zq^{m-1},q^{m}(q-z))$ PDA.

\begin{example}\rm
	\label{ex-par}
	When $q=3$, $z=2$ and $m=2$, the Partition PDA $\mathbf{H}'=(\mathbf{H}_1', \mathbf{H}_2')$ is illustrated in Fig~\ref{fig-par}, including two sub-arrays
	$\mathbf{H}_1'$ and $\mathbf{H}_2'$ with dimension
	$9\times3$.
	\begin{figure}
		\centering
		\includegraphics[width=4.3in]{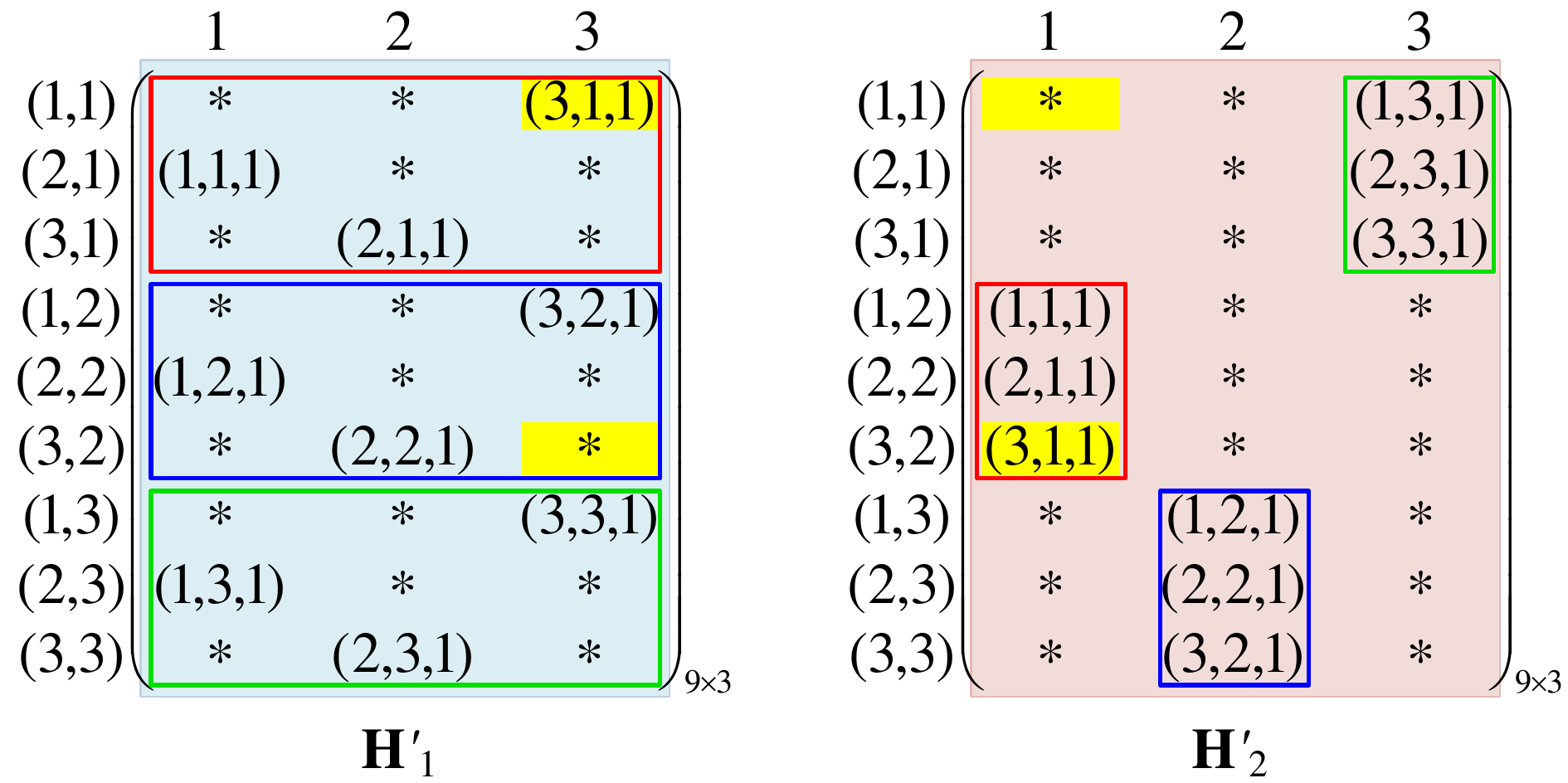}
		\caption{$\mathbf{H}_1'$, $\mathbf{H}_2'$ of Partition PDA $\mathbf{H}'$ with $q=3$, $z=2$ and $m=2$}\label{fig-par}
	\end{figure}
	In $\mathbf{H}_1'$, let us focus
	on the row with index  $\mathbf{f}=(1,1)$,
	\begin{itemize}
		\item  we have $\mathcal{B}_{f_1}=\{f_1,<f_1+1>_3\}=\{1,2\}$. Thus the stars are located at column $1$ and column $2$ of this row, i.e., $\mathbf{H}_1'({\bf f},1)=\mathbf{H}_1'({\bf f},2)=*$;
		\item for column $3$ of this row (i.e., when $k=3$), we have $\mathbf{H}_1'(\mathbf{f},3)=(k,f_2,<f_1-k>_3)=(3,1,1)$ from \eqref{Eqn_Gen._Par_PDA}.
	\end{itemize}
	In $\mathbf{H}_2'$, let us focus
	on the row with index  $\mathbf{f}=(2,1)$,
	\begin{itemize}
		\item  we have $\mathcal{B}_{f_2}=\{f_2,<f_2+1>_3\}=\{1,2\}$. Thus the  stars are located at column $1$ and column $2$ of   this row, i.e., $\mathbf{H}_2'({\bf f},1)=\mathbf{H}_2'({\bf f},2)=*$;
		\item for column $3$ of this row (i.e., when $k=3$), we have $\mathbf{H}_2'(\mathbf{f},3)=(f_1,k,<f_2-k>_3)=(2,3,1)$ from \eqref{Eqn_Gen._Par_PDA}.
	\end{itemize}
	Similarly,  the other entries in $\mathbf{H}_1'$ and $\mathbf{H}_2'$ are obtained as illustrated in Fig~\ref{fig-par}. Next, we check the Condition C3 of PDA in  Definition~\ref{def-PDA}.   For instance, let us focus on the vector $(3,1,1)$. In $\mathbf{H}_1'$, the vector $(3,1,1)$ is filled in the entry at row $(1,1)$ and column $3$; in $\mathbf{H}_2'$, the vector $(3,1,1)$ is filled in the entry at row $(3,2)$ and column $1$. In row $(1,1)$, the entry at column $1$ of $\mathbf{H}_2'$ is star; in row $(3,2)$, the entry at column $3$ of $\mathbf{H}_1'$ is star. Thus the sub-array containing $(3,1,1)$
	satisfies  Condition C3 of PDA in Definition~\ref{def-PDA}.
	
	We can also replace the vectors in $\mathbf{H}_1'$ and $\mathbf{H}_2'$ by integers according to the one-to-one mapping $\phi$   in Table~\ref{tab-mapping}.
	\begin{table}
		\center
		\caption{The mapping $\phi$ }\label{tab-mapping}
		\begin{tabular}{|c|c|c|c|c|c|c|c|c|c|}
			\hline
			$\mathbf{h}'$ & $(1,1,1)$ & $(2,1,1)$ & $(3,1,1)$ & $(1,2,1)$ & $(2,2,1)$ & $(3,2,1)$ & $(1,3,1)$ & $(2,3,1)$ & $(3,3,1)$ \\ \hline
			$\mathbf{h}=\phi(\mathbf{h}')$ & $1$ & $2$ & $3$ & $4$ & $5$ & $6$ & $7$ & $8$ & $9$ \\ \hline
		\end{tabular}
	\end{table}
	The resulting arrays $\mathbf{H}_1 $ and $\mathbf{H}_2 $
	are illustrated in Fig.~\ref{fig-par-inte}.
	\begin{figure}
	\centering
	\includegraphics[width=2.3in]{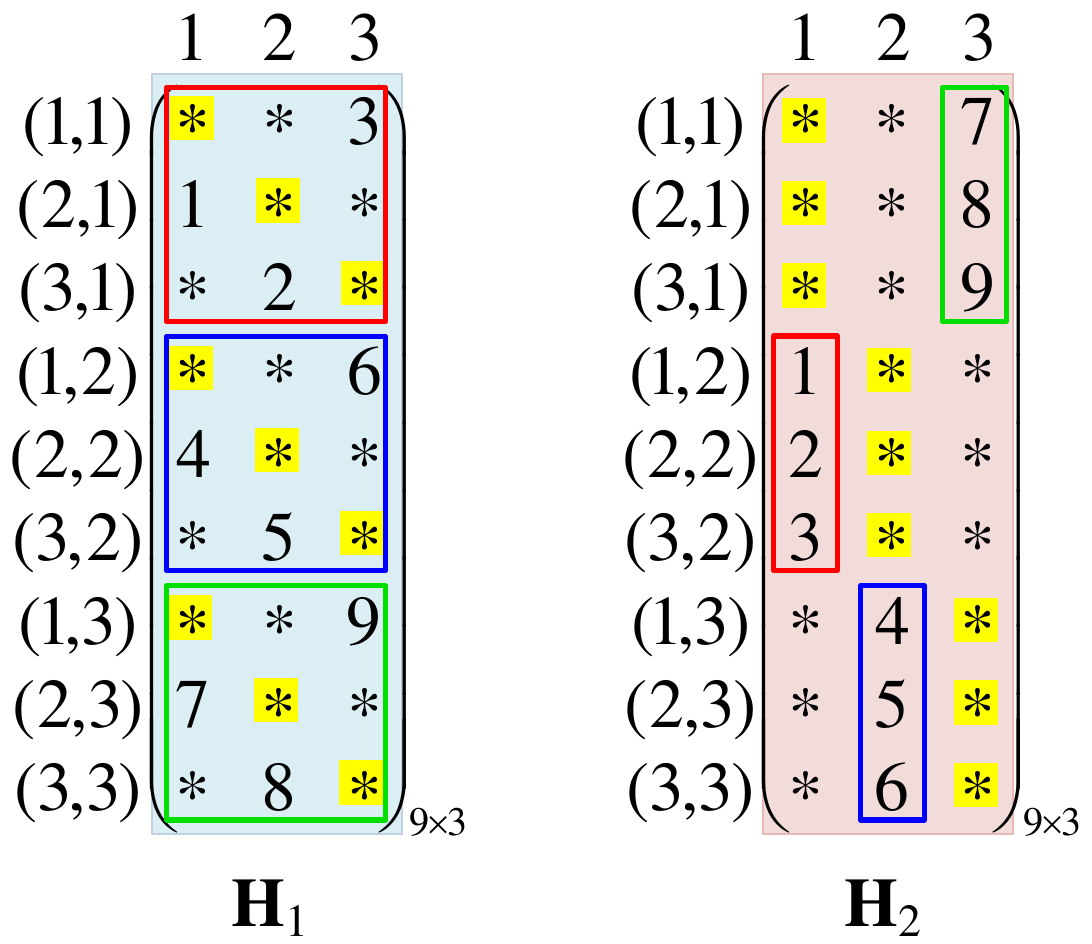}
	\caption{$\mathbf{H}_1$, $\mathbf{H}_2$ of Partition PDA $\mathbf{H}$ with $q=3$, $z=2$ and $m=2$}\label{fig-par-inte}
\end{figure}
	\hfill $\square$
\end{example}
Next, we introduce the concept of ``tag-star" in a Partition PDA, which plays an important role in the constructions of this paper.
\begin{definition}\rm 
	\label{rem-par}
	For    each $i\in[m]$ and 	
	each $\mathbf{f}\in\mathcal{F}$, we have $\mathbf{H}_i(\mathbf{f},f_i)=*$. Then we define this star (i.e.,
	the star located at  row $\mathbf{f}$ and column $f_i $ of $\mathbf{H}_i$)  as a {\it tag-star}.		
	\hfill $\square$
\end{definition}

In Example~\ref{ex-par}, for $i=1$, the row indexed by $\mathbf{f} = (2,1)$ of $\mathbf{H}_1$ (see Fig.~\ref{fig-par-inte}) contains a $*$ in the second position. Since $f_1 = 2$, this is a tag star. 
For $i = 2$, the row indexed by $\mathbf{f} = (2,1)$ of $\mathbf{H}_2$ (see Fig.~\ref{fig-par-inte}) contains a $*$ in the first position. Since $f_2 = 1$, this is a tag star.

\subsection{1D Multi-access Coded Caching Model}
\label{subsect-line}
A $(K,L,M,N)$ 1D MACC system proposed in \cite{JHNS} contains a server with a set of $N$ equal-length files, $K$ cache-nodes, and $K\leq N$ cache-less users. Each cache-node has a memory size of $M$ files where $0\leq M \leq \frac{N}{L}$. The cache-nodes are placed in a line, and each user can access $L$ neighboring cache-nodes in a cyclic wrap-around fashion.
That is, each user $U_k$, $k\in [K]$, can retrieve all the content cached by the cache-node $C_{k'}$ if and only if $<k-k'>_K <L$. Each user is also connected to the server via an error-free shared link.
As in~\cite{JHNS}, we assume that the users can retrieve the cache content of the connected cache-nodes without any cost.
The system operates in two phases.
\begin{itemize}
  \item {\bf Placement phase:} Each file is divided into $F$ packets of equal size, then each cache-node directly caches up to $MF$ packets of files.
  Each user $U_k$ can retrieve the content stored at its accessible cache-nodes.
  The placement phase is done without knowledge of later requests.
  \item {\bf Delivery phase:} Each user randomly requests one file. According to the request vector $ \mathbf{d}=(d_{1},d_{2},\ldots,d_{K})$ and the retrieved content by users, the server transmits $S_{{\bf d}}$ multicast messages to users, such that each user's request can be satisfied.
\end{itemize}

Let $t=\frac{KM}{N} \in [0:\left\lfloor K/L \right\rfloor ]$.
A transformation approach to generate a  $(K,L,M,N)$ 1D MACC scheme was proposed in  \cite{CWLZC} which extends any $(K'=K-t(L-1),F',Z',S')$ PDA  $\mathbf{P}$ for the shared-link coded caching system satisfying  $\frac{K'Z'}{F'}=t$,  Conditions C$1$-C$3$ in Definition~\ref{def-PDA} and    Conditions C$4$, C$5$ (which will be clarified soon).

First, $\mathbf{P}$ satisfies the following condition,
\begin{itemize}
	\item C4. Each row of $\mathbf{P}$ has exactly $t$ stars.
\end{itemize}
Then we define
\begin{align}
	\label{eq-mathcal-A_j}
	\mathcal{A}_j=\{k' \ | \ \mathbf{P}(j,k')=*,k'\in[K'],  j\in[F']\}
\end{align}
as column index set of star entries in row $j$ of $\mathbf{P}$. Notice that $|\mathcal{A}_j|=t$ since each packet is cached $t$ times in the original MN caching model.

 In $(K,L,M,N)$ 1D MACC scheme, each file is divided into $K$ equal-length subfiles, $W_n=\{W_n^{(r)} | r\in[K]\}$ where $n\in[N]$, and the caching procedure is also divided into $K$ rounds. For any $r\in[K]$, in the $r^{\text{th}}$ round we only deal with the $r^{\text{th}}$ subfile of each file. Furthermore, it is sufficient to introduce the construction in the first round since all the caching procedures in different rounds are symmetric.
In the first round, the authors in \cite{CWLZC} showed that the node-placement array $\mathbf{C}_{\text{1D}}$,  user-retrieve array $\mathbf{U}_{\text{1D}}$, and  user-delivery array $\mathbf{Q}_{\text{1D}}$ are generated by the PDA $\mathbf{P}$
as follows.
\begin{itemize}
	\item \textbf{Node-placement array} $\mathbf{C}_{\text{1D}}$. In order to obtain the maximum local caching gain, the scheme guarantees that any $L$ neighboring cache-nodes do not cache any common packets. Thus based on $\mathbf{P}$, the $F'\times K$ node-placement array
	$\mathbf{C}_{\text{1D}}=\left(\mathbf{C}_{\text{1D}}(j, k) \right)_{j\in [F'],k\in [K]}$ is defined as
	\begin{align}
		\label{eq-array-node-caching}
		\mathbf{C}_{\text{1D}}(j, k)=\left\{
		\begin{array}{ll}
			* & \ \ \ \hbox{if}\ \ k\in \mathcal{C}_{j} \\
			\text{null} & \ \ \ \hbox{otherwise}
		\end{array} \ ,
		\right.
	\end{align}
	where
	\begin{align}
		\mathcal{C}_{j}=\left\{\mathcal{A}_j[i]+(i-1)(L-1)\ | \ i\in[t] \right\}. \label{eq-caching-index}
	\end{align}
	Here  $\mathcal{C}_{j}$ represents the set of cache-nodes caching the packet indexed by $j$. From \eqref{eq-caching-index}, there are $t$ stars in each row, which means that each packet stored by $t$ cache-nodes. Moreover, any two entries $k_1=\mathcal{C}_{j}[i_1]$ and  $k_2=\mathcal{C}_{j}[i_2]$ in $\mathcal{C}_{j}$ satisfy $D_{\rm r}(k_1,k_2)\geq L$,\footnote{Recall that $D_{\rm r}(k_1,k_2)=\min\{<k_1-k_2>_{K},K-<k_1-k_2>_{K}\}$ where $k_1,k_2\in[K]$.} which means that any two cache-nodes accessed by the same users do not cache common packets.
	\item \textbf{User-retrieve array} $\mathbf{U}_{\text{1D}}$. According to the relationship between users and their accessible cache-nodes, based on $\mathbf{C}_{\text{1D}}$, the $F'\times K$ user-retrieve array $\mathbf{U}_{\text{1D}}=\left(\mathbf{U}_{\text{1D}}(j, k) \right)_{j\in [F'],k\in [K]}$ is defined as
	\begin{align}
		\label{eq-array-user-caching}
		\mathbf{U}_{\text{1D}}(j, k)=\left\{
		\begin{array}{ll}
			* & \ \ \ \hbox{if}\ \ k\in \mathcal{U}_{j} \\
			\text{null} & \ \ \ \hbox{otherwise}
		\end{array} \ ,
		\right.
	\end{align}where
	\begin{align}
		\mathcal{U}_{j} =\bigcup\limits_{i\in[t]}\big\{\mathcal{C}_{j}[i], \mathcal{C}_{j}[i]+1, \ldots, \mathcal{C}_{j}[i]+(L-1)\big\}.
		\label{eq-user-index}
	\end{align}
	Here  $\mathcal{U}_{j}$ represents the set of users who can retrieve the packet indexed by $j$. From \eqref{eq-caching-index} and \eqref{eq-user-index}, there are $tL$ stars in each row, which means that each packet can be retrieved by $tL$ users. In addition, for any row indexed by $j\in[F']$, we can divide the $tL$ starts into $t$ groups and each group has $L$ consecutive stars, where the $i^{\text{th}}$ group is defined as
	\begin{align}
		\mathcal{U}_{j,i} := \big\{\mathcal{C}_{j}[i], \mathcal{C}_{j}[i]+1, \ldots, \mathcal{C}_{j}[i]+(L-1)\big\}.
		\label{eq:star group}
	\end{align}
	$\mathcal{U}_{j,i}$ represents the set of users who retrieve the packet cached by cache-node $\mathcal{C}_{j}[i]$.
	
	Another important observation is that, each packet cannot be retrieved by $K-tL=K'-t$ users, which is exactly the same as the number of users not caching each packet in the original $(K',F',Z',S')$ PDA $\mathbf{P}$.
	
	\item \textbf{User-delivery array} $\mathbf{Q}_{\text{1D}}$. Based on  $\mathbf{U}_{\text{1D}}$ and $\mathbf{P}$, the $F'\times K$ user-delivery array $\mathbf{Q}_{\text{1D}}=\left(\mathbf{Q}_{\text{1D}}(j, k) \right)_{j\in[F'] ,k\in [K]}$ is defined as
	\begin{align}
		\label{eq-array-user-PDA}
		\mathbf{Q}_{\text{1D}}(j, k)=\left\{
		\begin{array}{ll}
			s & \ \ \ \hbox{if}\ \ k\in\overline{\mathcal{U}}_{j} \\
			* & \ \ \ \hbox{otherwise}
		\end{array} \ ,
		\right.
	\end{align}
	where $s=\mathbf{P}(j,\psi_{j}(k))\in[S']$. $\psi_{j}$ is a one-to-one mapping from $\overline{\mathcal{U}}_{j}=[K]\setminus\mathcal{U}_{j}$ (i.e., the column index set of $\mathbf{U}_{\text{1D}}$ where the entries in row $j$ are non-star)  to $\overline{\mathcal{A}_j}=[K']\setminus \mathcal{A}_j$ (i.e., the column index set of original PDA $\mathbf{P}$ where the entries in row $j$ are non-star).
	So,
	\begin{align}
		\psi_{j}(\overline{\mathcal{U}}_{j} [\mu]):=\overline{\mathcal{A}_j}[\mu], \ \forall \mu\in [K'-t], j\in F'.
		\label{eq:definition of psi}
	\end{align}
	Hence, the alphabet set of the resulting $\mathbf{Q}_{\text{1D}}$ contains $S'$ different integers which indicate the broadcasted messages.
\end{itemize}

\begin{remark}\rm
	\label{remark-C5}
	In \cite{CWLZC}, the authors showed that in order to guarantee $\mathbf{Q}_{\text{1D}}$ satisfying Condition C3, the original PDA $\mathbf{P}$ should satisfy,
	\begin{itemize}
		\item C$5$. For any two distinct entries $\mathbf{P}(j_1,k_1)$ and $\mathbf{P}(j_2,k_2)$ satisfying  $\mathbf{P}(j_1,k_1)=\mathbf{P}(j_2,k_2)=s\in[S']$, assume that $k_1=\mathcal{A}_{j_1}\bigcup\{k_1\}[i_1]$, $k_2=\mathcal{A}_{j_2}\bigcup\{k_2\}[i_2]$, we have $k_1+(i_1-1)(L-1)\in\mathcal{U}_{j_2}$ and $k_2+(i_2-1)(L-1)\in\mathcal{U}_{j_1}$ for some integers $i_1,i_2\in[t+1]$.
	\end{itemize}
	\hfill $\square$
\end{remark}
It turns out that most existing PDAs meet Conditions C4 and C5, such as  MN PDA and the PDAs in \cite{YCTC,CJYT,SZG,YTCC}.

After determining $\mathbf{C}_{\text{1D}}$, $\mathbf{U}_{\text{1D}}$ and $\mathbf{Q}_{\text{1D}}$, the placement and delivery strategies of $(K,L,M,N)$  1D MACC scheme are obtained in the first round.  For each $r\in[K]$, in the $r^{\text{th}}$ round, we only need to cyclically right-shift $\mathbf{C}_{\text{1D}}$, $\mathbf{U}_{\text{1D}}$, and $\mathbf{Q}_{\text{1D}}$ by $r-1$ positions,  respectively. For the total placement array of cache-nodes, there are $K'Z'$ stars in each column,  while this array has $KF'$ rows. Hence, each cache-node caches $\frac{K'Z'}{F'K}N=\frac{KM}{N}\cdot\frac{N}{K}=M$ files, satisfying the memory size constraint.

In conclusion, by the above transformation approach, we obtain the following theorem.
\begin{theorem}\rm(\cite{CWLZC})
\label{lem-1D-tran}
Given any $(K',F',Z',S')$ PDA $\mathbf{P}$ satisfying Conditions C$1$-C$5$,  there exists a $(K=K'+t(L-1),L,M,N)$ 1D MACC scheme where $t=\frac{KM}{N}=\frac{K'Z'}{F'}$, with the
transmission load $R_{\text{1D}}=\frac{S'}{F'}$.
\hfill $\square$
\end{theorem}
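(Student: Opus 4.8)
The plan is to verify that the three arrays $\mathbf{C}_{\text{1D}}$, $\mathbf{U}_{\text{1D}}$, $\mathbf{Q}_{\text{1D}}$ built from $\mathbf{P}$, together with their cyclic shifts across the $K$ rounds, constitute a valid $(K,L,M,N)$ 1D MACC scheme, and then tally the subpacketization and the number of transmissions to read off the load. First I would check the \emph{placement is feasible}: by Condition C4 each row of $\mathbf{P}$ has exactly $t$ stars, so by \eqref{eq-caching-index} each row of $\mathbf{C}_{\text{1D}}$ has exactly $t$ stars; over all $K$ cyclically shifted copies each column accumulates $t F' \cdot K / K = tF'$... more carefully, the full node-placement array has $KF'$ rows and each of the $K$ columns receives, by the cyclic symmetry of the construction, a total of $K'Z' = tF'$ stars, hence each cache-node stores $\frac{tF'}{KF'}N = \frac{KM/N}{K}N = M$ files, matching the memory constraint. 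Second, I would confirm the \emph{no-overlap / maximum local caching gain property}: the offsets $(i-1)(L-1)$ in \eqref{eq-caching-index} are chosen precisely so that any two cache-nodes holding a common packet are at cyclic distance $\geq L$, hence no $L$ consecutive cache-nodes share a packet; this is what makes each user able to recover $tL$ distinct packets of each file from its $L$ accessible nodes, leaving exactly $K'-t$ packets per subfile to be delivered, matching the non-star count of a column of $\mathbf{P}$.

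Third — and this is where the real content lies — I would show the \emph{delivery is correct}, i.e. that $\mathbf{Q}_{\text{1D}}$ (extended by cyclic shifts) is itself a valid PDA, so that Lemma~\ref{le-Fundamental}-style reasoning applies round by round. Condition C1 holds because each row of $\mathbf{Q}_{\text{1D}}$ has $tL$ stars in the columns of $\mathcal{U}_j$ — but that is the wrong count for a PDA on $K$ users; the right viewpoint is that within each round we are really running the shared-link scheme of $\mathbf{P}$ on the $K'-t$... Actually the cleanest route is: fix a round and a packet index $j$; the users in $\mathcal{U}_j$ already have packet $j$ for free, and the remaining users $\overline{\mathcal{U}}_j$ are put in bijection with $\overline{\mathcal{A}_j}$ via $\psi_j$, so the transmission labelled $s$ XORs exactly the packets $W^{(r)}_{d_k, j}$ over those $k\in\overline{\mathcal{U}}_j$ with $\mathbf{P}(j,\psi_j(k))=s$. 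Correctness for a single user then follows because in $\mathbf{P}$, condition C3 guarantees that whenever $\mathbf{P}(j,\psi_j(k))=s$, every other column $k''$ with that same $s$ in some row $j''$ has $\mathbf{P}(j,k'')=*$ — I must translate this back through $\psi_j$ and the retrieval structure $\mathcal{U}_j$ to conclude user $k$ has cached (or retrieved for free) every other summand. This translation is exactly what Condition C5 in Remark~\ref{remark-C5} is engineered to make work: C5 asserts the shifted positions $k_1+(i_1-1)(L-1)$ and $k_2+(i_2-1)(L-1)$ land inside $\mathcal{U}_{j_2}$ resp. $\mathcal{U}_{j_1}$, which is precisely the ``C3 for $\mathbf{Q}_{\text{1D}}$'' statement.

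The main obstacle is this last step: carefully checking that Conditions C4 and C5 on $\mathbf{P}$ really do force Condition C3 on $\mathbf{Q}_{\text{1D}}$ (and that C2 for $\mathbf{Q}_{\text{1D}}$, i.e. every $s\in[S']$ still appears, is inherited from C2 of $\mathbf{P}$ since $\psi_j$ is surjective onto $\overline{\mathcal{A}_j}$ and each $s$ appears in some row). One has to track indices through the offset map $k\mapsto k+(i-1)(L-1)$, through $\psi_j$, and through the cyclic shift by $r-1$ simultaneously, and verify no collisions occur — in particular that two summands of the same multicast message are never desired by the same user in any round. Once C1–C3 for $\mathbf{Q}_{\text{1D}}$ are in hand, the bookkeeping is immediate: each of the $K$ rounds contributes $S'$ transmissions, each round's subfiles have size $\frac{1}{KF'}$ of a file (there are $K$ subfiles, each split into $F'$ packets), so the load is $R_{\text{1D}} = \frac{K S'}{KF'} = \frac{S'}{F'}$, and $t = \frac{K'Z'}{F'} = \frac{KM}{N}$ follows from $K = K' + t(L-1)$ together with the per-node memory computation above. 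I would assemble these pieces — feasibility, maximum local caching gain, delivery correctness, load count — into the statement of Theorem~\ref{lem-1D-tran}.
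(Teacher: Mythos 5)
Your proposal is correct and follows essentially the same route as the paper's treatment of Theorem~\ref{lem-1D-tran}: the paper likewise justifies it through the construction of $\mathbf{C}_{\text{1D}}$, $\mathbf{U}_{\text{1D}}$, $\mathbf{Q}_{\text{1D}}$ and their $K$ cyclically shifted rounds, the memory count ($K'Z'$ stars per column over $KF'$ rows giving $M$), the load count $KS'/(KF')=S'/F'$, and the observation (Remark~\ref{remark-C5}, with the verification that Conditions C4--C5 force Condition C3 on $\mathbf{Q}_{\text{1D}}$ deferred to \cite{CWLZC} and partly reproduced in Appendix~\ref{appendix-typeII-c3}) that you also identify as the crux. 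The only slip is minor and inconsequential: the $K-tL=K'-t$ users unable to retrieve a given packet correspond to the non-star count of a \emph{row} of $\mathbf{P}$ (a column has $F'-Z'$ non-stars), not a column.
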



By applying the transformation approach into the MN PDA, a 1D MACC scheme can be obtained as follows,  referred to as CWLZC 1D MACC scheme.
\begin{corollary}\rm(\emph{CWLZC 1D MACC scheme})
	\label{lem-line}
	Given a $\left({K'},{K' \choose t},{K'-1 \choose t-1},{K' \choose t+1}\right)$ MN PDA $\mathbf{P}$, there exists a 1D MACC scheme for $(K=K'+t(L-1),L,M,N)$ 1D MACC system where $t=\frac{KM}{N}=\frac{K'Z'}{F'}$, with the transmission load $R_{\text{1D}}=\frac{K-tL}{t+1}$
	\hfill $\square$
\end{corollary}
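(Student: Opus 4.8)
The plan is to specialize Theorem~\ref{lem-1D-tran} to the MN PDA and then simplify the resulting parameters. First I would verify that the MN PDA $\mathbf{P}$ with parameters $\left(K',\binom{K'}{t},\binom{K'-1}{t-1},\binom{K'}{t+1}\right)$, as given in Construction~\ref{con-MN}, satisfies the five conditions C1--C5 required by Theorem~\ref{lem-1D-tran}. Conditions C1--C3 are exactly the defining PDA axioms, which hold for the MN PDA by Construction~\ref{con-MN}. Condition C4 (each row has exactly $t$ stars) is immediate: the row indexed by $\mathcal{T}\in\binom{[K']}{t}$ has a star precisely in the columns $k\in\mathcal{T}$, of which there are $|\mathcal{T}|=t$. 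Condition C5 is the one structural claim that needs checking; as noted in the remark after Remark~\ref{remark-C5}, the MN PDA is explicitly among the PDAs that satisfy it, so I would either cite that assertion or give the short direct argument: if $\mathbf{P}(\mathcal{T}_1,k_1)=\mathbf{P}(\mathcal{T}_2,k_2)=s$ then $\mathcal{T}_1\cup\{k_1\}=\mathcal{T}_2\cup\{k_2\}=:\mathcal{S}$, and tracking the positions $i_1,i_2$ of $k_1,k_2$ within $\mathcal{S}$ one checks the required membership $k_1+(i_1-1)(L-1)\in\mathcal{U}_{\mathcal{T}_2}$ and symmetrically, using the definition \eqref{eq-user-index} of $\mathcal{U}_j$ as the union of length-$L$ blocks anchored at the shifted cache-node indices.

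Once C1--C5 are confirmed, Theorem~\ref{lem-1D-tran} applies verbatim and yields a $(K=K'+t(L-1),L,M,N)$ 1D MACC scheme with $t=\frac{KM}{N}=\frac{K'Z'}{F'}$ and transmission load $R_{\text{1D}}=\frac{S'}{F'}$. Plugging in the MN parameters $F'=\binom{K'}{t}$ and $S'=\binom{K'}{t+1}$ gives
\[
R_{\text{1D}}=\frac{\binom{K'}{t+1}}{\binom{K'}{t}}=\frac{K'-t}{t+1}.
\]
The last step is a pure bookkeeping substitution: since $K=K'+t(L-1)$, we have $K'=K-t(L-1)$, hence $K'-t=K-t(L-1)-t=K-tL$, and therefore $R_{\text{1D}}=\frac{K-tL}{t+1}$, which is exactly the claimed load. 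Also $t=\frac{K'Z'}{F'}=\frac{K'\binom{K'-1}{t-1}}{\binom{K'}{t}}=t$ is consistent, and one checks $\frac{M}{N}=\frac{t}{K}$ lies in the admissible range $[0,1/L]$ iff $t\le K/L$, i.e. $t\le \lfloor K/L\rfloor$ for integer $t$, matching the constraint under which the construction is stated.

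I do not anticipate any genuine obstacle here: this corollary is a direct instantiation, and the only non-mechanical point is Condition C5 for the MN PDA. The cleanest route is to invoke the already-stated fact that MN PDA satisfies C4 and C5 (stated right after Remark~\ref{remark-C5}) rather than reproving it; if a self-contained argument is preferred, the C5 verification reduces to a small case analysis on the relative order of $k_1$ and $k_2$ in the common set $\mathcal{S}=\mathcal{T}_1\cup\{k_1\}=\mathcal{T}_2\cup\{k_2\}$ of size $t+1$, combined with the block structure \eqref{eq:star group} of $\mathbf{U}_{\text{1D}}$. Everything else is the binomial identity $\binom{K'}{t+1}/\binom{K'}{t}=(K'-t)/(t+1)$ and the substitution $K'=K-t(L-1)$.
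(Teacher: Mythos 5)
Your proposal is correct and follows essentially the same route as the paper: the corollary is obtained by instantiating Theorem~\ref{lem-1D-tran} with the MN PDA (whose satisfaction of Conditions C$4$ and C$5$ the paper simply asserts alongside other known PDAs), then simplifying $R_{\text{1D}}=\binom{K'}{t+1}/\binom{K'}{t}=(K'-t)/(t+1)$ and substituting $K'=K-t(L-1)$ to get $\frac{K-tL}{t+1}$. Your optional direct verification of C$5$ is a reasonable addition but not something the paper carries out here.
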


\begin{example}\rm
	\label{ex-1D-CWLZC}
	We consider a $(K,L,M,N)=(5,2,6,15)$ 1D MACC scheme based on the $(K',F',Z',S')=(3,3,2,1)$ MN PDA $\mathbf{P}$. The construction of node-placement array $\mathbf{C}_{\text{v}}$, user-retrieve array $\mathbf{U}_{\text{v}}$, and user-delivery array $\mathbf{Q}_{\text{v}}$ in the first round is illustrated in Fig.~\ref{fig-line-macc}.
	\begin{figure}
		\centering
		\includegraphics[width=5in]{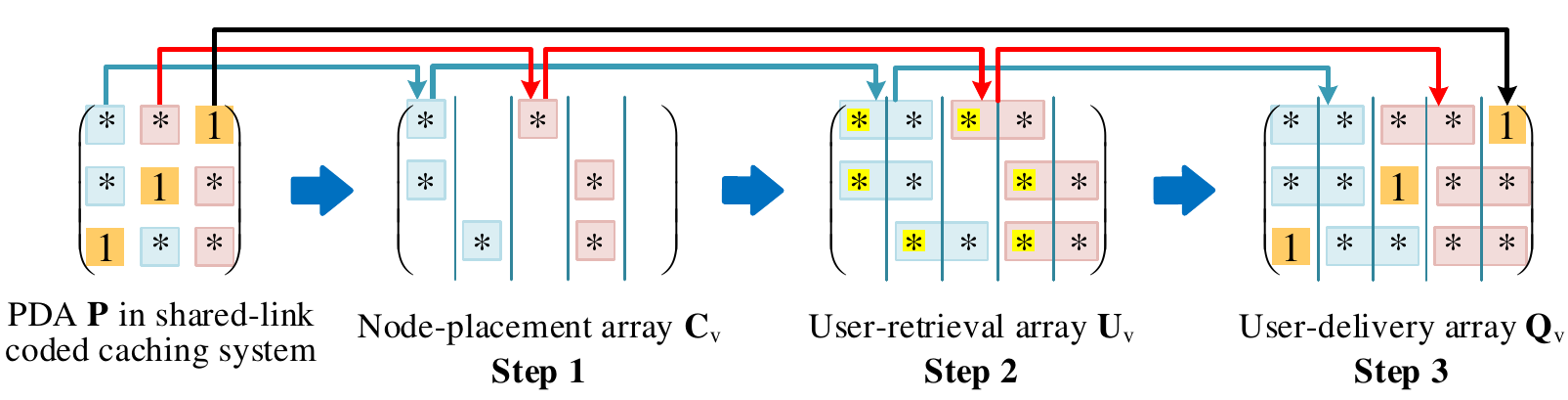}
		\caption{Transformation approach for MN PDA $\mathbf{P}$ to generate 1D MACC scheme}\label{fig-line-macc}
	\end{figure}
	\begin{itemize}
		\item The node-placement array $\mathbf{C}_{\text{v}}$ is constructed from $\mathbf{P}$. More precisely, in row $1$, the stars at columns $1$ and $2$ of $\mathbf{P}$ are corresponded to columns $1+(1-1)(2-1)=1$ and $2+(2-1)(2-1)=3$ of $\mathbf{C}_{\text{v}}$, respectively; in row $2$, the stars at columns $1$ and $3$ of $\mathbf{P}$ are corresponded to columns $1+(1-1)(2-1)=1$ and $3+(2-1)(2-1)=4$ of $\mathbf{C}_{\text{v}}$, respectively; in row $3$, the stars at columns $2$ and $3$ of $\mathbf{P}$ are corresponded to columns $2+(1-1)(2-1)=2$ and $3+(2-1)(2-1)=4$ of $\mathbf{C}_{\text{v}}$, respectively. By this construction, any two cache-nodes connected to the same users do not cache any common packets.
		\item The user-retrieve array $\mathbf{U}_{\text{v}}$ is constructed from $\mathbf{C}_{\text{v}}$ by the relationship between users and the accessible cache-nodes. More precisely, in row $1$, the first star at column $1$ of $\mathbf{C}_{\text{v}}$ is extended to the stars at columns $1$ and $2$ in the first group of $\mathbf{U}_{\text{v}}$, the second star at column $3$ of $\mathbf{C}_{\text{v}}$ is extended to the stars at columns $3$ and $4$ in the second group of $\mathbf{U}_{\text{v}}$; in row $2$, the first star at column $1$ of $\mathbf{C}_{\text{v}}$ is extended to the stars at columns $1$ and $2$ in the first group of $\mathbf{U}_{\text{v}}$, the second star at column $4$ of $\mathbf{C}_{\text{v}}$ is extended to the stars at columns $4$ and $5$ in the second group of $\mathbf{U}_{\text{v}}$;   in row $3$, the first star at column $2$ of $\mathbf{C}_{\text{v}}$ is extended to the stars at columns $2$ and $3$ in the first group of $\mathbf{U}_{\text{v}}$, the second star at column $4$ of $\mathbf{C}_{\text{v}}$ is extended to the stars at columns $4$ and $5$ in the second group of $\mathbf{U}_{\text{v}}$. Thus each packet can be retrieved by $tL=4$ users.
		\item The user-delivery array $\mathbf{Q}_{\text{v}}$ is constructed by filling the non-star entries in $\mathbf{U}_{\text{v}}$ according to the integers in $\mathbf{P}$. More precisely, in row $1$, the integer ``$1$" at column $3$ of $\mathbf{P}$ is filled at column $5$ of $\mathbf{Q}_{\text{v}}$; in row $2$, the integer ``$1$" at column $2$ of $\mathbf{P}$ is filled at column $3$ of $\mathbf{Q}_{\text{v}}$; in row $3$, the integer ``$1$" at column $1$ of $\mathbf{P}$ is filled at column $1$ of $\mathbf{Q}_{\text{v}}$.
	\end{itemize}
	After determining $\mathbf{C}_{\text{v}}$, $\mathbf{U}_{\text{v}}$, and $\mathbf{Q}_{\text{v}}$, we have the placement and delivery strategies for the $(K_1,L,M_1,N)=(5,2,6,15)$ 1D MACC system in the first round. Assume that the request vector is $\mathbf{d}=(1,2,\ldots,5)$, the server sends $W_{1,3}^{(1)}\oplus W_{3,2}^{(1)}\oplus W_{5,1}^{(1)}$ to the users. Then the overall transmission load is $R_{\text{1D}}=\frac{1\times5}{3\times5}=\frac{1}{3}$ which coincides with Corollary~\ref{lem-line}.
	\hfill $\square$
\end{example}

Another class of useful 1D MACC schemes for this paper are generated by using the transformation approach \cite{CWLZC} on the Partition PDA proposed in \cite{CJYT}.

\begin{corollary}\rm(\emph{1D MACC scheme based on Partition PDA})
	\label{lem-line-par}
	Given a $m$-$(mq,q^m,zq^{m-1},q^m(q-z))$ Partition PDA $\mathbf{H}=(\mathbf{H}_1,\mathbf{H}_2,\ldots,\mathbf{H}_m)$, there exists $m$ different 1D MACC schemes for  $(K=q,L=z,M=N/K,N)$ 1D MACC system.
	The overall transmission load for $m$ schemes is $R_{\text{1D}}=K-L$.
	\hfill $\square$
\end{corollary}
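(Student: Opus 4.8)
The plan is to read the $m$ schemes off directly from the $m$ sub-arrays $\mathbf{H}_1,\dots,\mathbf{H}_m$, after partitioning every file into $m$ equal parts and letting the $i$-th scheme serve the $i$-th part. The structural feature I would exploit is that in $\mathbf{H}_i$ the $z$ stars of each row $\mathbf{f}$ occupy the $z$ \emph{consecutive} columns $\mathcal{B}_{f_i}=\{f_i,\langle f_i+1\rangle_q,\dots,\langle f_i+z-1\rangle_q\}$, with the tag-star of Definition~\ref{rem-par} sitting at column $f_i$. Because of this, none of the ``star-spreading by gaps $L-1$'' of the general transformation of Section~\ref{subsect-line} is needed: since here $t=\frac{KM}{N}=1$, the user-retrieve array $\mathbf{U}_{\text{1D}}$ must carry exactly $L=z$ consecutive stars per row, which is already the shape of $\mathbf{H}_i$. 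So $\mathbf{H}_i$ will play at once the role of $\mathbf{U}_{\text{1D}}$ and, through its integer entries, that of the user-delivery array $\mathbf{Q}_{\text{1D}}$, while the node-placement array $\mathbf{C}_{\text{1D}}$ is obtained from the tag-stars alone.

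Concretely, in the $i$-th scheme I would split the $i$-th part of each file into $q^m$ packets indexed by $\mathbf{f}\in[q]^m$ and store packet $\mathbf{f}$ at cache-node $f_i$. The memory check is then immediate: cache-node $k$ stores the $q^{m-1}$ packets with $f_i=k$, i.e.\ a $1/q$ fraction of the $i$-th part, and summing over $i\in[m]$ each cache-node stores a $1/q$ fraction of every file, exactly $M=N/K=N/q$; in particular the $z$ accessible cache-nodes of any user store pairwise-disjoint content, so the local caching gain $1-zM/N$ is maximal. A user accessing $z$ consecutive cache-nodes retrieves precisely the part-$i$ packets whose $i$-th coordinate lies in those $z$ positions; after aligning the cyclic access rule $\langle k-k'\rangle_{K}<L$ of the model with the cyclic set $\mathcal{B}_{f_i}$, these are exactly the star positions of column $k$ in $\mathbf{H}_i$, so the non-star cells of column $k$ of $\mathbf{H}_i$ index precisely the part-$i$ packets that the user at position $k$ is missing. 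The server serves them through the integer labels: for each integer $s$ it broadcasts the XOR of the requested packets in the cells labelled $s$. I would then verify the two small facts that make this work: first, each integer occurs exactly once in $\mathbf{H}_i$, so $\mathbf{H}_i$ is a $1$-$(q,q^m,zq^{m-1},q^m(q-z))$ PDA; second, Condition C3 for $\mathbf{H}_i$ is inherited from the Partition PDA $\mathbf{H}$ (a column sub-array of a PDA still satisfies C1 and C3), which guarantees decodability of the assignment. Counting, the $i$-th scheme broadcasts $q^m(q-z)$ packets, each of size $1/(mq^m)$ of a file, hence has load $(q-z)/m$; summing over the $m$ schemes yields $R_{\text{1D}}=q-z=K-L$.

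The only step that needs genuine care is the cyclic alignment referred to above: fixing the labelling of cache-nodes and users so that, for every row $\mathbf{f}$, the $z$ consecutive columns $\mathcal{B}_{f_i}$ are exactly the $z$ users that can access the cache-node $f_i$ singled out by the tag-star, the wrap-around included. Once this bijection is pinned down, everything else — the memory count, the message count, and the XOR decodability — is routine, and the resulting construction is naturally phrased within the node-placement / user-retrieve / user-delivery array framework of~\cite{CWLZC} specialized to $t=1$ applied to the sub-arrays of the Partition PDA, which can be used to shortcut the formal correctness argument.
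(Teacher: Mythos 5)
Your construction is correct and is essentially the one the paper itself uses: the node-placement array of the $i$-th scheme is the set of tag-stars of $\mathbf{H}_i$ (one star per row, at column $f_i$, cf.\ Definition~\ref{rem-par}), the user-retrieve array is then the full star pattern $\mathcal{B}_{f_i}$ of $\mathbf{H}_i$ because the $z$ users $\{f_i,\langle f_i+1\rangle_q,\ldots,\langle f_i+z-1\rangle_q\}$ are exactly those that can access cache-node $f_i$, and the integer entries of $\mathbf{H}_i$ serve as the user-delivery array; this is precisely the content of Example~\ref{ex-par-1d-macc} and Fig.~\ref{fig-par-macc}, i.e., the transformation approach of \cite{CWLZC} specialized to $t=1$. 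The one place where you deviate is the bookkeeping of the load: you split every file into $m$ parts, let scheme $i$ deliver its own part uncoded (each integer occurs once in $\mathbf{H}_i$), and obtain $K-L$ as the sum of $m$ per-part loads $(q-z)/m$. In the paper, each of the $m$ schemes covers the whole library (cache-node $k$ stores the $q^{m-1}$ packets with $f_i=k$ of every file, which is exactly $M=N/q$ per scheme), and the ``overall load $K-L$'' expresses that the $m$ delivery arrays $\mathbf{H}_1,\ldots,\mathbf{H}_m$ are labelled by one common set of $q^m(q-z)$ integers, each occurring once per sub-array; it is this sharing, not the file-splitting, that Theorem~\ref{th-General-regular} exploits, since in Section~\ref{sec-proof of Theorem 1} the $m$ schemes occupy $m$ disjoint column blocks and every shared integer becomes a single multicast message serving $t=m$ users (the Type-I gain). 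Your argument proves the literal statement of the corollary, but to make it usable as the inner code of the hybrid scheme you should state explicitly that the same label set appears in all $m$ sub-arrays with exactly one occurrence each — a fact your own observation that ``each integer occurs exactly once in $\mathbf{H}_i$'' already contains, so no new work is needed, only a change of emphasis away from the per-part uncoded accounting.
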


\begin{example}\rm
\label{ex-par-1d-macc}
We consider two different $(K,L,M,N)=(3,2,5,15)$ 1D MACC schemes generated by using the transformation approach on $2$-$(6,9,6,9)$ Partition PDA $\mathbf{H}=(\mathbf{H}_1,\mathbf{H}_2)$. The node-placement arrays $\mathbf{E}_1$, $\mathbf{E}_2$, user-retrieve arrays $\mathbf{B}_1$, $\mathbf{B}_2$, and user-delivery arrays $\mathbf{H}_1$, $\mathbf{H}_2$ are illustrated in Fig.~\ref{fig-par-macc}.
\begin{itemize}
	\item The node-placement arrays $\mathbf{E}_1$, $\mathbf{E}_2$ consist of the \textit{tag-stars} (defined in Definition~\ref{rem-par}) in sub-Partition arrays $\mathbf{H}_1$, $\mathbf{H}_2$. Recall that, the row indices exactly indicate the positions of tag-stars where the first coordinate corresponds to the column index of stars in $\mathbf{E}_1$; the second coordinate corresponds to the column index of stars in $\mathbf{E}_2$.
	\item The user-retrieve arrays  $\mathbf{B}_1$, $\mathbf{B}_2$ are constructed from $\mathbf{E}_1$, $\mathbf{E}_2$ by the relationship between users and the accessible cache-nodes. More preciously, $\mathbf{B}_1$, $\mathbf{B}_2$ are generated by extending each star in $\mathbf{E}_1$, $\mathbf{E}_2$  to $L=2$ stars.
	For instance, in the row indexed by $(2,1)$, the star at column $2$ of $\mathbf{E}_1$ is extended to the stars at columns $2$ and $3$  of $\mathbf{B}_1$; the star at column $1$ of $\mathbf{E}_2$ is extended to the stars at columns $1$ and $2$  of $\mathbf{B}_2$.
	Notice that, $\mathbf{B}_1$, $\mathbf{B}_2$ have the same star entries as $\mathbf{H}_1$, $\mathbf{H}_2$ of Partition PDA $\mathbf{H}=(\mathbf{H}_1,\mathbf{H}_2)$ in Example~\ref{ex-par}.
	\item The user-delivery arrays  $\mathbf{H}_1$, $\mathbf{H}_2$ are constructed by filling the non-star entries in $\mathbf{B}_1$, $\mathbf{B}_2$.
	Since $\mathbf{B}_1$, $\mathbf{B}_2$ have the same star entries as sub-Partition PDAs $\mathbf{H}_1$, $\mathbf{H}_2$ in Fig~\ref{fig-par-inte}, $\mathbf{H}_1$, $\mathbf{H}_2$ are used to be user-delivery arrays.
\end{itemize}
\begin{figure}
	\centering
	\includegraphics[width=5.5in]{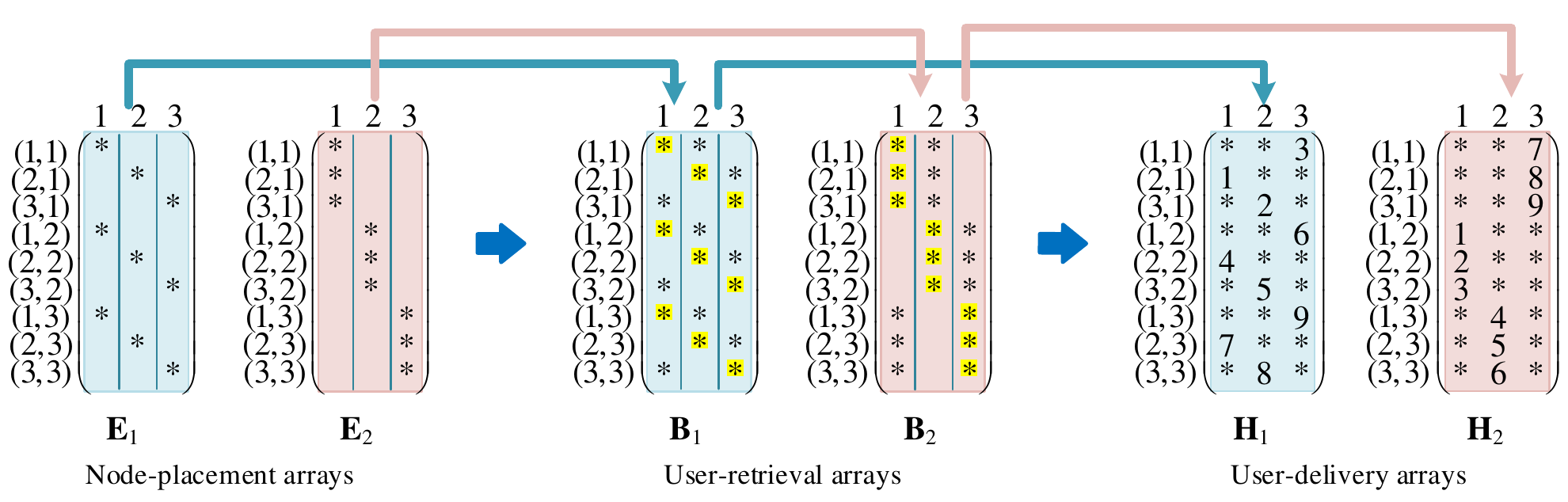}
	\caption{$(3,2,5,15)$ 1D MACC schemes generated by Partition PDA $\mathbf{H}=(\mathbf{H}_1,\mathbf{H}_2)$}\label{fig-par-macc}
\end{figure}\hfill $\square$
\end{example}

\section{System Model: 2D Multi-access Coded Caching Model}
\label{sect-system}
The new 2D MACC model considered in this paper, with parameters $(K_1,K_2,L,M,N)$, is given as follows. 
A server containing $N$ equal-length files is connected to  $K:=K_1\times K_2$  cache-less users through an error-free shared link. Without loss of generality, we assume that $K_1\geq K_2$. There are
also  $K$ cache-nodes, each of which  has a memory size of $M$ files where $0\leq M \leq \frac{N}{L^2}$.
The cache-nodes are placed in a $K_1\times K_2$ array (see Fig.~\ref{fig-grid-intru}), and at the position of each cache-node there is one user.
For any positive integers $k_1\in [K_1]$ and $k_2\in [K_2]$, the cache-node and the user   located at   row $k_1$ and column $k_2$ of the $K_1\times K_2$ array, are denoted by
 $C_{k_1,k_2}$ and $U_{k_1,k_2}$, respectively.
Each user $U_{k_1,k_2}$ can retrieve all the content cached by the cache-node $C_{k_1',k_2'}$,
  if and only if  the modular distances $<k_1-k_1'>_{K_1}$ and $<k_2-k_2'>_{K_2}$
%
are less than  $L$, i.e.,
\begin{eqnarray}
\label{eq-distance-grid}
\max\{<k_1-k_1'>_{K_1}, <k_2-k_2'>_{K_2} \}<L.
\end{eqnarray}
Each user is connected to $L^2$ neighboring cache-nodes in a cyclic wrap-around fashion. For instance, in Fig.~\ref{fig-grid-intru}  user $U_{2,2}$  can access $L^2=4$ neighboring cache-nodes $C_{1,1}$, $C_{1,2}$, $C_{2,1}$, $C_{2,2}$.
Similar with 1D MACC model, we assume that the users can retrieve the cache content of the connected cache-nodes without any cost.
A $(K_1,K_2,L,M,N)$ 2D MACC scheme consists of two phases,
\begin{itemize}
  \item {\bf Placement phase:}
  The server divides each file into $F$ packets with equal size. 
  For any positive integers $k_1\in [K_1]$ and $k_2\in[K_2]$, cache-node $C_{k_1,k_2}$ caches up to $MF$ packets of files. Each user $U_{k_1,k_2}$ can retrieve the packets stored by its connected cache-nodes. This phase  is  done  without knowledge of later requests.
  \item {\bf Delivery phase:} For any request vector $\mathbf{d}=(d_{1,1}, \ldots, d_{K_1,K_2})$ representing that user $U_{k_1,k_2}$ requests file $W_{d_{k_1,k_2}}$ where $k_1\in [K_1]$ and $k_2\in [K_2]$,  the server transmits $S_{{\bf d}}$ coded packets to users such that each user can decode its requested file.
\end{itemize}
Notice that, each column (or row) in the 2D MACC model is exactly a 1D MACC model in Section~\ref{subsect-line}.
The objective of the 2D MACC problem is to minimize the worst-case load
$R=\max\left\{\frac{ S_{\mathbf{d}}}{F}\ \Big|\ \mathbf{d}\in[N]^K\right\}$,
as defined in~\eqref{eq:def of load}.


\section{Main Results}
\label{sec-main result}

In this section, our new schemes for the 2D MACC network are presented. We first provide a baseline scheme by directly extending the CWLZC 1D MACC scheme. Then,  when $L|K_1$ and $L|K_2$,  we improve the baseline scheme by a grouping scheme. Next, for the more general case where $L<\min\{K_1,K_2\}$ (i.e., $L<K_2$), we propose a new transformation approach, which  constructs  a hybrid  2D MACC scheme from two classes of 1D MACC schemes.
 Thus, concatenating the new transformation approach with the transformation approach from   PDAs to 1D MACC schemes proposed in~\cite{CWLZC},
we obtain a transformation approach from PDAs to 2D MACC schemes.
 Finally, the performance analysis and construction examples of these two schemes are introduced.

\subsection{Proposed 2D MACC Schemes}
\label{subsec-main}
By directly using the CWLZC 1D MACC scheme in Corollary~\ref{lem-line} into the 2D model, we obtain the following baseline scheme.
\begin{theorem}\rm(\emph{Baseline Scheme})
\label{th-baseline}
For the $(K_1,K_2,L,M,N)$ 2D MACC problem, the lower convex envelope of the following memory-load tradeoff corner points is achievable,
\begin{itemize}
\item when $K_2\leq L$, \begin{align}
(M, R_1)= \left(\frac{N t}{K_1K_2}, \frac{K_1K_2-tLK_2}{t+1} \right), \ t\in\left[0: \left\lfloor\frac{K_1}{L}\right\rfloor\right], \label{eq:load baseline1}
\end{align}
and $(M,R_1)=\left(\frac{N}{K_2L},0\right)$;
\item when $K_2> L$,
\begin{align}
(M, R_1)= \left(\frac{N t}{K_1K_2}, \frac{K_1K_2-tL^2}{\gamma t+1} \right), \ t\in\left\{0,\frac{1}{\gamma},\ldots,\left\lfloor\frac{K_1}{L}\right\rfloor\frac{1}{\gamma}\right\},  \label{eq:load baseline2}
\end{align}%
where $\gamma=\frac{L}{K_2}$, and $(M,R_1)=\left(\frac{N}{L^2},0\right)$.
\end{itemize}
\hfill $\square$
\end{theorem}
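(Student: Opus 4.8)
The plan is to realize the baseline scheme as the \emph{parallel composition of $K_2$ one–dimensional MACC schemes}, one per column of the $K_1\times K_2$ grid, glued together by an MDS precode that lets every user exploit its access to several columns simultaneously. Fix a corner point, i.e. a value of $t$ as in the statement, and put $\ell:=\min\{K_2,L\}$. The first step is the precoding: split each file $W_n$ into $K_2$ coded chunks $W_n^{(1)},\dots,W_n^{(K_2)}$ of relative size $1/\ell$, produced by an MDS code of length $K_2$ and dimension $\ell$ (a genuine MDS code when $K_2>L$, and just the partition of $W_n$ into $K_2$ equal parts when $K_2\le L$). The property we will use is that any $\ell$ cyclically consecutive chunks determine $W_n$; for $K_2\le L$ this means all $K_2$ chunks, which is automatic.

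The second step is the structural observation that column $j\in[K_2]$ --- the cache-nodes $C_{1,j},\dots,C_{K_1,j}$ and users $U_{1,j},\dots,U_{K_1,j}$ under the access rule \eqref{eq-distance-grid} --- is itself a $(K_1,L,M,\cdot)$ 1D MACC system, in which ``position $k_1$'' sees exactly the row window $[k_1-L+1:k_1]_{K_1}$ of that column, and this window depends only on $k_1$, not on the column. On each column $j$ I would run the CWLZC 1D MACC scheme of Corollary~\ref{lem-line} with library $\{W_n^{(j)}\}_{n\in[N]}$ (chunk size $1/\ell$); its internal parameter is $t'=K_1M\ell/N$, and substituting $M=\tfrac{Nt}{K_1K_2}$ gives $t'=t$ when $K_2\le L$ and $t'=\gamma t$ with $\gamma=L/K_2$ when $K_2>L$ --- precisely the integer forced by the admissible ranges of $t$ in the statement, and the reason the denominators come out as $t+1$ and $\gamma t+1$. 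The $K_2$ placement phases do not interfere because the columns partition the $K_1K_2$ cache-nodes, and each cache-node uses exactly its $M$ units.

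For delivery, note that $U_{k_1,k_2}$ accesses column $j$ exactly when $j\in[k_2-L+1:k_2]_{K_2}$, a block of $\ell$ cyclically consecutive columns, and for each such $j$ this user is indistinguishable from ``position $k_1$'' of the $j$-th column sub-scheme (same cached content, same received broadcast). Hence in column $j$ I would invoke the CWLZC delivery routine once for each of the $\ell$ user-columns that touch column $j$, with the corresponding length-$K_1$ demand vector $(d_{1,k_2},\dots,d_{K_1,k_2})$; then every $U_{k_1,k_2}$ decodes the $\ell$ chunks $\{W^{(j)}_{d_{k_1,k_2}}:j\in[k_2-L+1:k_2]_{K_2}\}$ and recovers $W_{d_{k_1,k_2}}$ by the MDS property. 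Each such round on column $j$ costs $\tfrac{K_1-t'L}{t'+1}$ in units of the chunk size, i.e. $\tfrac{1}{\ell}\cdot\tfrac{K_1-t'L}{t'+1}$ in file units; summing over the $\ell$ rounds and the $K_2$ columns gives $R_1=K_2\cdot\tfrac{K_1-t'L}{t'+1}$, which is $\tfrac{K_1K_2-tLK_2}{t+1}$ when $K_2\le L$ and, using $\gamma K_2=L$, $\tfrac{K_1K_2-tL^2}{\gamma t+1}$ when $K_2>L$. The case $t=0$ is the trivial ``transmit every demanded file'' scheme with $R_1=K_1K_2$; the right-hand endpoint $\big(\tfrac{N}{K_2L},0\big)$ (resp.\ $\big(\tfrac{N}{L^2},0\big)$) holds because at that memory the $K_2L$ (resp.\ $L^2$) cache-nodes accessible to any user have total size $N$, so a non-overlapping placement stores the whole library among them --- and it coincides with the $t=\lfloor K_1/L\rfloor$ corner point when $L\mid K_1$. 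The lower convex envelope then follows from the usual memory-sharing (split each file and apply the two neighbouring schemes to the two parts).

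The step I expect to be the main obstacle is the delivery accounting: one must check at once that every user can legitimately impersonate a 1D-MACC user in each column it touches (this is exactly what the column-independence of the row window buys us) and that $\ell$ chunks always invert the precode, while tracking that the $1/\ell$ chunk normalization and the $\ell$-fold repetition of the delivery per column cancel to $K_2\cdot\tfrac{K_1-t'L}{t'+1}$ rather than leaving a spurious factor $K_2$. A secondary subtlety is the exact achievability of the $R_1=0$ endpoint when $L\nmid K_1$ (or $L\nmid K_2$), where one needs a direct non-overlapping placement rather than the column sub-scheme evaluated at $t'=K_1/L$.
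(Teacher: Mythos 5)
Your construction is essentially the paper's own proof of Theorem~\ref{th-baseline}: split each file into $K_2$ chunks (plain partition when $K_2\le L$, a $[K_2,L]$ MDS precode when $K_2> L$), run the CWLZC 1D MACC scheme of Corollary~\ref{lem-line} on each cache-node column with internal parameter $t'=t$ (resp.\ $t'=\gamma t$), repeat each column's delivery once per user-column that reaches it ($K_2^2$ resp.\ $K_2L$ rounds), and normalize by the chunk size, which is exactly the paper's load computation and decodability argument. The only cosmetic differences are your unified $\ell=\min\{K_2,L\}$ notation and your explicit caveat about the zero-load endpoint when $L\nmid K_1$ (which the paper merely asserts); that endpoint is settled by an MDS-coded placement of size $1/(K_2L)$ (resp.\ $1/L^2$) per file over the cache-nodes accessible to each user.
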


\begin{proof}
	Assume that in a $(K_1,K_2,L,M,N)$ 2D MACC system, each file has $F$ packets where $L$ and $K_2$ divide $F$, and each packet has enough bits such that any field extension can be operated. 
	The $K_1\times K_2$ 2D topology can be divided into $K_2$ columns, such that each column can be regarded as a 1D MACC system with $K_1$ cache-nodes and users, where each user can access $L$ neighboring  cache-nodes.
	
	We first consider the case  $K_2\leq L$.
	\begin{itemize}
		\item \textbf{Placement Phase}. We divide each file into $K_2$ subfiles with equal length, i.e., $W_n=\Big\{W^{(k_2)}_{n} \ |$ $ k_2\in[K_2]\Big\}$ where $n\in[N]$. Each subfile has $\frac{F}{K_2}$ packets. For each integer $k_2\in[K_2]$, denote the set of all the $k_2^{\text{th}}$ subfiles by $\mathcal{W}^{(k_2)}=\{W^{(k_2)}_{n}\ |\  n\in[N]\}$. Then, the server places $\mathcal{W}^{(k_2)}$ into the cache-nodes in the $k^{\text{th}}_2$ column by using the placement strategy of $(K_1,L,M_1=K_2M,N)$ CWLZC 1D MACC scheme in Corollary~\ref{lem-line}. Each cache-node totally caches $M_1\cdot\frac{F}{K_2}=K_2M\cdot\frac{F}{K_2}=MF$ packets.
		
		\item \textbf{Delivery Phase}. Given any demand vector $\mathbf{d}$, the server sends the coded subfiles of $\mathcal{W}^{(k_2)}$ to the users in the $k^{\text{th}}_2$ column by using the delivery strategy of $(K_1,L,M_1,N)$ CWLZC 1D MACC scheme, for each $k_2\in [K_2]$.  Since the server uses the delivery strategy of $(K_1,L,M_1,N)$ CWLZC  scheme exactly $K^2_2$ times, from Corollary~\ref{lem-line}, the transmission load is
		\begin{eqnarray*}
			R_1=K_2^2\cdot\frac{K_1(1-\frac{M_1L}{N})}{\frac{K_1 M_1}{N}+1}\cdot \frac{1}{K_2}=\frac{K_1K_2-tLK_2}{t+1},
		\end{eqnarray*} where $ t=\frac{K_1M_1}{N} \in \left[0: \left\lfloor\frac{K_1}{L}\right\rfloor\right]$.
		
		\item \textbf{Decodability}. In the 2D MACC system, each user can access all the $K_2$ cache-nodes in each row which cache $K_2$ different subfiles. Hence, each user can totally obtain $K_2$ subfiles of each file from the placement and delivery phases, such that it can decode its desired file.
	\end{itemize}
	
	Similar to the above case, the scheme for the case $K_2> L$ can be obtained as follows.  In the placement phase, each file $W_n$ where $n\in[N]$ is divided into $L$ non-overlapping and equal-length subfiles, which are then encoded into $K_2$ subfiles by a $[K_2,L]$ MDS code, i.e., $\widetilde{W}_n=\left\{\widetilde{W}^{(k_2)}_{n}\ |\ k_2\in[K_2]\right\}$. Each MDS-coded subfile has $\frac{F}{L}$ packets. The server places $\{\widetilde{W}^{(k_2)}_n\ |\ n\in[N]\}$ to the cache-nodes in the $k^{\text{th}}_2$ column by using the $(K_1,L,M_1'=LM,N)$ CWLZC 1D MACC scheme.  Each cache-node caches $M_1'\cdot\frac{F}{L}=LM\cdot\frac{F}{L}=MB$ packets, satisfying the memory size constraint.  In the delivery phase, for each $l\in[L]$, the server sends the required subfiles of $\{W^{(l)}_{n}\ |\  n\in[N]\}$ to the users in each column by using the  $(K_1,L,M_1',N)$ CWLZC 1D MACC scheme.  Since the server uses the delivery strategy of $(K_1,L,M_1',N)$ CWLZC scheme exactly $K_2L$ times, the transmission load is
	\begin{eqnarray*}
		R_1&=&K_2L\cdot\frac{K_1(1-\frac{L^2M}{N})}{\frac{K_1LM}{N}+1}\cdot \frac{1}{L}=\frac{K_1K_2-tL^2}{\frac{L}{K_2} t+1}=\frac{K_1K_2-tL^2}{\gamma t+1},
	\end{eqnarray*} where $t=\frac{K_1K_2M}{N} \in \{0,\frac{K_2}{L},\ldots,\lfloor\frac{K_1}{L}\rfloor \frac{K_2}{L}\}$ and $\gamma=\frac{L}{K_2}$.   In the 2D MACC system, each user can retrieve $L$ different subfiles which are stored in $L$ neighboring cache-nodes in each row. By the property of MDS code, each file could be recovered from any of its $L$ subfiles; thus the demand of each user is satisfied.
\end{proof}

Note that,  when $K_2>L$ 
 the coded caching gain of the scheme in Theorem \ref{th-baseline} is always less than $t$.  To improve this scheme,
 when $L|K_1$ and $L|K_2$, we can divide the cache-nodes into $L^2$ non-overlapping groups, each of which has $\frac{K_1K_2}{L^2}$ cache-nodes. By using the MN scheme for each group, we have the following scheme whose coded caching gain is $t+1$, whose proof could be found in Section~\ref{sec-proof of Theorem Group}.
\begin{theorem}\rm\emph{(Grouping Scheme)}
\label{th-Group}
For the $(K_1,K_2,L,M,N)$ MACC problem, when $L|K_1$ and $L|K_2$, the lower convex envelope of the following memory-load tradeoff corner points is achievable,
\begin{align}
(M, R_2)= \left(\frac{N t}{K_1K_2}, \frac{K_1K_2-tL^2}{t+1} \right), \ \forall t \in \left[0:\frac{K_1K_2}{L^2}\right].  \label{eq:load R1}
\end{align}
\hfill $\square$
\end{theorem}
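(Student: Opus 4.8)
The plan is to turn each of the $L^{2}$ cache-node groups into a coded caching system with shared caches and to run the Maddah--Ali--Niesen (MN) scheme inside each group on its own $1/L^{2}$-th share of the library. Concretely: because $L\mid K_{1}$ and $L\mid K_{2}$, grouping the cache-nodes according to the pair of residues $(\langle k_{1}\rangle_{L},\langle k_{2}\rangle_{L})$ gives a genuine partition into $L^{2}$ groups $\mathcal{G}_{a,b}$, $(a,b)\in[L]^{2}$, each containing $\Lambda:=K_{1}K_{2}/L^{2}$ cache-nodes; I partition the users in exactly the same way into $L^{2}$ user-groups. I split every file into $L^{2}$ equal subfiles, $W_{n}=\{W_{n}^{(a,b)}:(a,b)\in[L]^{2}\}$, and let group $\mathcal{G}_{a,b}$ alone be responsible for storing and delivering the subfiles of type $(a,b)$.

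The crucial structural point is that, once we restrict attention to the type-$(a,b)$ subfiles, $\mathcal{G}_{a,b}$ behaves exactly like a caching system with $\Lambda$ caches and $K_{1}K_{2}$ users in which each cache is shared by $L^{2}$ users. Indeed, a user $U_{k_{1},k_{2}}$ accesses the cache-nodes whose row index runs over $L$ cyclically consecutive values and whose column index runs over $L$ cyclically consecutive values; since $L\mid K_{1}$ and $L\mid K_{2}$, any such window of $L$ consecutive indices meets each residue class modulo $L$ in exactly one point. Hence $U_{k_{1},k_{2}}$ accesses exactly one cache-node of $\mathcal{G}_{a,b}$, and dually each cache-node of $\mathcal{G}_{a,b}$ is accessed by exactly $L^{2}$ users, one from each user-group; in particular two distinct users belonging to a common user-group never access a common cache-node.

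On each $\mathcal{G}_{a,b}$ I would then run the MN scheme with parameter $t$: split each type-$(a,b)$ subfile into $\binom{\Lambda}{t}$ packets indexed by the $t$-subsets of $\mathcal{G}_{a,b}$, and let a cache-node store the packets whose index contains it; a direct count shows each cache-node then stores $\tfrac{t}{\Lambda}\cdot\tfrac{N}{L^{2}}=\tfrac{tN}{K_{1}K_{2}}$ files' worth of data, i.e.\ exactly $M$. For delivery, for every $(t+1)$-subset $\mathcal{S}$ of $\mathcal{G}_{a,b}$ and every one of the $L^{2}$ ``user slots'' (a slot picking out, in a fixed way, one of the $L^{2}$ users at each cache-node of $\mathcal{S}$), the server broadcasts the XOR of the $t+1$ requested type-$(a,b)$ packets indexed by the sets $\mathcal{S}\setminus\{\lambda\}$, $\lambda\in\mathcal{S}$; as in the classical MN argument, the user occupying that slot at cache $\lambda$ has cached, through $\lambda$, every term of this XOR except its own, so it recovers all type-$(a,b)$ packets of its demand, and, repeating over all $(a,b)$, recovers its whole file. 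Since $\mathcal{G}_{a,b}$ transmits $\binom{\Lambda}{t+1}L^{2}$ messages, each of size $\tfrac{1}{L^{2}\binom{\Lambda}{t}}$ of a file, summing over the $L^{2}$ groups yields
\[
R_{2}=L^{2}\,\frac{\binom{\Lambda}{t+1}}{\binom{\Lambda}{t}}=L^{2}\,\frac{\Lambda-t}{t+1}=\frac{K_{1}K_{2}-tL^{2}}{t+1},\qquad t\in\Big[0:\tfrac{K_{1}K_{2}}{L^{2}}\Big],
\]
which are exactly the claimed corner points; the lower convex envelope follows by the usual memory-sharing between consecutive values of $t$.

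I expect the structural step to be the main obstacle: one must make the modular-arithmetic bookkeeping airtight --- that each $\mathcal{G}_{a,b}$ really decouples into a clean shared-cache instance, with exactly $L^{2}$ users per cache and with no two users of a user-group sharing a cache-node --- since this is precisely where $L\mid K_{1}$ and $L\mid K_{2}$ are needed, and the analogous statement fails when $L\nmid K_{i}$. A minor but necessary check is that the worst-case demand vector never forces more than $\binom{\Lambda}{t+1}L^{2}$ transmissions per group, so that the stated load is attained for every demand.
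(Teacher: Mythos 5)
Your proposal is correct and matches the paper's own proof in all essentials: the same residue-class partition of cache-nodes and users into $L^2$ groups, the same split of each file into $L^2$ subfile types handled by the corresponding group via the MN scheme with $\widehat{K}=K_1K_2/L^2$ and $\widehat{M}=L^2M$, and the same load count $L^2\binom{\Lambda}{t+1}/\binom{\Lambda}{t}=\frac{K_1K_2-tL^2}{t+1}$. The only difference is bookkeeping — you iterate over cache-node groups and ``user slots'' while the paper iterates over user groups and time slots indexed by cache-node groups — which yields exactly the same set of multicast transmissions.
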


Next, we will propose a highly non-trivial hybrid construction for the case $K_2>L$. 
This construction is consisted of outer and inner structures,  which is built on  a transformation approach from two classes of 1D MACC schemes (for 1D MACC systems in vertical and horizontal projections of the 2D MACC system, respectively) to a 2D MACC scheme.
As the outer structure (i.e., the vertical projection of 2D system), we could choose any $(K_1,L,M_1=K_2M,N)$ 1D MACC scheme  from  the transformation approach in Theorem~\ref{lem-1D-tran}.
As the inner structure  (i.e., the horizontal projection of 2D system),
  we choose $\frac{K_1K_2M}{N}$ different   $(K_2,L,M_2=N/K_2,N)$ 1D MACC schemes from the transformation approach   on Partition PDA in Corollary~\ref{lem-line-par}. 
\begin{theorem}\rm\emph{(Hybrid Scheme)}
\label{th-General-regular}
	For the $(K_1,K_2,L,M,N)$ MACC problem with $K_2>L$,
	given any $(K_1',F_1',Z_1',S_1')$ PDA (satisfying Conditions C$1$-C$5$) where $K_1'=K_1-t(L-1)$,
	and a $t$-$(tK_2,K_2^t,LK_2^t,K_2^t(K_2-L))$ Partition PDA,
	the lower convex envelope of the following memory-load tradeoff corner points is achievable,
	\begin{eqnarray}
	\begin{split}
	(M, R_3)=& \left(\frac{N t}{K_1K_2}, \frac{K_2tL-tL^2}{t}+\frac{K_2S_1'}{F_1'} \right), \ \forall t\in \left[ \left\lfloor \frac{K_1}{L} \right\rfloor \right],  \label{eq:grid-load-R3}
	\end{split}
	\end{eqnarray}
	and $(M,R_3)=(0, K_1K_2)$, $(M,R_3)=\left(\frac{N}{L^2},0\right)$.
	\hfill $\square$
\end{theorem}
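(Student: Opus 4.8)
The plan is to give an explicit construction of the hybrid $(K_1,K_2,L,M,N)$ 2D MACC scheme at memory $M=\frac{Nt}{K_1K_2}$ for $t\in\left[\left\lfloor K_1/L\right\rfloor\right]$, and then verify its cache size, its decodability, and that its load equals $L(K_2-L)+\frac{K_2S_1'}{F_1'}=\frac{K_2tL-tL^2}{t}+\frac{K_2S_1'}{F_1'}$. First I would set up the two ingredient structures. The \emph{outer structure} is the $(K_1,L,M_1=K_2M,N)$ 1D MACC scheme obtained by applying the transformation of Theorem~\ref{lem-1D-tran} to the given $(K_1',F_1',Z_1',S_1')$ PDA (which satisfies C1--C5 with $K_1'=K_1-t(L-1)$ and $\frac{K_1'Z_1'}{F_1'}=t$); it produces, in its first round, a node-placement array $\mathbf{C}_{\text{v}}$, a user-retrieve array $\mathbf{U}_{\text{v}}$ and a user-delivery array $\mathbf{Q}_{\text{v}}$ over the $K_1$ cache-rows of the grid, in which each outer packet sits in exactly $t$ cache-rows (one per star of the underlying PDA row, pairwise at cyclic distance $\ge L$) and is retrieved by $t$ blocks of $L$ consecutive cache-rows. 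The \emph{inner structure} consists of the $t$ distinct $(K_2,L,M_2=N/K_2,N)$ 1D MACC schemes obtained by applying Corollary~\ref{lem-line-par} to the given $t$-$(tK_2,K_2^t,LK_2^t,K_2^t(K_2-L))$ Partition PDA $\mathbf{H}=(\mathbf{H}_1,\ldots,\mathbf{H}_t)$, where the $i$-th scheme is governed by $\mathbf{H}_i$ and its tag-stars drive the node-placement inside each cache-column block.

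Second, I would describe the product file-splitting and placement. Each file is first cut along the outer structure into (round, outer-packet) pieces, and the portion of each such piece that the outer structure assigns to its $i$-th star is further cut along the $i$-th inner structure into (inner-round, inner-packet of $\mathbf{H}_i$) pieces; the subpacketization is the product of the outer and inner ones. Cache-node $C_{k_1,k_2}$ stores a packet exactly when (i) cache-row $k_1$ is one of the $t$ rows holding the corresponding outer packet --- say the $i$-th --- and (ii) cache-column $k_2$ is a star of $\mathbf{H}_i$ in the row indexing that packet's inner coordinate. The memory check is then immediate: $\frac{M}{N}=\frac{M_1}{N}\cdot\frac{M_2}{N}=\frac{t}{K_1}\cdot\frac{1}{K_2}=\frac{t}{K_1K_2}$; moreover the scheme attains the maximum local caching gain because two cache-nodes holding a common outer packet either lie in rows at cyclic distance $\ge L$ (hence co-accessible by no user) or lie in the same row, where the tag-star placement of $\mathbf{H}_i$ already guarantees disjoint content among columns at distance $<L$.

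Third, I would specify the delivery as two families of multicast XORs and prove decodability. The first (``inner'') family resolves the horizontal multi-access gap within the content that the outer structure keeps in caches: inside each outer round and each of the $t$ star-blocks one runs the delivery of the appropriate inner 1D MACC scheme, and the total cost of this over all outer rounds and blocks sums to $L(K_2-L)$. The second (``outer'') family takes each multicast symbol $s\in[S_1']$ of $\mathbf{Q}_{\text{v}}$ and refines it through the inner structure, multiplying the number of messages by $K_2$, for a cost of $\frac{K_2S_1'}{F_1'}$. Decodability of a transmitted XOR reduces to a Condition-C3 type statement for the composite user-delivery array: every user served by the XOR must, through its $L^2$ accessible cache-nodes, already hold all of the other summands. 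For the outer family this is exactly where Condition C5 of the outer PDA enters (it is what keeps $\mathbf{Q}_{\text{v}}$ a valid multi-access delivery array after the $(L-1)$-expansion and under the cyclic shifts), and for the inner family this is where the tag-star structure of the Partition PDA enters; the two checks do not interfere because one acts on the row coordinate of the grid and the other on the column coordinate. I expect this simultaneous bookkeeping --- following a single XOR across the cyclic shifts of the outer scheme and across the tag-star indexing of $\mathbf{H}_1,\ldots,\mathbf{H}_t$ at once --- to be the main technical obstacle.

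Finally, I would assemble the load as $R_3=L(K_2-L)+\frac{K_2S_1'}{F_1'}=\frac{K_2tL-tL^2}{t}+\frac{K_2S_1'}{F_1'}$ at $M=\frac{Nt}{K_1K_2}$, adjoin the trivial corner points $(M,R_3)=(0,K_1K_2)$ (uncoded delivery of each user's file) and $(M,R_3)=\left(\frac{N}{L^2},0\right)$ (the $L^2$ cache-nodes of any user jointly store the entire library), and invoke memory-sharing across these schemes to conclude that the lower convex envelope of all the listed corner points is achievable, which is the statement of the theorem.
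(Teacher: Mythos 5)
Your architecture coincides with the paper's: outer structure from the given PDA via the 1D transformation, inner structure from the $t$ sub-arrays $\mathbf{H}_1,\ldots,\mathbf{H}_t$ of the Partition PDA with placement driven by their tag-stars, product subpacketization with cyclic outer rounds, and two families of multicast messages whose loads you correctly target as $L(K_2-L)$ and $\frac{K_2S_1'}{F_1'}$. However, there is a genuine gap exactly at the step you defer as ``the main technical obstacle'': you never specify which Type II (outer-family) null entries are XORed together, i.e., how a single outer symbol $s$ of $\mathbf{Q}_{\text{v}}$ is refined into only $K_2F_2$ messages each still serving $g_s$ users spread over \emph{different} column-blocks and \emph{different} inner packet indices. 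Without that grouping, the only thing your construction guarantees for the outer family is gain $1$, which gives load $(K_1-tL)K_2$ instead of $\frac{K_2S_1'}{F_1'}$; so the claimed cost is unsubstantiated. In the paper this is precisely the content of \eqref{eq-Q}--\eqref{eq-e}: the message label $(s,\mathbf{e})$ with $\mathbf{e}\in[K_2]^{t+1}$ places the served user's column index $k_2$ at the position $h$ determined by the outer ordering $\mathcal{S}_s[h]=k_1$, and arranges the remaining coordinates of the inner row index $\mathbf{f}$ according to the map $\varphi_j$ in \eqref{eq-map-k1-i} recording which $\mathbf{H}_i$ occupies which outer star in row $j$; decodability (Lemma~\ref{lem-typeII-C3}, proved via the tag-stars of Definition~\ref{rem-par}) and the message count (Lemma~\ref{lem-typeII-SII}) both hinge on this labeling.

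Relatedly, your assertion that the outer and inner checks ``do not interfere because one acts on the row coordinate of the grid and the other on the column coordinate'' is where a naive execution would fail: the two coordinates are coupled, since the position of $k_2$ inside the label depends on the outer structure (the order of $k_1$ in $\mathcal{S}_s$) and the interpretation of the remaining coordinates depends on $\varphi_j$, which differs across the $g_s$ rows participating in the same message; resolving this coupling is the theorem's actual content, not a routine bookkeeping step. A smaller imprecision of the same kind occurs in your Type I accounting: the load $L(K_2-L)$ requires XORing across the $t$ star-blocks of the same outer row (the paper reuses the same shift $v$ in all $t$ blocks so that $(\mathbf{H}_1+vK_2^t(K_2-L),\ldots,\mathbf{H}_t+vK_2^t(K_2-L))$ is again a Partition PDA with gain $t$); running the inner deliveries independently ``inside each of the $t$ star-blocks,'' as your wording suggests, would cost $tL(K_2-L)$. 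Finally, note that the paper's inner index is a single $\mathbf{f}\in[K_2]^t$ shared by all $t$ star rows (no separate per-star or inner-round splitting), which is what makes the joint label $\mathbf{e}$ possible; your per-star ``(inner-round, inner-packet of $\mathbf{H}_i$)'' splitting should be replaced by this common indexing.
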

The proof of  Theorem  \ref{th-General-regular} could be found  in Section \ref{sec-proof of Theorem 1}.

When the 1D MACC scheme  for the outer structure is generated from the MN PDA in Corollary~\ref{lem-line}, by applying the novel transformation approach in Theorem~\ref{th-General-regular}, the following result can be directly obtained.
\begin{theorem}\rm\emph{(Hybrid Scheme via MN PDA)}
\label{th-General}
For the $(K_1,K_2,L,M,N)$ MACC problem with $K_2>L$,
given a $(t+1)$-$\left(K_1',{K_1'\choose t},{K_1'-1\choose t-1},{K_1'\choose t+1}\right)$ MN PDA where $K_1'=K_1-t(L-1)$,
and a $t$-$(tK_2,K_2^t,LK_2^t,K_2^t(K_2-L))$ Partition PDA,
the lower convex envelope of the following memory-load tradeoff corner points is achievable,
\begin{eqnarray}
\begin{split}
(M, R_4)=& \left(\frac{N t}{K_1K_2}, \frac{K_2tL-tL^2}{t}+\frac{K_1K_2-K_2tL}{t+1} \right), \ \forall t\in \left[ \left\lfloor \frac{K_1}{L} \right\rfloor \right],  \label{eq:grid-load-R}
\end{split}
\end{eqnarray}
 and $(M,R_4)=(0, K_1K_2)$, $(M,R_4)=\left(\frac{N}{L^2},0\right)$.
\hfill $\square$
\end{theorem}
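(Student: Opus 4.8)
The plan is to derive Theorem~\ref{th-General} as an immediate corollary of Theorem~\ref{th-General-regular}, by instantiating the outer $(K_1',F_1',Z_1',S_1')$ PDA with the MN PDA of Construction~\ref{con-MN} on $K_1'=K_1-t(L-1)$ columns. With row-sharing parameter $t$, that construction gives a $(t+1)$-$\left(K_1',{K_1'\choose t},{K_1'-1\choose t-1},{K_1'\choose t+1}\right)$ PDA, so one reads off $F_1'={K_1'\choose t}$ and $S_1'={K_1'\choose t+1}$. Since $t\in\left[\left\lfloor K_1/L\right\rfloor\right]$ implies $K_1'=K_1-tL+t\ge t$, this PDA is well defined.

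The first step is to confirm that the MN PDA meets the hypotheses C1--C5 of Theorem~\ref{th-General-regular}. Conditions C1--C3 hold because it is a PDA; Condition C4 is immediate from~\eqref{Eqn_Def_AN}, since the row indexed by $\mathcal{T}\in{[K_1']\choose t}$ carries stars exactly in the columns of $\mathcal{T}$, hence exactly $t$ of them; and Condition C5 for the MN PDA is precisely the property verified in~\cite{CWLZC} (the same property underlying Corollary~\ref{lem-line}), so I would simply cite it rather than reprove it. The second step is then to plug this PDA, together with the $t$-$(tK_2,K_2^t,LK_2^t,K_2^t(K_2-L))$ Partition PDA as inner structure, into Theorem~\ref{th-General-regular}, which yields a $(K_1,K_2,L,M,N)$ 2D MACC scheme at $M=\frac{Nt}{K_1K_2}$ with load $R_3=\frac{K_2tL-tL^2}{t}+\frac{K_2S_1'}{F_1'}$.

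The only remaining work is the arithmetic: using ${K_1'\choose t+1}\big/{K_1'\choose t}=(K_1'-t)/(t+1)$ and $K_1'=K_1-t(L-1)$, the second term simplifies to $\frac{K_2(K_1'-t)}{t+1}=\frac{K_1K_2-K_2tL}{t+1}$, which turns $R_3$ into exactly the claimed $R_4$. The two extreme corner points $(M,R_4)=(0,K_1K_2)$ and $(M,R_4)=\left(\frac{N}{L^2},0\right)$ are inherited verbatim from Theorem~\ref{th-General-regular}, and memory-sharing gives the lower convex envelope of all these points. I do not expect any genuine obstacle here: the content of Theorem~\ref{th-General} is entirely contained in Theorem~\ref{th-General-regular} once the MN PDA parameters are substituted, so the ``hard part'' is only bookkeeping --- chiefly making sure Condition C5 is cited correctly and that the binomial ratio is simplified without slips.
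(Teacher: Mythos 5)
Your proposal is correct and matches the paper's own route: the paper obtains Theorem~\ref{th-General} precisely by instantiating the outer PDA of Theorem~\ref{th-General-regular} with the $(t+1)$-$\left(K_1',{K_1'\choose t},{K_1'-1\choose t-1},{K_1'\choose t+1}\right)$ MN PDA (whose compliance with Conditions C4--C5 is cited from~\cite{CWLZC}) and simplifying $\frac{K_2 S_1'}{F_1'}=\frac{K_2(K_1'-t)}{t+1}=\frac{K_1K_2-K_2tL}{t+1}$. Your bookkeeping, including the extreme corner points and the memory-sharing step, agrees with the paper's derivation.
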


\begin{remark}\rm\emph{(Local caching gain and coded caching gain in Theorem~\ref{th-General})}
The local caching gain of the hybrid scheme in Theorem \ref{th-General}  is $1-\frac{L^2M}{N}$, which is the same as the local caching gains of the baseline scheme and the grouping scheme in Theorems~\ref{th-baseline} and \ref{th-Group}. In addition,   its coded caching gain is between $t$ (the denominator of the first item) and $t+1$ (the denominator of the last item).
\hfill $\square$
\end{remark}
We conclude this subsection with some numerical comparisons of the schemes in Theorems~\ref{th-baseline}, \ref{th-Group}, and~\ref{th-General}. In Fig~\ref{subfig-123}, we consider the case where $K_1=12$, $K_2=8$, $L=2$, and $N=96$. It can be seen that both schemes in Theorems~\ref{th-Group} and \ref{th-General} have lower loads than the baseline scheme in Theorem~\ref{th-baseline}. Furthermore, the load of the hybrid scheme in Theorem~\ref{th-General} is slightly larger than that of the grouping scheme in Theorem~\ref{th-Group}.
In Fig.~\ref{subfig-13}, we consider the case where $K_1=11$, $K_2=9$,   $L=2$, and $N=99$. Since $L$ does not divide $K_1$ nor $K_2$, the scheme in Theorem~\ref{th-Group} cannot be used.  It can be seen that the proposed hybrid scheme in Theorem~\ref{th-General} outperforms the baseline scheme in Theorem~\ref{th-baseline}.

\begin{figure}
\centering
\begin{subfigure}{0.45\textwidth}
\includegraphics[width=3.1in]{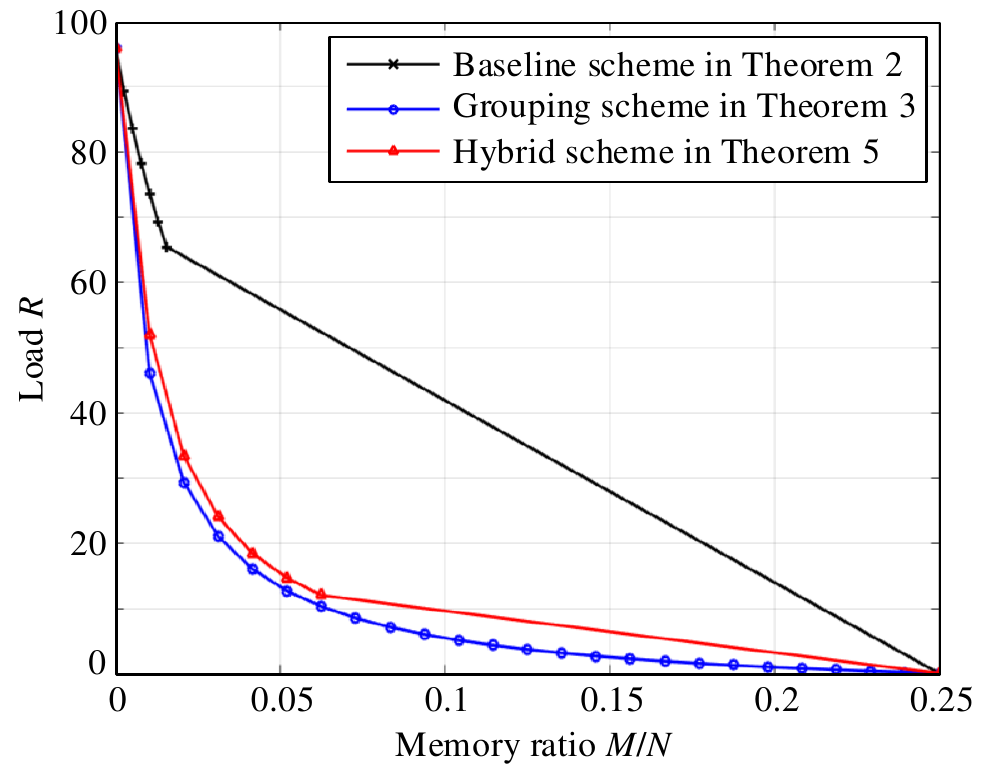}
\caption{\small  $K_1=12$, $K_2=8$, $L=2$ and $N=96$}
\label{subfig-123}
\end{subfigure} \ \ \ \ \ \ \
\begin{subfigure}{0.45\textwidth}
\includegraphics[width=3.1in]{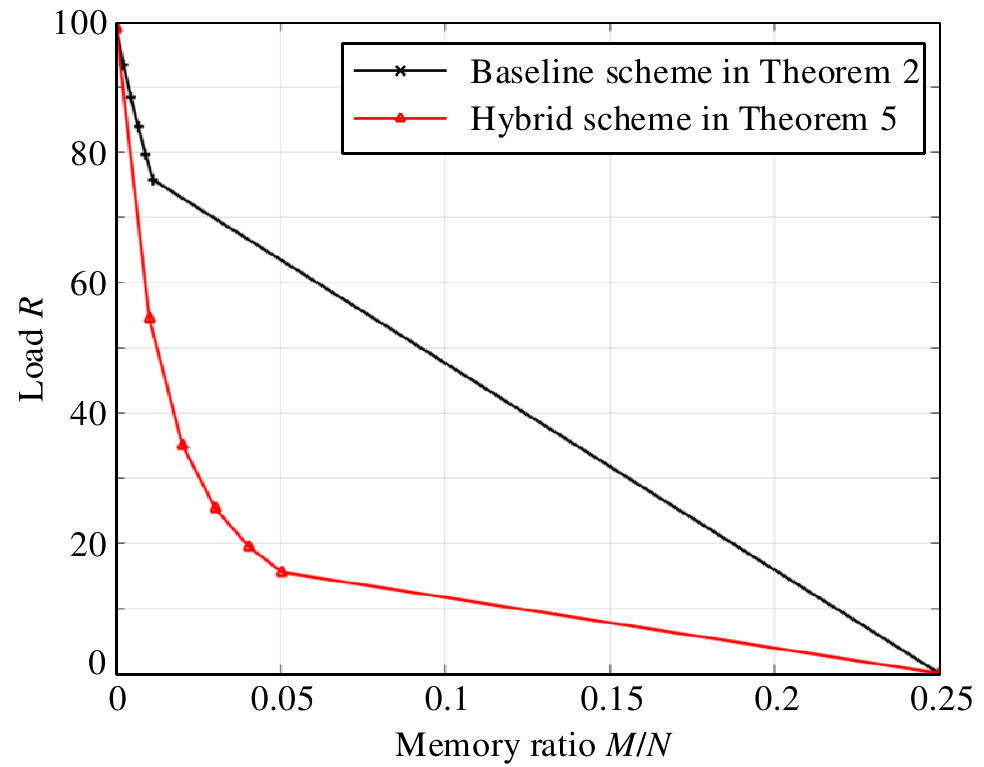}
\caption{\small  $K_1=11$, $K_2=9$, $L=2$ and $N=99$}
\label{subfig-13}
\end{subfigure}
\caption{The transmission load of the caching schemes in Theorems~\ref{th-baseline}, \ref{th-Group} and \ref{th-General}}\label{fig-numerical}
\end{figure}

\subsection{Example of the Grouping Scheme in Theorem~\ref{th-Group}}
\label{sub:illustrated example TH1}
Let us consider a $(K_1,K_2,L,M,N)=(4,4,2,1,16)$ 2D MACC problem. In this case, we have $L|K_1$ and $L|K_2$. 
\begin{itemize}
\item \textbf{Placement phase.} Each file is divided into $L^2=4$ subfiles with equal length, i.e., $W_{n}=\left\{W^{(1)}_{n},W^{(2)}_{n},W^{(3)}_{n},W^{(4)}_n \right\}$ where $n\in [16]$, and the cache-nodes are divided into $L^2=4$ groups, i.e.,
    \begin{eqnarray*}
    &&\mathcal{G}_1=\{C_{1,1},C_{1,3},C_{3,1},C_{3,3}\},\ \ \ \ \mathcal{G}_2=\{C_{1,2},C_{1,4},C_{3,2},C_{3,4}\},\\
    &&\mathcal{G}_3=\{C_{2,1},C_{2,3},C_{4,1},C_{4,3}\},\ \ \ \ \ \mathcal{G}_4=\{C_{2,2},C_{2,4},C_{4,2},C_{4,4}\},
    \end{eqnarray*} as illustrated in Fig~\ref{group}.
\begin{figure}
  \centering
  \includegraphics[width=2.5in]{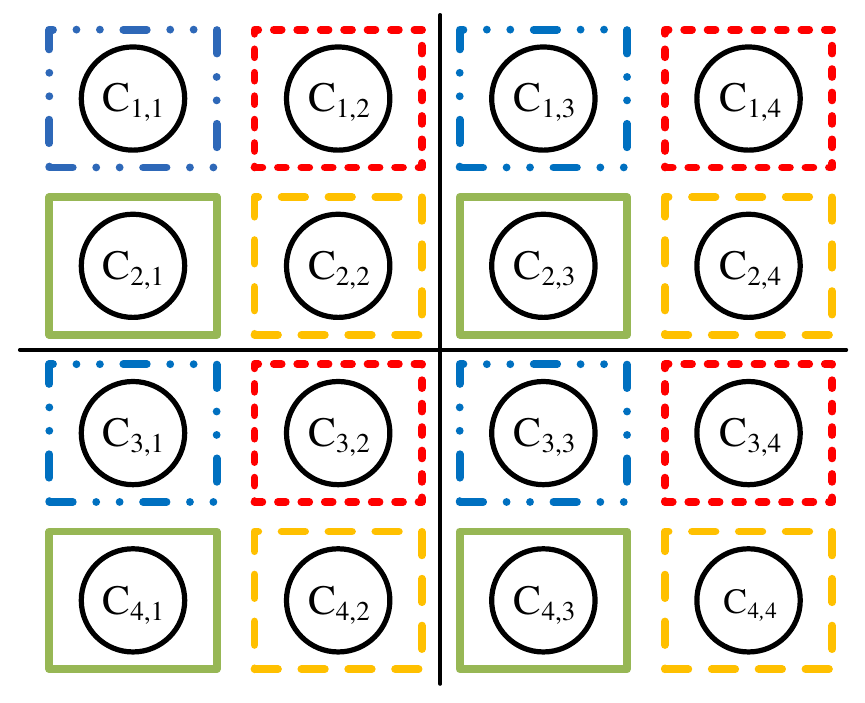}\\
  \caption{Groups of cache-nodes  in the grouping scheme.}\label{group}
\end{figure}
Then the server places the subfiles $\{W^{(l)}_{n}\ |\ n\in [16]\}$ to the cache-nodes in $\mathcal{G}_l$, $l\in [4]$, by  the placement phase of $(\frac{K_1K_2}{L^2},L^2M,N)=(4,4,16)$ MN scheme. Each cache-node caches $\frac{1}{L^2}\cdot{L^2M}=M=1$ file, satisfying the memory size constraint. Furthermore, any two cache-nodes connected to the common user do not cache the same content. Since each user can access $L^2$ cache-nodes, the local caching gain of the proposed scheme is $g_{\text{local}}=1-\frac{L^2M}{N}=\frac{3}{4}$. In each group, every user can retrieve one different subfile of each file.


\item \textbf{Delivery phase.} We divide the users into the following four groups according to the cache-node groups,
      \begin{eqnarray*}
    &&\mathcal{G}'_1=\{U_{1,1},U_{1,3},U_{3,1},U_{3,3}\},\ \ \ \ \mathcal{G}'_2=\{U_{1,2},U_{1,4},U_{3,2},U_{3,4}\},\\
    &&\mathcal{G}'_3=\{U_{2,1},U_{2,3},U_{4,1},U_{4,3}\},\ \ \ \ \ \mathcal{G}'_4=\{U_{2,2},U_{2,4},U_{4,2},U_{4,4}\}.
    \end{eqnarray*}
Let us focus on the $i^{\text{th}}$ group of users, where $i\in [4]$. The transmission for this group of users contains $4$ time slots.
In the $l^{\text{th}}$ time slot where $l\in [4]$, these users will use the cache content of the cache-nodes in $\mathcal{G}_l$.
The multicast messages in this time slot are generated through the  $(4,4,16)$ MN scheme on the   subfiles in $\{W^{(l)}_{n} \ | \ n\in [N]\}$ which are demanded by the users in $\mathcal{G}'_i$.

Since  the coded caching gain of the MN scheme is
    $  g_{\text{coded}}=\frac{K_1K_2M}{N}+1=2$,
    the load of the proposed scheme is
$  K_1K_2\frac{g_{\text{local}}}{g_{\text{coded}}}=6  $,
which coincides with \eqref{eq:load R1}.
\end{itemize}

\subsection{Example of Hybrid Scheme in Theorem~\ref{th-General}}
\label{sub:illustrated example TH3}
Let us consider the $(K_1,K_2,L,M,N)=(5,3,2,2,15)$ 2D MACC system.  The hybrid scheme in Theorem~\ref{th-General} consists
of an outer structure and an inner structure, which are generated from a scheme for the $(K_1,L,M_1=K_2M,N)=(5,2,6,15)$ 1D MACC problem (i.e., the 1D model in the vertical projection of the 2D model), and $\frac{K_1K_2M}{N}=2$ schemes for the $(K_2,L,M_2=N/K_2,N)=(3,2,5,15)$ 1D MACC problem (i.e., the 1D model in the horizontal projection of the 2D model), respectively. We choose these two classes of 1D MACC problems satisfying  $\frac{M_1}{N}\cdot\frac{M_2}{N}=\frac{M}{N}$.

We divide each file into $K_1=5$ equal-length subfiles, $W_n=\{W_n^{(r)}\ |\ r\in[5]\}$, and  divide the caching procedure into $5$ separate rounds. For each $r\in[5]$,  in the $r^{\text{th}}$  round we only consider the $r^{\text{th}}$ subfile of each file. Since all the caching procedures in different rounds are symmetric, we focus on the first round, and construct the node-placement array $\mathbf{C}$, user-retrieve array $\mathbf{U}$ and user-delivery array $\mathbf{Q}$, defined as follows.

\begin{definition}\rm
	\label{defn-three arrays}
	Given integers $F'$ and $K$ which represent the subpacketization of the first round and the number of cache-nodes (or users) respectively, we define that
	\begin{itemize}
		\item An $F' \times K$ node-placement array $\mathbf{C}$ consists of ``$*$" and null entries. The entry located at the position $(j,k)$ in $\mathbf{C}$ is star if and only if the $k^{\text{th}}$	  cache-node caches the $j^{\text{th}}$ packet of $W^{(1)}_n$ where $n\in [N]$. Note that,
		 the $K $ cache-nodes are ordered into $K$ columns of $\mathbf{C}$ 
		 as $(C_{1,1},C_{1,2},\ldots,C_{1,K_2},C_{2,1},\ldots,C_{K_1,K_2})$.
		\item An $F' \times K$ user-retrieve array $\mathbf{U}$ consists of ``$*$" and null entries. The entry located at the position $(j,k)$ in $\mathbf{U}$ is star if and only if the $k^{\text{th}}$   user can retrieve the $j^{\text{th}}$ packet of  $W^{(1)}_n$  where $n\in [N]$, from its connected cache-nodes. Note that,
		 the $K $ users are ordered into $K$ columns of $\mathbf{U}$ 
		 as $(U_{1,1},U_{1,2},\ldots,U_{1,K_2},U_{2,1},\ldots,U_{K_1,K_2})$.
		\item An $F' \times K$ user-delivery array $\mathbf{Q}$ consists of $\{*\}\bigcup [S]$, which is obtained by filling the null entries in  $\mathbf{U}$ by some integers.
		 Each integer represents a multicast message, while $S$ represents the total number of multicast messages transmitted in the first round during the delivery phase.
	\end{itemize}
	\hfill $\square$
\end{definition}

For the sake of clarity, we label the columns of $\mathbf{C}$, $\mathbf{U}$ and $\mathbf{Q}$ by vectors $(k_1,k_2)$ where $k_1\in[K_1]=[5]$ and $k_2\in[K_2]=[3]$.
 For each $k_1\in[K_1]=[5]$, we define  $(k_1,[3])$  as the column index set $\{(k_1,1), (k_1,2), (k_1,3) \}$.
 The constructions of $\mathbf{C}$, $\mathbf{U}$ and $\mathbf{Q}$ are listed as follows, as illustrated in Fig.~\ref{fig-construct-CUQ}.

\begin{figure*}
	\centering
	\vspace{-10pt}	
	\setlength{\abovecaptionskip}{0pt}
	$\begin{array}{c|c}
		\begin{subfigure}[t]{0.5\textwidth}
			\hspace{-1cm}
			\centering
			\setlength{\abovecaptionskip}{0pt}
			\setlength{\belowcaptionskip}{5mm}
			\includegraphics[width=3.2in]{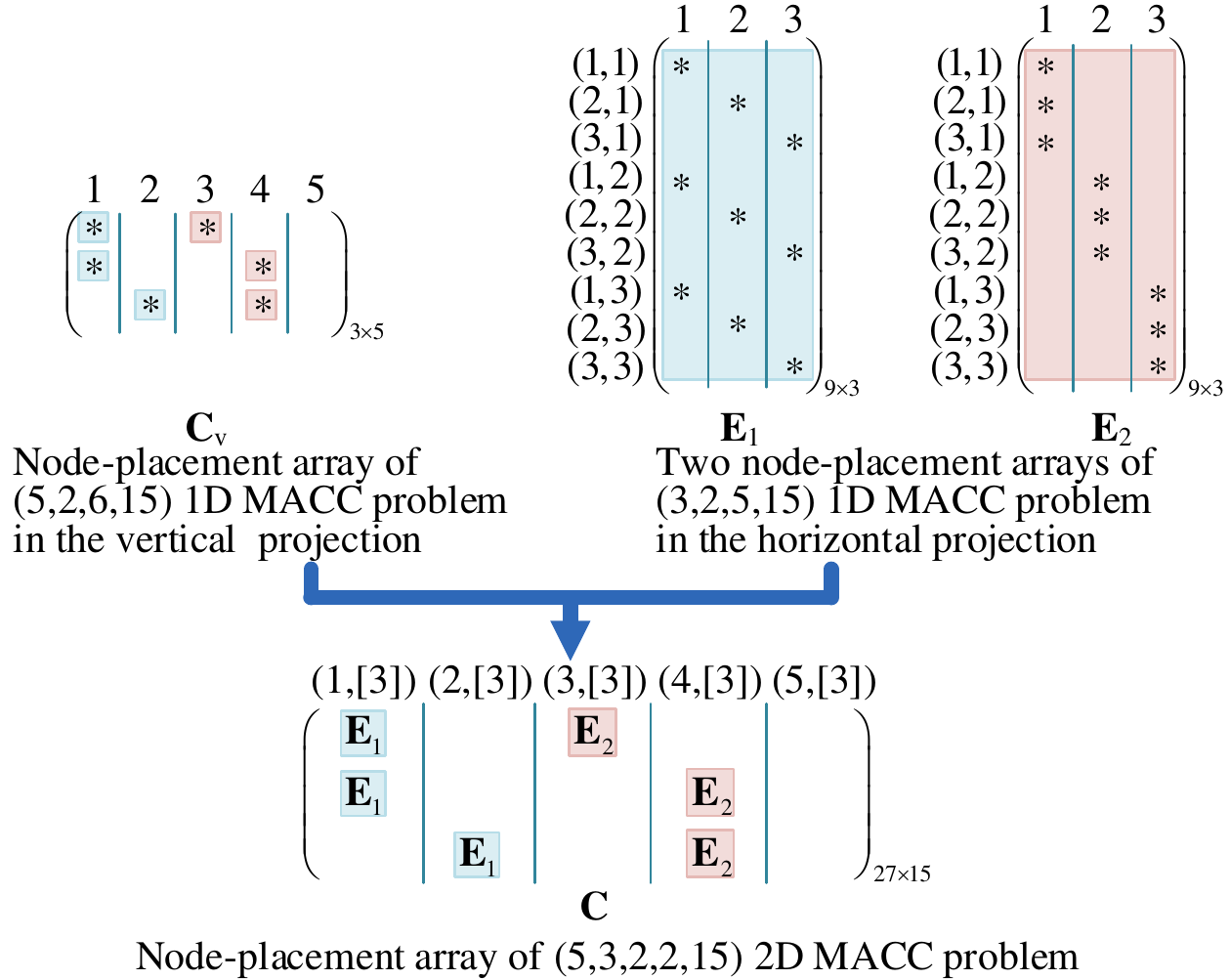}
			\caption{Node-placement array $\mathbf{C}$}
			\label{fig-C}
		\end{subfigure} &
		\begin{subfigure}[t]{0.5\textwidth}
			\centering
			\setlength{\abovecaptionskip}{0pt}
			\setlength{\belowcaptionskip}{5mm}
			\includegraphics[width=3.2in]{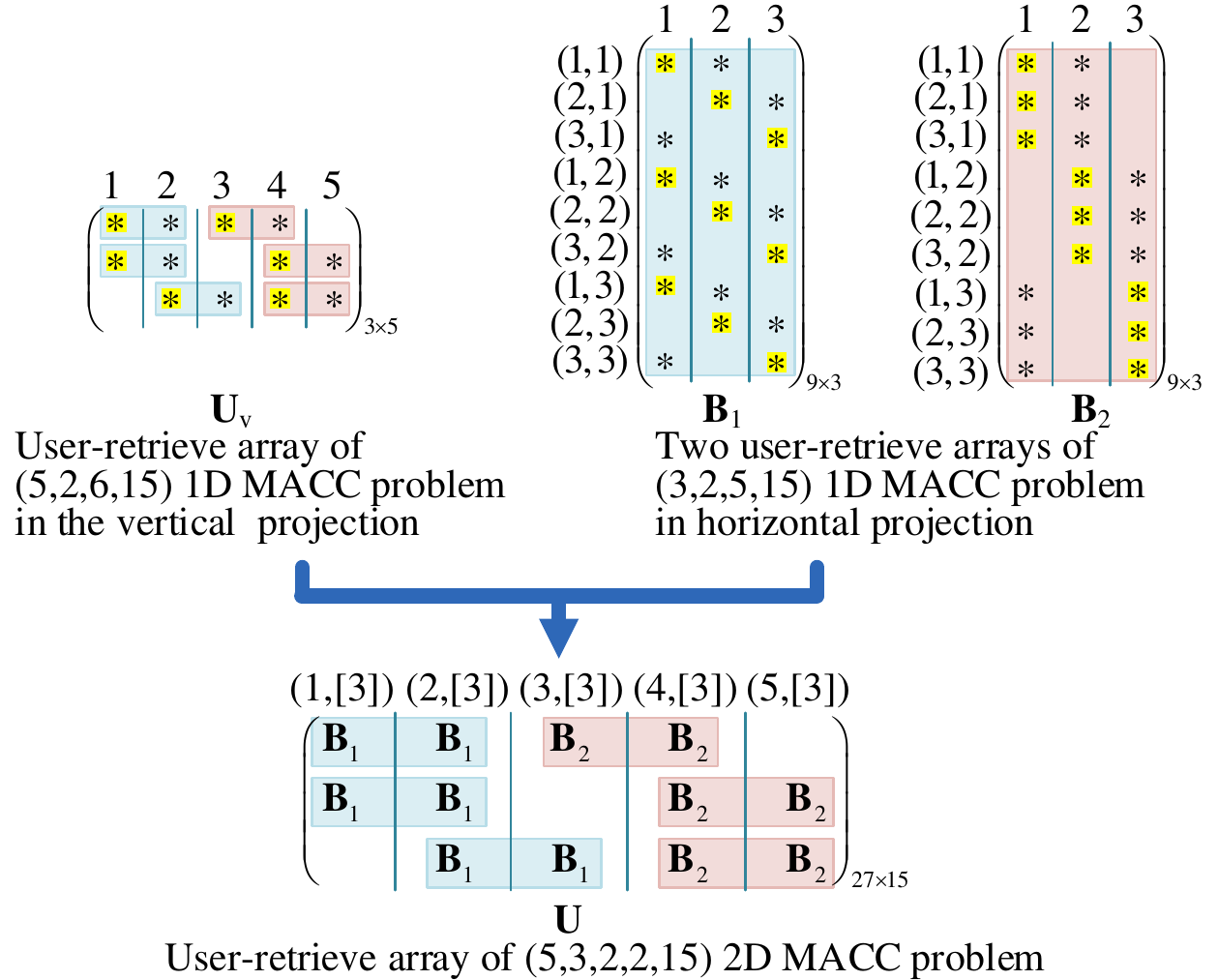}
			\caption{User-retrieve array $\mathbf{U}$}
			\label{fig-U}
		\end{subfigure}  	
     \end{array}$
	
	\begin{subfigure}[t]{1\textwidth}
		\vspace{0.5cm}
		\hspace{-1.7cm}
		\centering
		\setlength{\abovecaptionskip}{0pt}
		\includegraphics[width=5.5in]{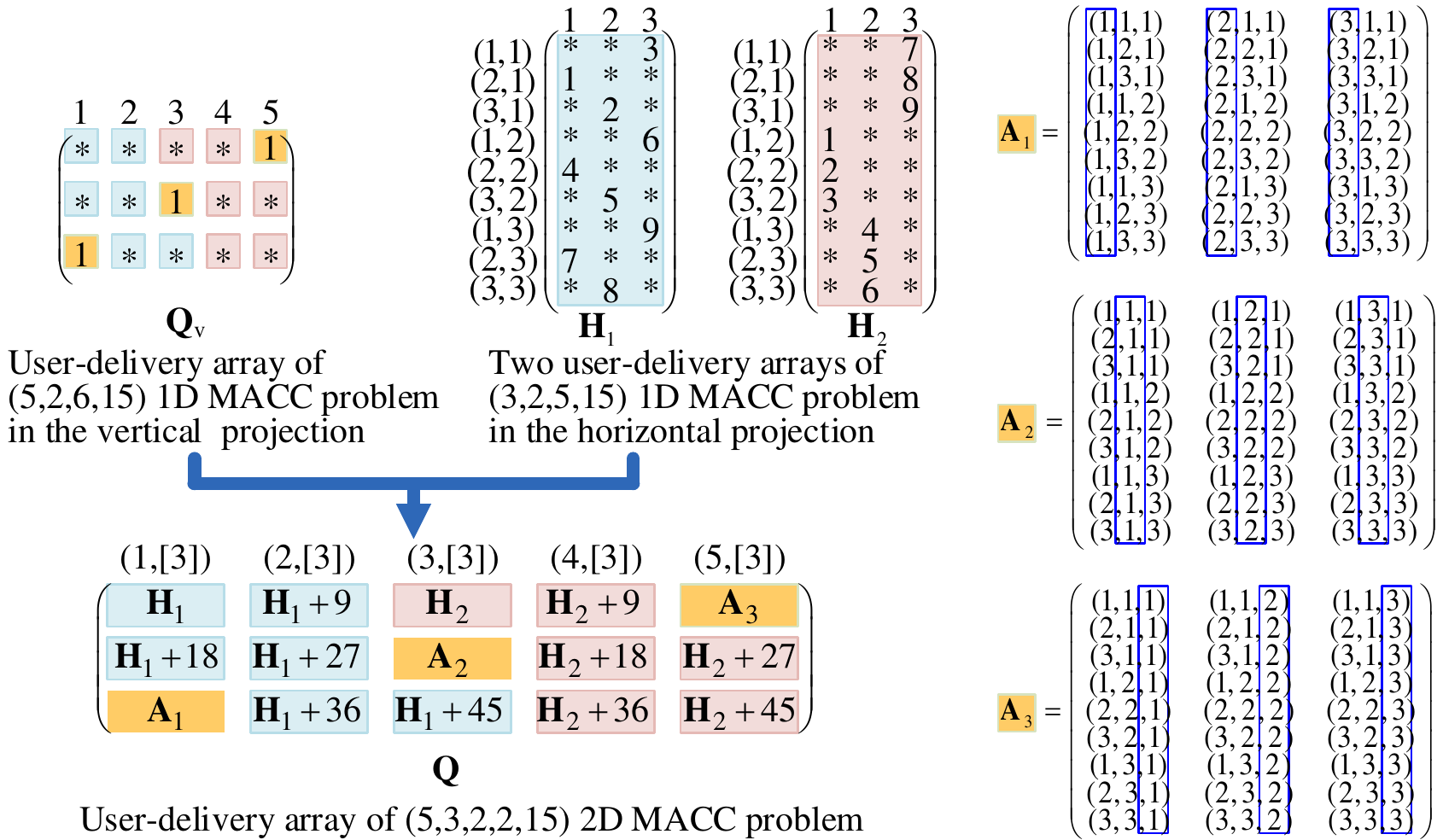}
		\caption{User-delivery array $\mathbf{Q}$}
		\label{fig-Q}
	\end{subfigure}
	\vspace{3mm}
	\caption{Flow diagram of constructing $\mathbf{C}$, $\mathbf{U}$  and $\mathbf{Q}$ for $(5,3,2,2,15)$ 2D MACC system where $(\mathbf{H}_1,\mathbf{H}_2)$ is a Partition PDA}\label{fig-construct-CUQ}
	\vspace{-8mm}
\end{figure*}

\begin{itemize}
	\item \textbf{The construction of node-placement array $\mathbf{C}$}.  As illustrated in Fig.~\ref{fig-C}, the node-placement array $\mathbf{C}$ is designed via
	outer  and inner structures, respectively.  $\mathbf{C}$ is composed of an outer structure which corresponds to a vertical 1D MACC problem containing $K_1=5$ cache-nodes. We select the node-placement array of the $(K_1,L,M_1,N)=(5,2,6,15)$  1D MACC scheme $\mathbf{C}_{\text{v}}$ (detailed in Fig.~\ref{fig-line-macc} ) as the outer structure.
	 We then extend $\mathbf{C}_{\text{v}}$ into the 2D  MACC node-placement array $\mathbf{C}$ by replacing each entry in $\mathbf{C}_{\text{v}}$ by an inner node-placement array with $K_2=3$ columns. More precisely:
	 \begin{itemize}
	 \item For the stars in each row of $\mathbf{C}_{\text{v}}$, we  replace the first star  by $\mathbf{E}_1$, and replace the last star by $\mathbf{E}_2$. Note that, the inner structure $\mathbf{E}_1$ and $\mathbf{E}_2$ are node-placement arrays (detailed in Fig.~\ref{fig-par-macc}) of the $(K_2,L,M_2,N)=(3,2,5,15)$ 1D MACC problem in the horizontal projection of the 2D model.
	 \item For the null entry in each row of $\mathbf{C}_{\text{v}}$, we replace it by a null array with dimension $9\times 3$ which is the same as $\mathbf{E}_1$ and $\mathbf{E}_2$.
	 \end{itemize}
	{\bf By this construction of $\mathbf{C}$,  any two cache-nodes connected to common users do not cache any common packets.} In other words,
	 for each row of $\mathbf{C}$ (representing each packet),
 any two stars at column $(k_1,k_2)$ and column $(k_1',k_2')$ satisfying $D_{\rm r}(k_1-k_1') \geq 2$.
 \footnote{\label{foot-D-distance}We define that $D_{\rm r}(k_1,k_1')=\min\{<k_1-k_1'>_{K_1},K_1-<k_1-k_1'>_{K_1}\}$.}
		
	\item \textbf{The construction of user-retrieve array $\mathbf{U}$}. Once the node-placement array $\mathbf{C}$ is designed, the user-retrieve array $\mathbf{U}$ is also determined, as illustrated in Fig.~\ref{fig-U}.
	In the 2D MACC system, each user can access $L^2$ cache-nodes satisfying~\eqref{eq-distance-grid}. In other words, focus on the same row of $\mathbf{C}$ and $\mathbf{U}$,  if the column $(k_1',k_2')$ in $\mathbf{C}$ is ``*", then the column $(k_1,k_2)$ in $\mathbf{U}$ is set to be ``*" where $  <k_1-k_1'>_{5} \ <2$ and $<k_2-k_2'>_{3} \ <2$.	 	
	From the design of $\mathbf{C}$, the outer structure of  $\mathbf{U}$   corresponds to   the user-retrieve array $\mathbf{U}_{\text{v}}$ of the $(5,2,6,15)$  1D MACC scheme detailed in Fig.~\ref{fig-line-macc}. 
 We then extend $\mathbf{U}_{\text{v}}$ into the 2D MACC user-retrieve array $\mathbf{U}$ by replacing each entry in $\mathbf{U}_{\text{v}}$ by an inner user-retrieve array with $K_2=3$ columns. More precisely:
	\begin{itemize}
		\item For the stars in each row of $\mathbf{U}_{\text{v}}$, we replace the first $L=2$ consecutive stars by $\mathbf{B}_1$, and replace the last $L=2$ consecutive stars by $\mathbf{B}_2$. Note that $\mathbf{B}_1$ and $\mathbf{B}_2$ are the user-delivery arrays for the inner structure detailed in Fig.~\ref{fig-par-macc},  which are  from the $(K_2,L,M_2,N)=(3,2,5,15)$ 1D MACC problem in the horizontal projection of the 2D model. Since there are $3$ rows of $\mathbf{U}_{\text{v}}$, we obtain $3\times L=6$ $\mathbf{B}_1$ and $6$ $\mathbf{B}_2$ of $\mathbf{U}$, respectively.
		\item For the null entry in each row of $\mathbf{U}_{\text{v}}$, we replace it by a null array with dimension $9\times 3$.
	\end{itemize}		

	\item \textbf{The construction of user-delivery array $\mathbf{Q}$}. $\mathbf{Q}$ is obtained by  filling the null entries of $\mathbf{U}$ such that the Condition C3 of PDA in Definition~\ref{def-PDA} is satisfied.
	As illustrated in Fig.~\ref{fig-Q}, we design $\mathbf{Q}$ from $\mathbf{U}$ in two steps:
	\begin{itemize}
	\item In the first step, we fill the null entries in the inner structure of $\mathbf{U}$, i.e., the null entries of $\mathbf{B}_1$ and $\mathbf{B}_2$. 
	 Recall that $\mathbf{H}_1$ and $\mathbf{H}_2$ are user-delivery arrays of the $(3,2,5,15)$ 1D MACC problem detailed in Fig.~\ref{fig-par-macc}, which correspond to sub-Partition PDAs of $\mathbf{H}=(\mathbf{H}_1,\mathbf{H}_2)$ in Fig.~\ref{fig-par-inte}.
 We replace the 6 arrays $\mathbf{B}_1$ and 6 arrays $\mathbf{B}_2$ in $\mathbf{U}$ by $\mathbf{H}_1+9v$ and $\mathbf{H}_2+9v$ for each $v\in[0:5]$ from left to right then from top to bottom, respectively.
For example, the first $\mathbf{B}_1$ is replaced by $\mathbf{H}_1$ and the second $\mathbf{B}_1$ is replaced by $\mathbf{H}_1+9$, because there are $9$ different integers in $\mathbf{H}_1$.\footnote{\label{foot:H+S}  Recall that for any integer $a$, $\mathbf{H}_1+a$ denotes an array $(\mathbf{H}_1(j,k)+a)$, where $*+a=*$.}
Since  $(\mathbf{H}_1+a, \mathbf{H}_2+a)$ constitutes a Partition PDA, this integer-filling is valid for the conditions of PDA.
As a result, we have used $9\times6=54$  integers, each of which   occurs $t=2$ times.
{\bf Hence,  the coded caching gain for the multicast messages	in the first step (referred to as Type I multicast messages) is $t=2$.}

	\item In the second step, we fill the null entries in the outer structure of $\mathbf{U}$ 
	 by   arrays $\mathbf{A}_1$, $\mathbf{A}_2$, and $\mathbf{A}_3$, each of which has    dimension $9\times 3$.
For each $j\in[3]$, we label  each of 	 $27$ entries in $\mathbf{A}_j$ by a $t+1=3$-dimensional vector $\mathbf{e}=(e_1,e_2,e_3)$,
where $e_1,e_2,e_3\in[K_2]=[3]$. 
In $\mathbf{Q}$, as illustrated in Fig.~\ref{fig-Q},
the sets of column indices of $\mathbf{A}_1$, $\mathbf{A}_2$, and $\mathbf{A}_3$  are $(1,[3])$, $(3,[3])$ and $(5,[3])$\footnote{Recall that for each $k_1\in[K_1]=[5]$, we define $(k_1,[3])$ as the column index set $\{(k_1,1),(k_1,2),(k_1,3)\}$.}, respectively.
In $\mathbf{A}_1$, any vector $\mathbf{e}=(e_1,e_2,e_3)$ is filled in the entry indexed by row $(e_2,e_3)$ and column $e_1$ of $\mathbf{A}_1$.
In $\mathbf{A}_2$, any vector $\mathbf{e}=(e_1,e_2,e_3)$ is filled in the entry indexed by row $(e_1,e_3)$ and column $e_2$ of $\mathbf{A}_2$.
In $\mathbf{A}_3$, any vector $\mathbf{e}=(e_1,e_2,e_3)$ is filled in the entry indexed by row $(e_1,e_2)$ and column $e_3$ of $\mathbf{A}_3$.
By the above construction, it can be checked  that Condition C3 of PDA in Definition~\ref{def-PDA} is satisfied.
For instance,  let us focus on the case $\mathbf{e}=(3,2,1)$ filled in  $\mathbf{A}_1$, $\mathbf{A}_2$, and $\mathbf{A}_3$. The sub-array of  $\mathbf{Q}$ containing  the vector $(3,2,1)$ is denoted by $\mathbf{Q}_{(3,2,1)}$. We will show that $\mathbf{Q}_{(3,2,1)}$ is with the form
  illustrated in Fig.~\ref{fig-Q-sub},  and thus satisfies Condition C3 of PDA in Definition~\ref{def-PDA}.
In $\mathbf{A}_1$ the vector $(3,2,1)$ is filled in the entry indexed by row $(2,1)$ and column $3$;
 in $\mathbf{A}_2$ the vector $(3,2,1)$ is filled in the entry indexed by row $(3,1)$ and column $2$; in $\mathbf{A}_3$ the vector $(3,2,1)$ is filled in the entry indexed by row $(3,2)$ and column $1$.
By Definition~\ref{rem-par} of the Partition PDA, the entry at row $(2,1)$ and column $2$ of $\mathbf{H}_1$ is tag-star; the entry at row $(2,1)$ and column $1$ of $\mathbf{H}_2$ is tag-star. Thus we obtain  row $(2,1)$  of $\mathbf{Q}_{(3,2,1)}$.
Similarly,   the entry at row $(3,1)$ and column $3$ of $\mathbf{H}_1$ is tag-star; the entry at row $(3,1)$ and column $1$ of $\mathbf{H}_2$ is tag-star. Thus we obtain row $(3,1)$ of $\mathbf{Q}_{(3,2,1)}$.
The entry at row $(3,2)$ and column $3$ of $\mathbf{H}_1$ is tag-star; the entry at row $(3,2)$ and column $2$ of $\mathbf{H}_2$ is tag-star. Thus we obtain row $(3,2)$ of $\mathbf{Q}_{(3,2,1)}$. So the multicast message for the vector  $(3,2,1)$ is decodable.
\begin{figure}
	\centering
	\includegraphics[width=2in]{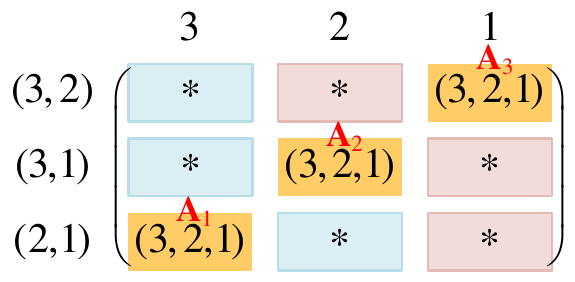}\\
	\caption{The sub-array $\mathbf{Q}_{(3,2,1)}$  containing the vector $(3,2,1)$ in $\mathbf{A}_1$, $\mathbf{A}_2$, $\mathbf{A}_3$.}\label{fig-Q-sub}
\end{figure}
As a result, we have used $27$ vectors, each of which occurs $t+1=3$ times. \bf{Hence, the coded caching gain for the multicast messages in the second step (referred to as Type II multicast messages) is $t+1=3$.}
\end{itemize}
\end{itemize}

After determining $\mathbf{C}$, $\mathbf{U}$, and $\mathbf{Q}$, the placement and delivery strategies are obtained during the first round. For each $r\in[K_1]=[5]$,  in the $r^{\text{th}}$ round,  we only need to right-shift $\mathbf{C}$, $\mathbf{U}$, and $\mathbf{Q}$ by $K_2(r-1)=3(r-1)$ positions in a cyclic wrap-around fashion.
Denoting  the node-placement array for the $r^{\text{th}}$  round   by $\mathbf{C}^{(r)}$,  the overall placement array of the cache-nodes is $[\mathbf{C}^{(1)};\mathbf{C}^{(2)};\ldots;\mathbf{C}^{(5)}]$. Since each $\mathbf{C}_{\text{v}}$ has $6$ stars
, and each column of $\mathbf{E}_1$ and $\mathbf{E}_2$ has $3$ stars, the number of stars in each column of $[\mathbf{C}^{(1)};\ldots;\mathbf{C}^{(5)}]$ is $6\times3=18$. In addition, $[\mathbf{C}^{(1)};\ldots;\mathbf{C}^{(5)}]$ has $3\times 9\times 5=135$ rows. Thus each cache-node caches $M=\frac{18}{135}\times N=\frac{2}{15}N=2$ files, satisfying the memory size constraint.

According to the placement strategy, in the hybrid scheme, any two cache-nodes connected to some common users do not cache the same packets. Since each user can access $L^2=4$ cache-nodes in the 2D MACC system, the local caching gain of the proposed scheme is
     $g_{\text{local}}= 1-\frac{L^2 M}{N}=\frac{7}{15}.$
In the delivery phase,  there are $K_1\times54=5\times54=270$ multicast messages in Type I with the coded caching gain
$ g_{\text{I}}=t=2$;
  there are $K_1\times27=5\times27=135$ multicast messages in Type II with the coded caching gain
     $g_{\text{II}}=t+1=3$.
So the overall coded gain of the hybrid scheme is
  $g_{\text{coded}}= \frac{270\times 2+135\times 3}{270+135}=\frac{7}{3}$; thus the achieved
  load is
 $ K_1 K_2 \frac{g_{\text{local}}}{g_{\text{coded}}} =15\times \frac{7 / 15}{7 / 3}=3$,
which coincides with~\eqref{eq:grid-load-R}.

\section{Proof of Theorem \ref{th-Group}}
\label{sec-proof of Theorem Group}
In this section, we describe the grouping scheme for $(K_1,K_2,L,M,N)$ 2D MACC system under the constraints $L|K_1$ and $L|K_2$. Cache-nodes and users are divided into the following $L^2$ groups respectively,
\begin{eqnarray*}
	&&\mathcal{G}_{j_1,j_2}=\left\{C_{k_1,k_2} \ |\ k_1=j_1+i_1L, k_2=j_2+i_2L, i_1\in\left[0:\frac{K_1}{L}\right), i_2\in\left[0:\frac{K_2}{L}\right)\right\},\\
	&&\mathcal{G}'_{j_1',j_2'}=\left\{U_{k_1',k_2'}\ |\ k_1'=j_1'+i_1'L, k_2'=j_2'+i_2'L, i_1'\in\left[0:\frac{K_1}{L}\right), i_2'\in\left[0:\frac{K_2}{L}\right)\right\},
\end{eqnarray*}
where $j_1$, $j_1'$, $j_2$, $j_2'\in [L]$. Then $|\mathcal{G}_{j_1,j_2}|=|\mathcal{G}'_{j_1',j_2'}|=\frac{K_1K_2}{L^2}:=\widehat{K}$.

\begin{itemize}
	\item \textbf{Placement phase}. Each file is divided into $L^2$ subfiles with equal length, i.e., $W_{n}=\big\{ W^{(j_1,j_2)}_{n}\ | \ j_1,j_2\in [L]\big\}$. Define that $\mathcal{W}^{(j_1,j_2)}=\big\{W^{(j_1,j_2)}_{1}, W^{(j_1,j_2)}_{2}, \ldots, W^{(j_1,j_2)}_{N}\big\}$ for each $j_1,j_2\in [L]$. The server places the  subfiles in $\mathcal{W}^{(j_1,j_2)}$ to the cache-nodes in $\mathcal{G}_{j_1,j_2}$,  by the placement phase of the $(\widehat{K},\widehat{M},N)$ MN scheme where $\widehat{M}=L^2M$. Each cache-node caches $\widehat{M}\cdot\frac{1}{L^2}=M$ files, satisfying the memory size constraint. Furthermore, any two cache-nodes connected to the common user (i.e., satisfying \eqref{eq-distance-grid}) do not cache the same content.
	\item \textbf{Delivery phase}. Focus on the users in  group $\mathcal{G}'_{j_1',j_2'}$, where $j_1'$, $j_2'\in [L]$. The transmission for this group of users contains $L^2$ time slots. Each time slot is indexed by $(j_1,j_2)$, where $j_1$, $j_2\in [L]$. In the time slot $(j_1,j_2)$, the users will use the cache content stored by the cache-nodes in $\mathcal{G}_{j_1,j_2}$. The multicast messages in this time slot are generated through the $(\widehat{K},\widehat{M},N)$ MN scheme on the subfiles in $\mathcal{W}^{(j_1,j_2)}$ which are demanded by the users in $\mathcal{G}'_{j_1',j_2'}$. Since there are $L^2$ groups of users, the transmission load is
	$$R_2=L^2\frac{\widehat{K}(1-\widehat{M}/N)}{\widehat{K}\widehat{M}/N+1}=\frac{K_1K_2-tL^2}{t+1},$$
	where $t=\frac{K_1K_2M}{N}\in[0:\frac{K_1K_2}{L^2}]$.
	\item \textbf{Decodability}. In the 2D MACC system, each user can access all the $L^2$ cache-nodes satisfying \eqref{eq-distance-grid} which cache $L^2$ different subfiles. Hence, each user can totally obtain $L^2$ subfiles of each file from the placement and delivery phases, such that it can decode its desired file.
\end{itemize}

Hence, we proved the Theorem~\ref{th-Group}.

\section{Proof of Theorem  \ref{th-General-regular}}
\label{sec-proof of Theorem 1}
In this section, we describe the hybrid scheme (i.e., consisting of outer and inner structures) for the $(K_1,K_2,L,M,N)$ 2D MACC system,  where $t=\frac{K_1 K_2M}{N} \in \left[\left\lfloor \frac{K_1}{L} \right\rfloor \right]$.
The hybrid scheme constructs a novel transformation approach to generate 2D MACC scheme from two classes 1D MACC schemes, where the outer structure corresponds to a scheme for the $(K_1,L,M_1=K_2M,N)$ 1D MACC problem in vertical projection is generated by the transformation approach  proposed in \cite{CWLZC} and detailed in Section~\ref{subsect-line}; and the inner structure corresponds to $t$ different schemes for $(K_2,L,M_2=N/K_2,N)$ 1D MACC problem in the horizontal projection are generated by using the transformation approach \cite{CWLZC} on the  Partition PDA in \cite{YCTC} for the shared-link caching model.

In the hybrid scheme, we divide each file $W_{n}$ where $n\in [N]$ into $K_1$ subfiles with equal length, i.e., $W_{n}=\big\{W^{(1)}_{n},\ldots,W^{(K_1)}_n \big\}$. Denote the set of the $r^{\text{th}}$ subfiles by $\mathcal{W}^{(r)}=\big\{W^{(r)}_{1}, \ldots, W^{(r)}_{N}\big\}$ for each $r\in [K_1]$.
  We divide the whole caching procedure into $K_1$ separate rounds, where in the $r^{\text{th}}$ round we only deal with $\mathcal{W}^{(r)}$.
  Our construction contains three steps: the generations of  node-placement array $\mathbf{C}$,  user-retrieve array $\mathbf{U}$, and  user-delivery array $\mathbf{Q}$, respectively.

\subsection{Caching Strategy for Cache-nodes: Generation of the Array  $\mathbf{C}$}
\label{subsec-caching-nodes}
From Section~\ref{sub:illustrated example TH3}, we first construct the node-placement array $\mathbf{C}$ of the $(K_1,K_2,L,M,N)$ 2D MACC system in the first round. $\mathbf{C}$ is designed via outer and inner structures, respectively. We select the node placement array $\mathbf{C}_{\text{v}}$ (defined in \eqref{eq-array-node-caching}) for the outer structure which corresponds to the $(K_1,L,M_1=K_2M,N)$ 1D MACC problem in the vertical projection of the 2D model, and then extend $\mathbf{C}_{\text{v}}$ into the 2D MACC node-placement array $\mathbf{C}$ by replacing each entry in $\mathbf{C}_{\text{v}}$ by an inner node-placement array with $K_2$ columns. More precisely, for the $t$ stars in each row of $\mathbf{C}_{\text{v}}$, we replace them (from left to right) by  $\mathbf{E}_1,\mathbf{E}_2,\ldots,\mathbf{E}_{t}$, each of which corresponds to a node-placement array of the $(K_2,L,M_2=N/K_2,N)$ 1D MACC problem  in the horizontal projection of the 2D model. For null entries in $\mathbf{C}_{\text{v}}$, we replace each of them by a null array with dimension $F_2\times K_2$ (i.e., with the same dimension as $\mathbf{E}_1,\ldots,\mathbf{E}_{t}$), where $F_2=K_2^t$ equals to the subpacketization of Partition PDA that detailed in Section~\ref{sub:ori model}.
Then we get the node-placement array $\mathbf{C}$ in the first round.
By this construction, any two cache-nodes connected to some common users do not cache the same packet, since any two stars at column $(k_1,k_2)$ and column $(k_1',k_2')$ satisfying $D_{\rm r}(k_1,k_1') \geq L$ from \eqref{eq-caching-index}.

For each  $r\in [K_1]$, in the $r^{\text{th}}$ round, the node-placement array $\mathbf{C}^{(r)}$ is generated by   cyclically right-shifting $\mathbf{C}$ by $(r-1)K_2$ positions. In our hybrid construction, we use $\big[\mathbf{C}^{(1)};\mathbf{C}^{(2)};\ldots;\mathbf{C}^{(K_1)}\big]$ to represent the overall placement array of the cache-nodes. Since each outer structure  has $Z_1'K_1'$ stars in each column, and each inner structure   has $Z_2$ stars in each column, the number of stars in each column of $\big[\mathbf{C}^{(1)};\ldots;\mathbf{C}^{(K_1)}\big]$ is  $Z_1'K_1'Z_2$. In addition,  $\big[\mathbf{C}^{(1)};\ldots;\mathbf{C}^{(K_1)}\big]$ has $F_1' F_2 K_1$ rows. Thus, the   memory size of each cache-node is
\begin{align*}
\frac{Z_1'K_1'Z_2}{F_1' F_2 K_1}N=\frac{K_1'Z_1'}{F_1'}\cdot\frac{K_2Z_2}{F_2}\cdot\frac{1}{K_1K_2}N=\frac{K_1M_1}{N}\cdot1\cdot\frac{1}{K_1K_2}N =M,
\end{align*} satisfying the memory size constraint.

Note that when $(K_1,K_2,L,M,N)=(5,3,2,2,15)$, the above construction on $\mathbf{C}$ is illustrated in Fig.~\ref{fig-C}.
\subsection{Packets Retrievable to Users: Generation of the Array $\mathbf{U}$}
\label{subsect-packet-user}
After constructing the node-placement array $\mathbf{C}$, the user-retrieve array $\mathbf{U}$ is determined, since each user can access $L^2$ cache-nodes satisfying that the row and column modular distances are less than $L$. In other words, for the same row of $\mathbf{C}$ and $\mathbf{U}$, if the column $(k_1',k_2')$ of $\mathbf{C}$ is ``*", then the column $(k_1,k_2)$ of $\mathbf{U}$ is set to be ``*" where $<k_1-k_1'>_{K_1}<L$ and $<k_2-k_2'>_{K_2}<L$.

From the design of $\mathbf{C}$, we select the user-retrieve array $\mathbf{U}_{\text{v}}$ (defined in \eqref{eq-array-user-caching}) for the outer structure which corresponds to the $(K_1,L,M_1,N)$ 1D MACC problem in the vertical projection, then extend $\mathbf{U}_{\text{v}}$ to the 2D MACC user-retrieve array $\mathbf{U}$ by replacing each entry in $\mathbf{U}_{\text{v}}$ by an inner node-placement array with $K_2$ columns. More precisely, focus on each row of $\mathbf{U}_{\text{v}}$:
\begin{itemize}
\item There are $tL$ stars in this row; as shown in Section~\ref{subsect-line}, we can divide  these stars  into $t$ disjoint groups, each of which has $L$ consecutive stars.
For each $i\in[t]$, we replace each of the stars in the $i^{\text{th}}$ group (defined in~\eqref{eq:star group}) by  $\mathbf{B}_i$.\footnote{\label{foot:recall Bi} Recall that   $\mathbf{B}_1,\ldots,\mathbf{B}_t$  correspond to $t$ different user-retrieve arrays  of the $(K_2,L,M_2,N)$ 1D MACC problem in the horizontal projection.}
\item For null entries in this row, we replace each of which by a null array with dimension $F_2\times K_2$.
\end{itemize}

Then we get the user-retrieve array $\mathbf{U}$ in the first round.
Since each outer structure   has $tL$ stars in each row and
each inner structure   has $L$ stars in each row,
 the number of stars in each row of $\mathbf{U}$ is $tL^2$. So, there are $K_1K_2-tL^2$ null entries in each row of $\mathbf{U}$.
\begin{remark}\rm
\label{remark-2}
The null entries in each row of $\mathbf{U}$ can be divided into two disjoint parts.
\begin{itemize}
\item {\em Type I} : The inner structure null entries, i.e., the null entries  in
 $\mathbf{B}_{i}$ for all $i\in [t]$  of $\mathbf{U}$.  
Since  there are $K_2-L$ null entries in each row of $\mathbf{B}_{i}$, and  $L$ stars in each row of $\mathbf{U}_{\text{v}}$ are replaced by $\mathbf{B}_{i}$, thus there are totally $t(K_2-L)L$   null entries in each row of $\mathbf{U}$ in Type I.
\item {\em Type II}: The outer structure null entries, i.e., the null entries not in any $\mathbf{B}_{i}$ for all  $i\in [t]$  of $\mathbf{U}$. Since there are $K_1-tL$ null entries in each row of $\mathbf{U}_{\text{v}}$, and each of which is replaced by a null array with $K_2$ columns in $\mathbf{U}$, thus there are $K_2(K_1-tL)$ null entries in each row of $\mathbf{U}$ in Type II.
\end{itemize}
\hfill $\square$
\end{remark}
For each $r\in [K_1]$, in the $r^{\text{th}}$  round, the user-retrieve array $\mathbf{U}^{(r)}$ is obtained by cyclically right-shifting $\mathbf{U}$ by $(r-1)K_2$ positions.

Note that   when $(K_1,K_2,L,M,N)=(5,3,2,2,15)$, the above construction on $\mathbf{U}$ is illustrated in Fig.~\ref{fig-U}.

\subsection{Delivery Strategy: Generation of the Array $\mathbf{Q}$}
\label{subsect-delivery-user}
The user-delivery array $\mathbf{Q}$ is obtained by filling the null entries of $\mathbf{U}$ such that   Condition C3 of PDA in Definition~\ref{def-PDA} is satisfied. Inspired from Remark \ref{remark-2}, we fill the null entries in two steps:
\subsubsection{Step 1. Fill the null entries in Type I}\label{subsubsect-Type I}
\
\newline \indent
From Remark \ref{remark-2}, the entries in Type I are exactly the null entries of $\mathbf{B}_{i}$ where $i\in[t]$. Recall that for each $i\in[t]$, the user-retrieve array $\mathbf{B}_{i}$ has the same star entries as $\mathbf{H}_{i}$ of Partition PDA $\mathbf{H}$. 
We fill the null entries in Type I by replacing all $F_1'L$ arrays $\mathbf{B}_{i}$ in $\mathbf{U}$\footnote{\label{foot:F1'} Recall that $F_1'$ represents the number of rows in $\mathbf{U}_{\text{v}}$. In each row and for each $i\in[t]$, there are exactly $L$ stars replaced by $\mathbf{B}_{i}$ to obtain  $\mathbf{U}$. Thus for each $i\in[t]$, there are $F_1'L$ arrays $\mathbf{B}_i$ in $\mathbf{U}$.} by $\mathbf{H}_{i}+vK_2^t(K_2-L)$\footnote{\label{foot-H+S2}  Recall that for any integer $a$, $\mathbf{H}_1+a$ denotes an array $(\mathbf{H}_1(j,k)+a)$, where $*+a=*$.} (from left to right, from top to bottom) for each $i\in[t]$, where $v\in[0:F_1'L-1]$ and $K_2^t(K_2-L)$ is the number of different integers in $\mathbf{H}_{i}$.



For each  $v\in[0:F_1'L-1]$, since $(\mathbf{H}_1+vK_2^t(K_2-L),   \ldots, \mathbf{H}_t+vK_2^t(K_2-L))$ constitutes a Partition PDA, this integer-filling scheme  satisfies  Condition C3 of Definition \ref{def-PDA}.

\begin{remark}\rm
\label{remark-Q_v-Q}
From Fig.~\ref{fig-Q} and Section~\ref{subsect-packet-user}, the filling rule for Type I entries also can be seen as follows:  for each row of $\mathbf{Q}_{\text{v}}$ (user-delivery array in vertical 1D MACC problem), we replace each of the stars in $i^{\text{th}}$ group by  $\mathbf{H}_i$ (user-delivery array in horizontal 1D MACC problem), since $\mathbf{Q}_{\text{v}}$ has the same star entries as $\mathbf{U}_{\text{v}}$, and $\mathbf{H}_i$ has the same star entries as $\mathbf{B}_i$;  then increment the integers in $\mathbf{H}_{i}$ by the occurrence orders.
\hfill $\square$
\end{remark}

From Remark \ref{remark-2}, there are $tL(K_2-L)\times F_1'F_2$ non-star entries in Type I of $\mathbf{Q}$. From Construction~\ref{con-general-1}, each integer in Type I occurs $t$ times. So there are
$$S_{\text{I}}=\frac{tL(K_2-L)\times F_1'F_2}{t}$$
different integers filled in Type I of $\mathbf{Q}$, i.e., the server sends $S_{\text{I}}$ Type I multicast messages of packets in the first round.

\begin{example}
\label{exam-delivery-array-I}
\rm
Let us return to the example in Section~\ref{sub:illustrated example TH3} with
$(K_1,K_2,L,M,N)=(5,3,2,2,15)$, which is based on the $2$-$(6,9,6,9)$ Partition PDA $ \mathbf{H}=(\mathbf{H}_1,\mathbf{H}_2)$ in Example~\ref{ex-par}. 
 $\mathbf{H}_1$ and $\mathbf{H}_2$ are illustrated in Fig.\ref{fig-par-inte}.
\begin{table*}
\center
\caption{Fill Type I of user-delivery array $\mathbf{Q}$  with $K_1=5$, $K_2=3$, $L=2$ and $t=2$, where $U_{i,[3]}=\{U_{i,1},U_{i,2},U_{i,3}\}$ represents the set of users $U_{i,1}$, $U_{i,2}$, and $U_{i,3}$.  \label{tab-delivery-1-2-users-I}}
\begin{tabular}{|c|c|c|c|c|}
\hline
$U_{1,[3]}$& $U_{2,[3]}$& $U_{3,[3]}$&$U_{4,[3]}$& $U_{5,[3]}$\\ \hline
$\mathbf{H}_{1}$&$\mathbf{H}_{1}+9$&$\mathbf{H}_{2}$&
$\mathbf{H}_{2}+9$&\\ \hline
$\mathbf{H}_{1}+18$&$\mathbf{H}_{1}+27$&&$\mathbf{H}_{2}+18$&$\mathbf{H}_{2}+27$\\ \hline
&$\mathbf{H}_{1}+36$&$\mathbf{H}_{1}+45$&$\mathbf{H}_{2}+36$&
$\mathbf{H}_{2}+45$\\ \hline
\end{tabular}\\[0.3cm]
\end{table*}
Then the following 2D MACC user-delivery array $\mathbf{Q}$ can be obtained in Table \ref{tab-delivery-1-2-users-I}.

More precisely, 
since the number of rows in $\mathbf{U}_{\text{v}}$  is $F_1'=3$,  and in each row of $\mathbf{U}_{\text{v}}$, there are $L=2$ stars replaced by $\mathbf{B}_1$, thus
there are $6$ arrays $\mathbf{B}_1$ in $\mathbf{U}$.
 In the first   row, we fill the null entries in Type I by replacing the first $\mathbf{B}_1$ of $\mathbf{U}$ by $\mathbf{H}_1+(1-1)\times9=\mathbf{H}_1$, and replacing the second $\mathbf{B}_1$ of $\mathbf{U}$ by $\mathbf{H}_1+(2-1)\times9=\mathbf{H}_1+9$.
In the second  row, we  replace the first $\mathbf{B}_1$ of $\mathbf{U}$ by $\mathbf{H}_1+(3-1)\times9=\mathbf{H}_1+18$, and replace the second $\mathbf{B}_1$ of $\mathbf{U}$ by $\mathbf{H}_1+(4-1)\times9=\mathbf{H}_1+27$.
In the third  row, we  replace the first $\mathbf{B}_1$ of $\mathbf{U}$ by $\mathbf{H}_1+(5-1)\times9=\mathbf{H}_1+36$, and replace the second $\mathbf{B}_1$ of $\mathbf{U}$ by $\mathbf{H}_1+(6-1)\times9=\mathbf{H}_1+45$.
In the similar way,  we replace the $6$ arrays $\mathbf{B}_2$ in $\mathbf{U}$.
 As a result, we used $9\times6=54$ integers in Type I which equals $S_{\text{I}}$.
\hfill $\square$
\end{example}

\subsubsection{Step 2. Fill the null entries in Type II}
\
\newline \indent
Next, we fill the outer structure null entries in Type II.
Recall that our hybrid scheme is combined with two 1D MACC problems in vertical and horizontal projections.
In $(K_1,L,M_1,N)$ vertical 1D MACC problem, the row index is denoted by $j\in[F_1']$; and column index is denoted by $k_1\in[K_1]$.
In $(K_2,L,M_2,N)$ horizontal 1D MACC problem, the row index is denoted by $\mathbf{f}\in[K_2]^t$ (same as Partition PDA detailed in Section~\ref{sub:ori model}); and the column index is denoted by $k_2\in[K_2]$.
Thus in the $(K_1,K_2,L,M,N)$ 2D MACC problem, we define the row index of $\mathbf{U}$ and $\mathbf{Q}$ as
\begin{eqnarray}
	\label{eq-row-index-Q}
	(j,{\bf f} ),\ \ \ \text{where} \ j\in{F_1'}, \ {\bf f}\in[K_2]^t,
\end{eqnarray}
and the column index of $\mathbf{U}$ and $\mathbf{Q}$ as
\begin{eqnarray}
	\label{eq-column-index-Q}
	(k_1,k_2),\ \ \ \text{where} \ k_1\in[K_1], \ k_2\in [K_2].
\end{eqnarray}

In addition, the following notations are useful to fill the null entries in Type II.
For any integer $s\in [S'_1]$, we assume that the integer $s$ occurs $g_s$ times in $\mathbf{Q}_{\text{v}}$, i.e., $\mathbf{Q}_{\text{v}}(j_1,k_{1,1})=\mathbf{Q}_{\text{v}}(j_2,k_{1,2})=\ldots=\mathbf{Q}_{\text{v}}(j_{g_s},k_{1,g_s})=s$. 
Without loss of generality, we assume that $k_{1,1}<k_{1,2}<\cdots<k_{1,g_s}$. Notice that, all these entries containing $s$ are distributed in different rows and columns by Condition C3 of Definition~\ref{def-PDA}.
Denote the set of columns in $\mathbf{Q}_{\text{v}}$ containing $s$ by 
\begin{align}
\mathcal{S}_s=\{k_1,k_{1,2},\ldots,k_{1,g_s}\}.
\label{eq-mathcalS_s}	
\end{align}
Focus on the $g_s\times g_s$ sub-array of $\mathbf{Q}_{\text{v}}$ containing $s$, for each $\sigma\in[g_s]$, all the entries in row $j_{\sigma}$ are stars except the entry at column $k_{1,\sigma}$ since Condition C3 in Definition~\ref{def-PDA}. Then we define
\begin{align}
	\mathcal{P}_{s,j_{\sigma}}=\mathcal{S}_s\setminus\{k_{1,\sigma}\} \label{eq-mathcal-P}
\end{align}
to indicate the users served by the multicast message containing $s$ and able to retrieve the $j_{\sigma}^{\text{th}}$ packet.


From Remark \ref{remark-Q_v-Q}, for each $j\in[F_1']$ and each $k_1\in\mathcal{U}_j$, we define a one-to-one mapping
\begin{align}
\varphi_{j}(k_1)=i	
\label{eq-map-k1-i}
\end{align}
to indicate which sub-Partition array $\mathbf{H}_i$ replaces the star entry $\mathbf{Q}_{\text{v}}(j,k_1)$.
For instance, in the last row of Table~\ref{tab-delivery-1-2-users-I} and $k_1=3$, we have $\varphi_{3}(3)=1$ since this entry $\mathbf{H}_1+45$ is obtained by using $\mathbf{H}_1$.
Then for each $\sigma\in[g_s]$, we define
\begin{align}
\label{eq-mathcal-O}
\Lambda_{s,j_{\sigma}}=\{\lambda_1,\lambda_2,\ldots,\lambda_{g_s-1}\}=\{\varphi_{j_{\sigma}}(k_1)\ | \ k_1\in \mathcal{P}_{s,j_{\sigma}}\}
\end{align}and
\begin{align}
\label{eq-mathcal-noO}
\Lambda'_{s,j_{\sigma}}=\{\lambda_1',\lambda_2',\ldots,\lambda_{t-g_s+1}'\}=[t]\setminus\Lambda_{s,j_{\sigma}}, 	
\end{align}
which indicate the subscripts of beneficial sub-Partition PDAs by the multicast message containing ``$s$"  in row $j_{\sigma}$, and the other non-beneficial sub-Partition PDAs, respectively.

Now we are ready to introduce our  filling rule for the entries in Type II. From Remark~\ref{remark-2}, the entry with index $\left((j,{\bf f} ), (k_1,k_2)\right)$  of $\mathbf{U}$ is a Type II null entry if  $k_1 \in \overline{\mathcal{U}}_{j}$, where $\overline{\mathcal{U}}_{j}=[K_1]\setminus\mathcal{U}_{j}$. Thus we only need to design the filling rule for these entries, as
\begin{eqnarray}
	\label{eq-Q}
	\mathbf{Q}\big((j,\mathbf{f}), (k_1,k_2) \big)=(s, \mathbf{e}), \ \ \ \hbox{if}\ \ k_1\in \overline{\mathcal{U}}_{j},
\end{eqnarray}	
where
\begin{eqnarray}
	\label{eq-inte-s}
	s=\mathbf{P}(j,\psi_{j}(k_1)),
\end{eqnarray}
which is consistent with  $\mathbf{Q}_{\text{v}}$ in \eqref{eq-array-user-PDA}, i.e., the user-delivery array in the vertical 1D MACC problem; and
\begin{eqnarray}
	\label{eq-e}
	\mathbf{e} = (f_{\lambda_1},\ldots,f_{\lambda_{h-1}},k_2,f_{\lambda_h},\ldots,f_{\lambda_{g_s-1}},f_{\lambda'_1},\ldots,f_{\lambda'_{t-g_s+1}}) \in[K_2]^{t+1},
\end{eqnarray}
where $h$ satisfies that $\mathcal{S}_s[h]=k_1$.
Here, the subscripts $\lambda_1,\ldots,\lambda_{g_s-1}$ and $\lambda_1',\ldots,\lambda_{t-g_s+1}'$ are given in \eqref{eq-mathcal-O} and \eqref{eq-mathcal-noO}, respectively.


By the above construction, the following lemmas can be obtained whose detailed proofs are given in Appendices~\ref{appendix-typeII-c3} and \ref{appendix-typeII-SII}.
\begin{lemma}\rm
\label{lem-typeII-C3}
In the hybrid scheme, any entry defined by \eqref{eq-Q} in Type II satisfies Condition C3 of Definition \ref{def-PDA}.
\hfill $\square$
\end{lemma}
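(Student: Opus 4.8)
My proof proposal is to split the two coordinates of a Type~II entry: the first coordinate reduces everything to Condition~C3 of the \emph{outer} delivery array $\mathbf{Q}_{\text{v}}$, and the second coordinate is controlled by the tag-star structure of the Partition PDA (Definition~\ref{rem-par}). Suppose $\mathbf{Q}\big((j,\mathbf{f}),(k_1,k_2)\big)=\mathbf{Q}\big((j',\mathbf{f}'),(k_1',k_2')\big)=(s,\mathbf{e})$ are two distinct Type~II entries, so $k_1\in\overline{\mathcal{U}}_j$ and $k_1'\in\overline{\mathcal{U}}_{j'}$. By \eqref{eq-inte-s} the first coordinates give $\mathbf{Q}_{\text{v}}(j,k_1)=\mathbf{Q}_{\text{v}}(j',k_1')=s$. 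Since the outer PDA satisfies Conditions C1--C5, $\mathbf{Q}_{\text{v}}$ satisfies Condition~C3 (Remark~\ref{remark-C5}), which forces $s$ to occur at most once in each row and each column of $\mathbf{Q}_{\text{v}}$. Hence either (i) $j=j'$ and $k_1=k_1'$, or (ii) $j\neq j'$, $k_1\neq k_1'$, and $\mathbf{Q}_{\text{v}}(j,k_1')=\mathbf{Q}_{\text{v}}(j',k_1)=*$.

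First I would rule out case (i). When $j=j'$ and $k_1=k_1'$, the data $s$, $\mathcal{S}_s$, $h$, $\Lambda_{s,j}$, $\Lambda'_{s,j}$ appearing in \eqref{eq-e} are identical for the two entries; coordinate $h$ of $\mathbf{e}$ recovers $k_2$, and since $\{\lambda_1,\ldots,\lambda_{g_s-1}\}\cup\{\lambda_1',\ldots,\lambda_{t-g_s+1}'\}=[t]$ the remaining coordinates recover every entry of $\mathbf{f}=(f_1,\ldots,f_t)$. So the map $(\mathbf{f},k_2)\mapsto\mathbf{e}$ is injective for fixed $(j,k_1)$, and $\mathbf{e}=\mathbf{e}'$ would force $(\mathbf{f},k_2)=(\mathbf{f}',k_2')$, contradicting distinctness. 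Thus only case (ii) can occur.

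In case (ii) it suffices, by the symmetry between the two entries, to prove $\mathbf{Q}\big((j,\mathbf{f}),(k_1',k_2')\big)=*$. Since $\mathbf{Q}_{\text{v}}(j,k_1')=*$ we have $k_1'\in\mathcal{U}_j$, so by Step~1 (Section~\ref{subsubsect-Type I}) together with Remark~\ref{remark-Q_v-Q} this entry lies in the block that replaces the star $\mathbf{Q}_{\text{v}}(j,k_1')$ by $\mathbf{H}_i$ (up to an additive constant), where $i=\varphi_j(k_1')$; explicitly it equals $\mathbf{H}_i(\mathbf{f},k_2')$ up to that constant, which is the tag-star (hence $*$) precisely when $k_2'=f_i$. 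So the whole claim reduces to $k_2'=f_{\varphi_j(k_1')}$. Both $k_1,k_1'$ lie in $\mathcal{S}_s$; let $h,p$ be their indices in $\mathcal{S}_s$, with $h\neq p$. In $\mathbf{e}$ (formed from the first entry, whose current column is $k_1$) the symbol $k_2$ sits in slot $h$, and $f_{\varphi_j(k_1')}=f_{\lambda_q}$ sits in the slot obtained from the index $q$ of $k_1'$ in $\mathcal{P}_{s,j}=\mathcal{S}_s\setminus\{k_1\}$ after shifting past the $k_2$-slot; a short index computation shows this slot is $p$ in both subcases $p<h$ and $p>h$. In $\mathbf{e}'$ (formed from the second entry, whose current column is $k_1'$) the symbol $k_2'$ sits in slot $p$. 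Comparing slot $p$ of $\mathbf{e}=\mathbf{e}'$ gives $k_2'=f_{\varphi_j(k_1')}$; interchanging the roles of the two entries likewise gives $k_2=f_{\varphi_{j'}(k_1)}$, i.e.\ $\mathbf{Q}\big((j',\mathbf{f}'),(k_1,k_2)\big)=*$.

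The step I expect to be the main obstacle is precisely the slot bookkeeping in the last paragraph: one must verify that the listing order of $\Lambda_{s,j}$ in \eqref{eq-mathcal-O} is aligned with the column order of $\mathcal{P}_{s,j}$, so that the position of $f_{\varphi_j(k_1')}$ in $\mathbf{e}$ genuinely coincides with the position of $k_2'$ in $\mathbf{e}'$; the ``non-beneficial'' coordinates indexed by $\Lambda'_{s,j}$ sit at the tail of $\mathbf{e}$ and never interact with this matching, so they may be ignored. Everything else is a direct invocation of already-established facts: Condition~C3 of $\mathbf{Q}_{\text{v}}$ (via Remark~\ref{remark-C5}), the substitution rule of Step~1, and the tag-star definition (Definition~\ref{rem-par}).
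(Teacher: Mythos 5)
Your proposal is correct and follows essentially the same route as the paper's proof: reduce the first coordinate to Condition~C3 of the outer array $\mathbf{Q}_{\text{v}}$ (via Condition~C5 of $\mathbf{P}$), then use the slot bookkeeping in $\mathbf{e}=\mathbf{e}'$ together with the tag-star property of $\mathbf{H}_i$, $\mathbf{H}_{i'}$ to show the two cross entries are stars, exactly as in Appendix~A. Your explicit treatment of the degenerate case $j=j'$, $k_1=k_1'$ (via injectivity of $(\mathbf{f},k_2)\mapsto\mathbf{e}$) and your flagging of the needed alignment between the listing order of $\Lambda_{s,j}$ and the column order of $\mathcal{P}_{s,j}$ are points the paper handles only implicitly, so no gap relative to the paper's argument.
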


\begin{lemma}\rm
\label{lem-typeII-SII}
In the hybrid scheme, Type II entries defined by $(s,\mathbf{e})$ satisfying each integer $s\in[S_1']$ occurs at least once; and each vector $\mathbf{e}\in[K_2]^{t+1}$ occurs at least once.
\hfill $\square$
\end{lemma}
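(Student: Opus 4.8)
The plan is to treat the two assertions separately, using that every Type~II label $(s,\mathbf{e})$ is produced by \eqref{eq-inte-s}--\eqref{eq-e} from a row index $(j,\mathbf{f})$ and a column index $(k_1,k_2)$ with $k_1\in\overline{\mathcal{U}}_{j}$, where $j$ ranges over $[F_1']$ and $\mathbf{f}$ over all of $[K_2]^t$ by \eqref{eq-row-index-Q}. Throughout I would assume $S_1'\geq 1$, since otherwise $\mathbf{P}$ is the all-star array, Type~II of $\mathbf{Q}$ is empty, and there is nothing to prove.

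\emph{Each $s\in[S_1']$ occurs.} Fix $s\in[S_1']$. By Condition~C2, $s$ appears in $\mathbf{P}$, say $\mathbf{P}(j,k_1')=s$ with $k_1'\in\overline{\mathcal{A}_j}$. Since $\psi_j$ is a bijection of $\overline{\mathcal{U}}_{j}$ onto $\overline{\mathcal{A}_j}$, setting $k_1:=\psi_j^{-1}(k_1')$ gives $k_1\in\overline{\mathcal{U}}_{j}$ and, by \eqref{eq-inte-s}, $\mathbf{Q}_{\text{v}}(j,k_1)=s$. In the $2$D extension the null entry of $\mathbf{U}_{\text{v}}$ at position $(j,k_1)$ becomes an $F_2\times K_2$ null block inside $\mathbf{U}$, every cell of which is filled by \eqref{eq-Q} with first coordinate $s$; since $F_2K_2\geq 1$, the integer $s$ indeed occurs in Type~II of $\mathbf{Q}$.

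\emph{Each $\mathbf{e}\in[K_2]^{t+1}$ occurs.} I would fix any $s\in[S_1']$ and any position $(j,k_1)$ of $\mathbf{Q}_{\text{v}}$ carrying $s$, so $k_1\in\overline{\mathcal{U}}_{j}$; this determines the index $h$ with $\mathcal{S}_s[h]=k_1$ and, through \eqref{eq-mathcal-P}--\eqref{eq-mathcal-noO}, the ordered lists $\lambda_1,\ldots,\lambda_{g_s-1}$ and $\lambda_1',\ldots,\lambda_{t-g_s+1}'$, whose underlying sets partition $[t]$. Reading \eqref{eq-e} for this fixed triple $(s,j,k_1)$: as $(\mathbf{f},k_2)$ runs over $[K_2]^t\times[K_2]$, the label $\mathbf{e}$ is obtained by writing $k_2$ into coordinate $h$ and the reordered tuple $(f_{\lambda_1},\ldots,f_{\lambda_{g_s-1}},f_{\lambda_1'},\ldots,f_{\lambda_{t-g_s+1}'})$ into the remaining $t$ coordinates; since the $\lambda$'s and $\lambda'$'s list every element of $[t]$ exactly once, this map is a bijection of $[K_2]^t\times[K_2]$ onto $[K_2]^{t+1}$. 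Concretely, given a target $\mathbf{e}=(e_1,\ldots,e_{t+1})$, I would put $k_2=e_h$ and define $\mathbf{f}$ by $f_{\lambda_m}=e_m$ for $m\in[h-1]$, $f_{\lambda_m}=e_{m+1}$ for $m\in\{h,\ldots,g_s-1\}$, and $f_{\lambda_r'}=e_{g_s+r}$ for $r\in[t-g_s+1]$; this assigns each of $f_1,\ldots,f_t$ exactly one value, so $\mathbf{f}\in[K_2]^t$ and $\mathbf{Q}\bigl((j,\mathbf{f}),(k_1,k_2)\bigr)=(s,\mathbf{e})$, a legitimate Type~II entry.

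\emph{Main obstacle.} The delicate point is the bookkeeping in \eqref{eq-e}: one must see that $\Lambda_{s,j}$ and $\Lambda'_{s,j}$ partition $[t]$ --- equivalently that $|\Lambda_{s,j}|=g_s-1$, i.e.\ that $\varphi_j$ is injective on $\mathcal{P}_{s,j}$ --- so that \eqref{eq-e} really is a permutation of the $t$ coordinates of $\mathbf{f}$ with a single inserted slot. This is the property recorded in \eqref{eq-mathcal-O}--\eqref{eq-mathcal-noO}, and it ultimately rests on Condition~C5 for $\mathbf{P}$; I would invoke it rather than reprove it here. Once it is granted, the surjectivity onto $[K_2]^{t+1}$ above is immediate, and the integer claim is routine.
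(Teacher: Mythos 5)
Your proposal is correct and takes essentially the same route as the paper's proof in Appendix~\ref{appendix-typeII-SII}: Condition C2 together with the bijectivity of $\psi_{j}$ gives the occurrence of every $s\in[S_1']$, and every $\mathbf{e}\in[K_2]^{t+1}$ is reached by inverting the map in \eqref{eq-e}, invoking (exactly as the paper does) that the subscripts from \eqref{eq-mathcal-O} and \eqref{eq-mathcal-noO} partition $[t]$. The only difference is presentational: you package the second part as an explicit bijection of $[K_2]^t\times[K_2]$ onto $[K_2]^{t+1}$ for a fixed occurrence of $s$, whereas the paper directly constructs the row index $(j,\mathbf{f})$ and column index $(k_1,k_2)$ for a given $\mathbf{e}$.
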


From Lemma~\ref{lem-typeII-SII}, there are
$$S_{\text{II}}=S_1'\times K_2^{t+1}=S_1'F_2K_2$$
different vectors filled in Type II of $\mathbf{Q}$, i.e., the server sends $S_{\text{II}}$ Type II multicast messages of packets in the first round.

In conclusion, the server totally sends
$S_{\text{I}}+ S_{\text{II}}$ multicast messages in the first round during the delivery phase. Furthermore, the hybrid scheme contains $K_1$ rounds. With $F=F_1' F_2\times K_1$, we can compute that the transmission load of the hybrid scheme  is
\begin{align}
\label{eq-proof-R3}	
R_{3}&= \frac{K_1\left(S_{\text{I}}+ S_{\text{II}}\right) }{F} = \frac{tL(K_2-L)}{t}+\frac{S_1'K_2}{F_1'} = \frac{K_2tL-tL^2}{t}+\frac{K_2S_1'}{F_1'},
\end{align}
which coincides with Theorem~\ref{th-General-regular}.

\begin{example}
\label{exam-delivery-array-II}
\rm
We return to the Example \ref{exam-delivery-array-I}.
From~\eqref{eq-inte-s} and Example~\ref{ex-1D-CWLZC}, for the integer $s=1$, we have
\begin{align*}
 \psi_{1}(5)=3, \ \ \ \ \  \psi_{2}(3)=2, \ \ \ \ \   \psi_{3}(1)=1.
\end{align*}
Thus $\mathcal{S}_1=\{1,3,5\}$.
From \eqref{eq-mathcal-P}, we have
\begin{align*}
	\mathcal{P}_{1,1}=\{1,3\}, \ \ \ \ \  \mathcal{P}_{1,2}=\{1,5\}, \ \ \ \ \   \mathcal{P}_{1,3}=\{3,5\}.
\end{align*}
Then combining with Example \ref{exam-delivery-array-I}, \eqref{eq-mathcal-O} and \eqref{eq-mathcal-noO}, we have
\begin{align*}
	\lambda_{1,1}=\{1,2\}, \ \ \ \ \  \lambda_{1,2}=\{1,2\}, \ \ \ \ \   \lambda_{1,3}=\{1,2\}
\end{align*}
and $\lambda'_{1,1}=\lambda'_{1,2}=\lambda'_{1,3}=\emptyset$.

When $j=1$, $\mathbf{f}=(3,2)$, $k_1=5$, and $k_2=1$, we consider the entry in the row indexed by $\big(1,(3,2)\big)$ and the column indexed by $(5,1)$. Since $k_1=5\in \overline{\mathcal{U}}_{1}=\{5\}$, from \eqref{eq-Q}, \eqref{eq-inte-s} and \eqref{eq-e}, we have
$$\mathbf{Q}\big((1,(3,2)),(5,1)\big)=\big(1,(3,2,1)\big).$$
Notice that, $k_1=5$ is the third element of set $\mathcal{S}_1=\{1,3,5\}$, thus the vector $\mathbf{e}=(3,2,1)$ is generated by appending $k_2=1$ into the third coordinate of the vector $(f_{o_1},f_{o_2})=(f_1,f_2)=(3,2)$. Similarly, we have
\begin{align*}
&\mathbf{Q}\big((2,(3,1)),(3,2)\big)=\big(1,(3,2,1)\big),\\
&\mathbf{Q}\big((3,(2,1)),(1,3)\big)=\big(1,(3,2,1)\big).
\end{align*}
The sub-array, which contains $\big(1,(3,2,1)\big)$  in Table \ref{tab-delivery-1-2-users-II},  is exactly the sub-array $\mathbf{Q}_{(3,2,1)}$ in Fig.~\ref{fig-Q-sub}\footnote{For convenience, we omitted the integer $s=1$ in Fig.~\ref{fig-Q-sub}.}, and satisfies    Condition C3 of Definition \ref{def-PDA}.
\begin{table}
\center
\caption{The sub-array containing $\big(1,(3,2,1)\big)$ \label{tab-delivery-1-2-users-II}}
\begin{tabular}{|c|c|c|c|}
  \hline
 \backslashbox{Row index}{Column index} &$(1,3)$  & $(3,2)$& $(5,1)$\\ \hline
$(1,(3,2))$                &*                  &*                  &$(1,(3,2,1))$\\ \hline
$(2,(3,1))$                  &*                  &$(1,(3,2,1))$&* \\ \hline
$(3,(2,1))$                  &$(1,(3,2,1))$&*                  &*\\ \hline
\end{tabular}
\end{table}
\hfill $\square$
\end{example}

\section{Conclusion}
\label{sec-conclusion}
 In this paper, we formulated a new 2D MACC system, which is a generalization of the existing 1D MACC system.
 A baseline 2D MACC scheme was first proposed by directly extending 1D MACC schemes to the 2D model via an MDS precoding.  When $K_1$ and $K_2$ are both divisible by $L$, we proposed an improved scheme via a grouping method. For the case where  $K_1\geq K_2>L$, we propose a new transformation approach
 to construct a hybrid 2D MACC scheme by  using two classes of 1D MACC schemes  as outer and inner structures.
On-going works include the derivation on the converse bounds for the 2D MACC model and the extension of the proposed schemes to more general 2D cellular networks,   hierarchical networks, and combination networks.



\appendices

\section{Proof of Lemma~\ref{lem-typeII-C3}}
\label{appendix-typeII-c3}
For any two different entries in Type II,  $\mathbf{Q}\left((j,{\bf f}),(k_1,k_2)\right)=(s,\mathbf{e})$ and $\mathbf{Q}\left((j',{\bf f}'),(k_1',k_2')\right)
=({s'},\mathbf{e}')$,
we have
\begin{align*}
\begin{array}{ll}
	k_1\in \overline{\mathcal{U}}_j, \ \ \ \ \ \ & k_1'\in \overline{\mathcal{U}}_{j'}; \\
	{\bf f}=(f_1,f_2,\ldots,f_t), \ \ \ \ \ \ & {\bf f}'=(f_1',f_2',\ldots,f_t'); \\
	s=\mathbf{P}(j,\psi_{j}(k_1)), \ \ \ \ \ \ \ & {s'}=\mathbf{P}(j',\psi_{j'}(k_1')); \\	
	{\bf e}=(e_1,e_2,\ldots,e_{t+1}), \ \ \ \ \  & {\bf e}'=(e_1',e_2',\ldots,e_{t+1}'),
\end{array}
\end{align*}
from \eqref{eq-Q}, \eqref{eq-inte-s} and \eqref{eq-e}. 
Furthermore, focus on the sub-arrays containing $s$ and $s'$. By simplifying \eqref{eq-mathcalS_s} and \eqref{eq-mathcal-P}, we have
\begin{align*}
	\begin{array}{ll}	
	    \mathcal{S}_s=\{s_1,s_2,\ldots,s_{g_s}\} \ \ \ \ \ \ & \mathcal{S}'_{s'}=\{s_1',s_2',\ldots,s_{g_s'}'\}; \\
		\mathcal{P}_{s,j}=\mathcal{S}_s\setminus\{k_1\},  \ \ \ \ \ \ \ & \mathcal{P}_{s',j'}=\mathcal{S}_{s'}'\setminus\{k_1'\},
	\end{array}
\end{align*}
where $\mathcal{S}_s$ and $\mathcal{S}'_{s'}$ indicate the sets of columns containing integer $s$ and $s'$, respectively; $\mathcal{P}_{s,j}$ and $\mathcal{P}_{s',j'}$ indicate the sets of columns containing stars in row $j$ and row $j'$, respectively.

Assume that
$$\mathbf{Q}\left((j,{\bf f}),(k_1,k_2)\right)=\mathbf{Q}\left((j',{\bf f}'),(k_1',k_2')\right),$$
thus we have 
\begin{align*}
s=s', \ \ \mathcal{S}_s=\mathcal{S}_{s'}',\ \ \text{and} \ \ {\bf e}={\bf e}'.
\end{align*}

It was proved in \cite{CWLZC} that user-delivery array for the outer structure $\mathbf{Q}_{\text{v}}$ satisfies Condition C3 the original PDA $\mathbf{P}$ satisfies Condition C5 in Remark~\ref{remark-C5}.
Assume that $\mathbf{P}$ satisfies Condition C5, thus $\mathbf{P}(j,k)=\mathbf{P}(j',k')=s$. Furthermore, for some integers $i_1,i_2\in[t+1]$, we have $k=\mathcal{A}_{j}\bigcup\{k\}[i_1]$, $k'=\mathcal{A}_{j'}\bigcup\{k'\}[i_2]$ where $\mathcal{A}_{j}$ and $\mathcal{A}_{j'}$ are the sets of columns containing stars in row $j$ and row $j'$ of $\mathbf{P}$; and  $k+(i_1-1)(L-1)\in\mathcal{U}_{j'}$, $k'+(i_2-1)(L-1)\in\mathcal{U}_{j}$ where $\mathcal{U}_{j'}$ and $\mathcal{U}_{j}$ are the sets of columns containing stars in row $j'$ and row $j$ of $\mathbf{Q}_{\text{v}}$.
Then let $k=\psi_j\{k_1\}$ and $k'=\psi_{j'}\{k_1'\}$. From \eqref{eq-array-user-PDA}, we have $\mathbf{Q}_{\text{v}}(j,k_1)=\mathbf{Q}_{\text{v}}(j',k'_1)=s$.
From the construction of the vertical 1D MACC scheme in Section~\ref{subsect-line}, we have $k_1=k+(i_1-1)(L-1)$ and $k_1'=k'+(i_2-1)(L-1)$ for some integers $i_1,i_2\in[t+1]$.
Thus $k_1\in\mathcal{U}_{j'}$ and $k_1'\in\mathcal{U}_{j}$ holds, i.e., $\mathbf{Q}_{\text{v}}(j,k_1')=\mathbf{Q}_{\text{v}}(j',k_1)=*$.
Hence,  $\mathbf{Q}_{\text{v}}$ satisfies Condition C3 in Definition~\ref{def-PDA}.

Now we will prove $\mathbf{Q}$ satisfies Condition C3. Since $\mathbf{Q}_{\text{v}}$ satisfies Condition C3, we have $k_1\neq k_1'$ and $\mathbf{Q}_{\text{v}}(j, k_{1}')=\mathbf{Q}_{\text{v}}(j', k_{1})=*$.
In the construction of the hybrid scheme in 2D MACC model, from Remark~\ref{remark-Q_v-Q}, these stars of $\mathbf{Q}_{\text{v}}$ are replaced based on arrays $\mathbf{H}_i$ and $\mathbf{H}_{i'}$ in $\mathbf{Q}$ respectively, where $i=\varphi_{j}(k_1')$ and $i'=\varphi_{j'}(k_1)$ are defined by \eqref{eq-map-k1-i} (i.e., from $\mathbf{Q}_{\text{v}}$ to $\mathbf{Q}$, the star $\mathbf{Q}_{\text{v}}(j, k_{1}')$ is replaced based on array $\mathbf{H}_i$, and the star $\mathbf{Q}_{\text{v}}(j', k_{1})$ is replaced based on array $\mathbf{H}_{i'}$). 	
To prove $\mathbf{Q}$ satisfies Condition C3,  it is equivalent to prove that $\mathbf{Q}\left((j,{\bf f}),(k_1',k_2')\right)$ and $\mathbf{Q}\left((j',{\bf f}'),(k_1,k_2)\right)$ are stars of $\mathbf{H}_i$ and $\mathbf{H}_{i'}$, respectively.
Without loss of generality, we assume that $k_1'<k_1$. 
From \eqref{eq-mathcal-P}, we have $\mathcal{P}_{s,j'}\bigcup\{k_1'\}=\mathcal{P}_{s,j}\bigcup\{k_1\}=\mathcal{S}_s$.
From \eqref{eq-e}, we have $\mathcal{S}_s[h']=k_1'$ and $\mathcal{S}_s[h]=k_1$ where $h'<h$, since $\mathcal{S}_s$ is sorted in an increasing order and $k_1'<k_1$.
Thus, $\mathcal{P}_{s,j}[h']=\mathcal{S}_s[h']=k_1'$; $\mathcal{P}_{s,j'}[h-1]=\mathcal{S}_s[h]=k_1$ holds.
Since $\mathbf{e}=\mathbf{e}'$, we have $e_{h'}'=e_{h'}$, which leads to $k_2'=f_{\lambda_{h'}}=f_{\varphi_{j}(k_1')}=f_i$; and $e_h=e_h'$, which leads to $k_2=f_{\lambda_{h-1}'}'=f_{\varphi_{j'}(k_1)}'=f_{i'}'$. Hence, $k_2'=f_i$ and $k_2=f_{i'}'$ always hold, such that $\mathbf{H}_i\left({\bf f},k_2'\right)=*$ and	$\mathbf{H}_{i'}\left({\bf f}',k_2\right)=*$ hold from \eqref{Eqn_Gen._Par_PDA}, where these stars are tag-stars in Definition~\ref{rem-par}. 
As a result, $\mathbf{Q}\left((j,{\bf f}),(k_1',k_2')\right)=\mathbf{Q}\left((j',{\bf f}'),(k_1,k_2)\right)=*$ holds, i.e., Condition C3 of Definition~\ref{def-PDA} is satisfied.	
For instance, the sub-array of $\mathbf{Q}$  containing $(s,\mathbf{e})=(1,(3,2,1))$ is shown in Table~\ref{tab-delivery-1-2-users-II}.

\section{Proof of Lemma~\ref{lem-typeII-SII}}
\label{appendix-typeII-SII}
From  the transformation approach detailed in Section~\ref{subsect-line} and \eqref{eq-array-user-PDA}, all the integers $s\in[S_1']$ in the original PDA $\mathbf{P}$ are filled in the user-delivery array $\mathbf{Q}_{\text{v}}$ of the vertical 1D MACC problem, and eventually filled in Type II entries by \eqref{eq-Q} and \eqref{eq-inte-s}. Since $\mathbf{P}$ satisfies Condition C2 of Definition~\ref{def-PDA}, each integer $s\in[S_1']$ occurs at least once in $(s,\mathbf{e})$.

Next, we focus on the vector $\mathbf{e}\in[K_2]^{t+1}$ in \eqref{eq-e}. 
\begin{itemize}
	\item Any  $\mathbf{e}=(e_1,e_2,\ldots,e_{t+1})\in[K_2]^{t+1}$ can be  written as $\mathbf{e}=(e_1,\ldots,e_{g_s}, e_{g_s+1}, \ldots,$ $e_{t+1})$. From \eqref{eq-e}, there exists an $h\in[g_s]$, such that $\mathcal{S}_s[h]=k_1$ where $k_1\in[K_1]$ and $e_h=k_2\in[K_2]$. 
	Thus for any given vector $\mathbf{e}$, there exists a column index $(k_1,k_2)$, where $k_1\in[K_1]$ and $k_2\in[K_2]$, which satisfies $\eqref{eq-column-index-Q}$.
	
	\item We define $\widetilde{\mathbf{f}}$ as the sub-vector of $\mathbf{e}$ by removing the coordinate ${e_h}$. From \eqref{eq-e} we have $\widetilde{\mathbf{f}}=(f_{\lambda_1},\ldots,f_{\lambda_{g_s-1}},f_{\lambda'_1},\ldots,f_{\lambda'_{t-g_s+1}})$. From \eqref{eq-mathcal-P}, \eqref{eq-mathcal-O}, \eqref{eq-mathcal-noO}, all the $t$ subscripts in $\widetilde{\mathbf{f}}$ are different from each other.
	Thus, by adjusting the order of coordinates in $\widetilde{\mathbf{f}}$, we can obtain the vector $\mathbf{f}=(f_1,\ldots,f_t)\in[K_2]^t$. In addition, from the non-star entries filled by integer $s$, we have $j\in[F_1']$. Thus for any given vector $\mathbf{e}$, there exists a row index $(j,\mathbf{f})$, where $j\in[F_1']$ and $\mathbf{f}\in[K_2]^t$, which satisfies \eqref{eq-row-index-Q}.
\end{itemize}
Hence, each vector $\mathbf{e}\in[K_2]^{t+1}$ occurs at least once  in $(s,\mathbf{e})$.

\bibliographystyle{IEEEtran}
\bibliography{grid}

\end{document}